\newif\iffull
\newtheorem*{theorem*}{Theorem}
\newtheorem{lemma}{Lemma}
\newtheorem*{lemma*}{Lemma}
\newtheorem*{remark*}{Remark}
\newtheorem{example}{Example}
\newtheorem{definition}{Definition}
\newcommand{\fullref}[1]{\iffull\cref{#1}\else the full version~\citep{BHL19}\fi}
\newif\ifcomments
\newcommand\sepid{\ensuremath{I}}
\newcommand\sepand{\mathrel{*}}
\newcommand\sepimp{\mathrel{-\mkern-6mu*}}
\newcommand{\bigsep}{\mathop{\Huge \mathlarger{\mathlarger{*}}}}
\newcommand{\psl}[3]{\{ #1 \}\ #2\ \{ #3 \}}
\newcommand{\Skip}{\mathbf{skip}}
\newcommand{\Seq}[2]{{#1} \mathrel{;} {#2}}
\newcommand{\Assn}[2]{\ensuremath{{#1} \leftarrow {#2}}}
\newcommand{\Rand}[2]{{#1} \stackrel{\raisebox{-.25ex}[.25ex]%
{\tiny $\mathdollar$}}{\raisebox{-.2ex}[.2ex]{$\leftarrow$}} {#2}}
\newcommand{\DCond}[3]{\mathbf{if}_D\ #1\ \mathbf{then}\ #2\ \mathbf{else}\ #3}
\newcommand{\DCondt}[2]{\mathbf{if}_D\ #1\ \mathbf{then}\ #2}
\newcommand{\DFor}[4]{\mathbf{for}\ #1 = #2, \dots, #3\ \mathbf{do}\ #4}
\newcommand{\DWhile}[2]{\mathbf{while}\ #1\ \mathbf{do}\ #2}
\newcommand{\RCond}[3]{\mathbf{if}_R\ #1\ \mathbf{then}\ #2\ \mathbf{else}\ #3}
\newcommand{\RCondt}[2]{\mathbf{if}_R\ #1\ \mathbf{then}\ #2}
\newcommand{\lcp}{\mathsf{lcp}}
\newcommand{\iread}[1]{\ensuremath{\mathsf{read}(#1)}}
\newcommand{\iwrite}[2]{\ensuremath{\mathsf{write}(#1,#2)}}
\newcommand{\denot}[1]{\ensuremath{\llbracket #1 \rrbracket}}
\newcommand{\dom}{\text{dom}}
\newcommand{\Val}{\ensuremath{\mathbf{Val}}}
\newcommand{\DVar}{\ensuremath{\mathcal{DV}}}
\newcommand{\RVar}{\ensuremath{\mathcal{RV}}}
\newcommand{\DMem}{\ensuremath{\mathbf{DetM}}}
\newcommand{\RMem}{\ensuremath{\mathbf{RanM}}}
\newcommand{\DExp}{\ensuremath{\mathcal{DE}}}
\newcommand{\RExp}{\ensuremath{\mathcal{RE}}}
\newcommand{\Com}{\ensuremath{\mathcal{C}}}
\newcommand{\RCom}{\ensuremath{\mathcal{RC}}}
\newcommand{\dbind}{\text{bind}}
\newcommand{\dunit}{\text{unit}}
\newcommand{\dcond}[2]{\ensuremath{{#1} \mid {#2}}}
\newcommand{\dconv}[3]{\ensuremath{{#2} \oplus_{#1} {#3}}}
\newcommand{\true}{\text{true}}
\newcommand{\false}{\text{false}}
\newcommand{\ktt}{\ensuremath{\mathit{tt}}}
\newcommand{\kff}{\ensuremath{\mathit{ff}}}
\newcommand{\APred}{\ensuremath{\mathcal{AP}}}
\newcommand{\CM}{\ensuremath{\text{CM}}}
\newcommand{\SP}{\ensuremath{\text{SP}}}
\newcommand{\FV}{\ensuremath{\mathit{FV}}}
\newcommand{\MV}{\ensuremath{\mathit{MV}}}
\newcommand{\RV}{\ensuremath{\mathit{RV}}}
\newcommand{\WV}{\ensuremath{\mathit{WV}}}
\newcommand{\supp}{\ensuremath{\text{supp}}}
\newcommand{\Dist}{\ensuremath{\mathbf{D}}}
\newcommand{\Unif}{\ensuremath{\mathbf{U}}}
\newcommand{\PWHILE}{\textsc{pWhile}}
\newcommand{\ELLORA}{\textsc{Ellora}}
\newcommand{\QSL}{\textsc{QSL}}
\newcommand{\PPDL}{\textsc{PPDL}}
\newcommand{\PGCL}{\textsc{pGCL}}
\newcommand{\SYSTEM}{\textsc{PSL}}
\newcommand{\rname}[1]{\textsc{#1}}
\let\origthelstnumber\thelstnumber%
\newcommand*\Suppressnumber{%
  \lst@AddToHook{OnNewLine}{%
    \let\thelstnumber\relax%
     \advance\c@lstnumber-\@ne\relax%
    }%
}
\newcommand*\Reactivatenumber{%
  \lst@AddToHook{OnNewLine}{%
   \let\thelstnumber\origthelstnumber%
   \advance\c@lstnumber\@ne\relax}%
}
\begin{document}

\title{A Probabilistic Separation Logic}


\author{Gilles Barthe}
\affiliation{%
 \institution{MPI for Security and Privacy, Germany and IMDEA Software
 Institute, Spain}}

\author{Justin Hsu}
\affiliation{%
 \institution{University of Wisconsin--Madison, USA}}

\author{Kevin Liao}
\affiliation{%
 \institution{MPI for Security and Privacy, Germany and University of Illinois
 Urbana-Champaign, USA}}


\begin{abstract}
  \emph{Probabilistic independence} is a useful concept for describing the
  result of random sampling---a basic operation in all probabilistic
  languages---and for reasoning about groups of random variables.  Nevertheless,
  existing verification methods handle independence poorly, if at all. We
  propose a probabilistic separation logic \SYSTEM, where separation models
  probabilistic independence. We first give a new, probabilistic model of the
  logic of bunched implications (BI). We then build a program logic based on
  these assertions, and prove soundness of the proof system. We demonstrate our
  logic by verifying information-theoretic security of cryptographic
  constructions for several well-known tasks, including private information
  retrieval, oblivious transfer, secure multi-party addition, and simple
  oblivious RAM. Our proofs reason purely in terms of high-level properties,
  like independence and uniformity.
\end{abstract}


\begin{CCSXML}
<ccs2012>
<concept>
<concept_id>10002978.10002986.10002990</concept_id>
<concept_desc>Security and privacy~Logic and verification</concept_desc>
<concept_significance>500</concept_significance>
</concept>
<concept>
<concept_id>10003752.10003790.10011742</concept_id>
<concept_desc>Theory of computation~Separation logic</concept_desc>
<concept_significance>500</concept_significance>
</concept>
</ccs2012>
\end{CCSXML}

\ccsdesc[500]{Security and privacy~Logic and verification}
\ccsdesc[500]{Theory of computation~Separation logic}

\keywords{probabilistic independence, separation logic, verified cryptography}

\maketitle

\section{Introduction}%
\label{sec:intro}


Probabilistic programs have important applications in many domains, including
information security and machine learning. As the impact of these areas
continues to grow, probabilistic programming languages (PPLs) are receiving
renewed attention from formal verification. While the mathematical semantics of
PPLs has been well-studied, starting from
\citet{Kozen81,DBLP:journals/tcs/Saheb-Djahromi80} and continuing up to
today~\citep{DBLP:journals/pacmpl/EhrhardPT18,DBLP:journals/pacmpl/VakarKS19},
deductive program verification for PPLs remains challenging. Establishing simple
properties can involve tedious arguments, and scaling formal proofs up to verify
target properties of randomized algorithms is often difficult.

\subsection{Probabilistic Independence}
A basic property that is poorly handled by existing verification techniques is
\emph{independence}. Roughly speaking, two random variables are
probabilistically independent if they are uncorrelated: information about one
quantity yields no information about the other. In probabilistic programs,
independence usually arises when variables are derived from \emph{separate
randomness}---e.g., from the results of two different coin flips---but
independence can also hold when variables share randomness in just the right
way.

Although it is usually not the target property of interest, probabilistic
independence often serves as an intermediate assertion in pen-and-paper proofs
of randomized algorithms. From a verification perspective, independence is
useful for several reasons.

\paragraph*{Independence simplifies reasoning about groups of random variables.}
Probabilistic programs often manipulate multiple random variables. If a group of
random variables are independent, then their joint distribution is precisely
described by the distribution of each variable in isolation. As a result, formal
reasoning can focus on one variable at a time, without losing information.

\paragraph*{Independence characterizes the result of random sampling.}
All PPLs have built-in constructs to draw random samples from primitive
distributions (e.g., drawing a random boolean from a coin-flip distribution).
These basic operations produce a ``fresh'' random quantity that is
\emph{independent} from the rest of the program state, at least when the
primitive distribution does not depend on the state.

\paragraph*{Independence is preserved under local operations.}
Like standard programs, probabilistic programs typically manipulate only a few
variables at a time. To ease formal reasoning, properties about unmodified
variables should be preserved as much as possible. Independence is preserved
under local modifications: if $x$ and $y$ are independent and $x$ is updated to
$x' = x + 1$, then $x'$ and $y$ remain independent. In this way, probabilistic
independence seemingly flows through a program, continuing to hold far beyond
the original sampling instructions.

\paragraph*{Independence is compatible with conditioning.}
Probabilistic programs can have randomized control flow, for example branching
on a randomized boolean. Semantically, this kind of branch is modeled by
\emph{conditioning}, an operation that transforms an input distribution into two
\emph{conditional distributions}, one where the guard is true and one where the
guard is false.

While conditioning is well-understood mathematically, it poses problems for
formal reasoning. Conditioning on a variable $x$---say, when branching on $x >
0$---can alter the distribution over other variables. If $x$ is independent of
$y$, however, conditioning on $x$ will have no effect on the distribution over
$y$. As a result, properties of variables that are \emph{syntactically} separate
from the variable $x$ are \emph{preserved} when conditioning on the guard $x >
0$, a highly useful reasoning principle.

\subsection{Example Applications of Independence}

To ground our investigation in applications, we focus on security properties
from cryptography. We first encode target cryptographic protocols as
probabilistic programs of type $\mathcal{A} \times \mathcal{A} \rightarrow
\mathcal{O} \times \mathcal{A}$, where the first input represents the secret
input, the second input represents the public input, and the first and second
output represents the observer's view and computation output, respectively. In
many cases the two outputs coincide, but this need not be the case in general. 

Then, we establish security properties by proving properties of these programs
in our logic. Defining precisely what it means for a construction to be secure
is surprisingly subtle; cryptographers have proposed many definitions capturing
different assumptions and guarantees. Baseline, information-theoretic security
of many schemes, including private information
retrieval~\citep{chor1995private}, oblivious
transfer~\citep{rivest1999unconditionally}, multi-party
computation~\citep{cramer2015secure}, and oblivious RAM~\citep{iacr/ChungP13a},
can be stated in terms of the following definitions.

\paragraph*{Uniformity.}
A natural way to define security is to require that the observer's view is the
same, no matter what the private input is; this is a probabilistic form of
non-interference. For instance, it suffices to show that the observer's view is
always \emph{uniformly distributed} over a fixed set: no matter what the private
inputs are, the observer's view is the same.

\paragraph*{Input Independence.}
Another way to define security is to model the secret input as drawn from some
distribution, and then argue that the distribution of the observer's view is
\emph{probabilistically independent} of the secret input. This formulation
captures security through an intuitive reading of independence: the observer's
view reveals no information about the secret input. Though this definition looks
quite different from probabilistic non-interference, the two definitions are
equivalent in many settings. Their proofs, however, may be quite different.

\subsection{Contributions and Plan of the Paper}

After introducing mathematical preliminaries in \cref{sec:prelim}, we begin
working towards the main goal of this paper: a probabilistic program logic where
independence is the central concept. Our logic is a probabilistic variant of
separation logic, a highly successful technique for reasoning about
heap-manipulating programs~\citep{OhRY01,IOh01}. To model sharing and
separation, separation logic uses assertions from the logic of bunched
implications (BI), a substructural logic. For instance, the \emph{separating
conjunction} models separation of heaps: $\phi \sepand \psi$ states that the
heap can be split into two \emph{disjoint} parts satisfying $\phi$ and $\psi$,
respectively.

While separation logic was originally designed for heaps, separation is a useful
concept in many verification settings. A notable line of work extends separation
logic to the concurrent setting, where separation models exclusive ownership of
resources~\citep{DBLP:journals/tcs/OHearn07,DBLP:journals/tcs/Brookes07}.  More
generally, the \emph{resource semantics} of BI~\citep{DBLP:journals/tcs/PymOY04}
gives a powerful way to generalize BI to new notions of separation.

Inspired by this perspective, our \emph{first contribution} is a new
interpretation of BI where the separating conjunction models probabilistic
independence. Roughly speaking, $\phi \sepand \psi$ holds in a distribution
$\mu$ over program memories if $\mu$ can be factored into two distributions
$\mu_1$ and $\mu_2$ satisfying $\phi$ and $\psi$, respectively. Splitting a
distribution amounts to finding two disjoint sets of program variables $X$ and
$Y$ such that every distribution in the support of $\mu$ is defined precisely on
$X \cup Y$, with the factors $\mu_1$ and $\mu_2$ obtained by projecting $\mu$
along $X$ and $Y$ respectively. This intuitive interpretation gives rise to a
probabilistic model of BI. Our model can smoothly incorporate useful primitive
assertions about distributions, including probabilistic equality and uniformity.
We present our model in \cref{sec:pbi}.

Leveraging this probabilistic version of BI as an assertion logic, our
\emph{second contribution} is a program logic \SYSTEM{} for a simple
probabilistic programming language, similar to \PWHILE\@. Our logic bears a
strong resemblance to separation logic: there are proof rules for local and
global reasoning, there is a version of the Frame rule, and whereas separation
logic distinguishes between \emph{store} and \emph{heap}, our logic
distinguishes between \emph{deterministic} and \emph{probabilistic} variables.
However, there are also notable differences in the probabilistic setting. We
present the proof system of \SYSTEM{} and prove soundness in \cref{sec:psl}.

As our \emph{third contribution}, we demonstrate our program logic by
formalizing security of several well-known constructions from cryptography,
including a simple oblivious RAM, a private information retrieval algorithm, a
simple three-party computation algorithm for addition, and an oblivious transfer
algorithm. We prove two different forms of information-theoretic security:
uniformity of outputs (which implies probabilistic non-interference), and input
independence. We present these examples in \cref{sec:examples}.

We survey related work in \cref{sec:rw} and discuss potential future directions
in \cref{sec:conclusion}.

\section{Preliminaries}%
\label{sec:prelim}

\subsection{Probabilities and Distributions}

A \emph{(discrete) probability distribution} over a countable set $A$ is a
function $\mu : A \to [0, 1]$ such that the total weight is one: $\sum_{a \in A}
\mu(a) = 1$; we write $\Dist(A)$ for the set of all distributions over $A$.
Intuitively, $\mu(a)$ represents the \emph{probability} of drawing $a$ from the
distribution $\mu$. Likewise, the probability of drawing some element in $S
\subseteq A$ is $\mu(S) \triangleq \sum_{a \in S} \mu(a)$. The \emph{support} of
a distribution is the set of elements with non-zero probability: $\supp(\mu) =
\{ a \in A \mid \mu(a) > 0 \}$. 

We will use two standard constructions on probability distributions. First, the
\emph{distribution unit} $\dunit : A \to \Dist(A)$ associates each element $a
\in A$ with the \emph{Dirac distribution} $\delta_a$ centered at $a$. This
distribution is simply defined as $\delta_a(x) \triangleq 1$ if $x = a$ and
$\delta_a(x) \triangleq 0$ otherwise; intuitively, the Dirac distribution
deterministically yields $a$. Second, the \emph{distribution bind}
$\dbind : \Dist(A) \to (A \to \Dist(B)) \to \Dist(B)$ is defined by:
\[
  \dbind(\mu, f)(b) \triangleq \sum_{a \in A} \mu(a) \cdot f(a)(b)
\]
Intuitively, $\dbind$ sequences a distribution with a continuation. Together,
$\dunit$ and $\dbind$ make $\Dist$ a monad~\citep{10.1007/BFb0092872}; these
operations are commonly used to model randomized programs.

Given our focus on independence, we will be particularly interested in
distributions over products and products of distributions. The distribution
\emph{product} $\otimes : \Dist(A) \times \Dist(B) \to \Dist(A \times B)$ is
defined by:
\[
  (\mu_A \otimes \mu_B)(a, b) \triangleq \mu_A(a) \cdot \mu_B(b) .
\]
We can extract component distributions out of any distribution over a product
using the projections $\pi_1 : \Dist(A \times B) \to \Dist(A)$ and $\pi_2 :
\Dist(A \times B) \to \Dist(B)$:
\[
  \pi_1(\mu)(a) \triangleq \sum_{b \in B} \mu(a, b)
  \quad\text{and}\quad
  \pi_2(\mu)(b) \triangleq \sum_{a \in A} \mu(a, b) .
\]
We call $\mu \in \Dist(A \times B)$ a \emph{product distribution} if it can be
factored as $\mu = \pi_1(\mu) \otimes \pi_2(\mu)$; in this case, we say that the
components of $\mu$ are \emph{(probabilistically) independent}.

Finally, will need conditioning and convex combination operations on
distributions to model control flow splits and merges, respectively. Let $S
\subseteq A$ be any event. If $S$ has non-zero probability under $\mu \in
\Dist(A)$, then the \emph{conditional distribution} $\dcond{\mu}{S} \in
\Dist(A)$ is defined as:
\[
  (\dcond{\mu}{S})(E) \triangleq \frac{\mu(S \cap E)}{\mu(S)} .
\]
Intuitively, the conditional distribution represents the relative probabilities
of elements restricted to $S$. Conditioning is not defined when $\mu(S) = 0$.

To join output distributions from two branches, we define the convex combination
of distributions. Let $\rho \in [0, 1]$ and let $\mu_1, \mu_2 \in \Dist(A)$. The
\emph{convex combination} $\dconv{\rho}{\mu_1}{\mu_2} \in \Dist(A)$ is defined
as:
\[
  (\dconv{\rho}{\mu_1}{\mu_2})(S) \triangleq \rho \cdot \mu_1(S) + (1 - \rho) \cdot \mu_2(S) .
\]
We define $\dconv{0}{\mu_1}{\mu_2} \triangleq \mu_2$ and
$\dconv{1}{\mu_1}{\mu_2} \triangleq \mu_1$, even when $\mu_1$ or $\mu_2$ may be
undefined. Conditioning and taking convex combination yields the original
distribution: $\mu =
\dconv{\mu(S)}{(\dcond{\mu}{S})}{(\dcond{\mu}{\overline{S}})}$.

\subsection{Probabilistic Memories}

Distributions over program memories are naturally modeled by distributions over
products. We fix a countable set $\RVar$ of \emph{random variables} and a
countable set $\Val$ of values. For any subset of variables $S \subseteq \RVar$,
we let $\RMem[S] \triangleq S \to \Val$ be the set of memories with domain $S$;
we write $\RMem \triangleq \RMem[\RVar]$.  When $S$ is empty, there is precisely
one map $0 : \emptyset \to \Val$ and so $\Dist(\RMem[\emptyset])$ contains just
the Dirac distribution $\delta_0$. Given a distribution $\mu \in
\Dist(\RMem[S])$, we write $\dom(\mu) \triangleq S$ for the domain. 

Viewing $\RMem[S]$ as a product indexed by $S$, we can adapt the general
constructions for distributions over products to distributions over $\RMem[S]$.
Given disjoint variables $S, S' \subseteq \RVar$, for instance, we define the
product $\otimes : \Dist(\RMem[S]) \times \Dist(\RMem[S']) \to \Dist(\RMem[S
\cup S'])$ to be
\[
  (\mu_S \otimes \mu_{S'})(m) \triangleq \mu_S(m_S) \cdot \mu_{S'}(m_{S'})
\]
where $m_S \in \RMem[S]$ and $m_{S'} \in \RMem[S']$ restrict $m$ to $S$ and $S'$
respectively. The Dirac distribution $\delta_0$ is the identity of this
operation: $\mu \otimes \delta_0 = \delta_0 \otimes \mu = \mu$. When $\mu \in
\Dist(\RMem[S])$ can be factored as $\mu = \mu_1 \otimes \mu_2$ for $\mu_i \in
\Dist(\RMem[S_i])$, we say that $S_1$ and $S_2$ are \emph{(probabilistically)
independent} in $\mu$.

Likewise, we can project a distribution over $\RMem[S]$ to a distribution over
$\RMem[S']$ for $S' \subseteq S$ using the projection $\pi_{S, S'} :
\Dist(\RMem[S]) \to \Dist(\RMem[S'])$, defined as:
\[
  \pi_{S, S'}(\mu)(m_{S'}) \triangleq \sum_{m_S \in \RMem[S] : p_{S'}(m_S) = m_{S'}}
  \mu(m_S) ,
\]
where $p_{S'}$ restricts the range of the memory to $S'$.  For example, $\pi_{S,
S}$ is the identity, while $\pi_{S, \emptyset}$ maps all distributions to
$\delta_0$. We will abbreviate $\pi_{S, S'}$ by $\pi_{S'}$ (or just $\pi$) when
the domains are clear from the context.

\subsection{Probabilistic Programs}

We will work with a variant of the basic probabilistic imperative language
{\PWHILE}. We enforce a clear separation between deterministic and probabilistic
data using simple syntactic conditions, though more sophisticated techniques
(e.g., dataflow analysis) could be also used. Let $\DVar$ be a countable set of
\emph{deterministic variables} disjoint from $\RVar$, and let $\DMem \triangleq
\DVar \to \Val$ be the set of \emph{deterministic memories}, or \emph{stores}.
The expression language is largely standard:
\begin{align*}
  \DExp \ni e_d &::= \DVar \mid \DExp + \DExp \mid \DExp \land \DExp \mid \cdots
  \\
  \RExp \ni e_r &::= \DExp \mid \RVar \mid \RExp + \RExp \mid \RExp \land \RExp \mid \cdots
\end{align*}
We assume that expressions are typed using a simple type system, and we only
work with well-typed expressions. We interpret deterministic expressions as maps
$\denot{e_d} : \DMem \to \Val$. Randomized expressions are interpreted as maps
$\denot{e_r} : \DMem \times \RMem[S] \to \Val$, where $S \subseteq \RVar$
contains all randomized variables in $e_r$; if $e_r$ mentions variables outside
of $S$, then the interpretation is not defined. It is also straightforward to
lift this interpretation to interpret randomized expressions in distributions
over randomized memories: $\denot{e_r} : \DMem \times \Dist(\RMem[S]) \to
\Dist(\Val)$.

Next, we consider the commands. $\RCom$ commands appearing under a randomized
guard---so they cannot assign to deterministic variables---while $\Com$ commands
are general.
\begin{align*}
  \RCom \ni c &::= \Skip
                \mid \Assn{\RVar}{\RExp}
                \mid \Rand{\RVar}{\Unif_S}
                \mid \Seq{\RCom}{\RCom} \\
               &\mid \DCond{\DExp}{\RCom}{\RCom}
                \mid \RCond{\RExp}{\RCom}{\RCom}
                \mid \DWhile{\DExp}{\RCom}
  \\
  \Com \ni c &::= \Skip
                \mid \Assn{\DVar}{\DExp}
                \mid \Assn{\RVar}{\RExp}
                \mid \Rand{\RVar}{\Unif_S}
                \mid \Seq{\Com}{\Com} \\
               &\mid \DCond{\DExp}{\Com}{\Com}
                \mid \RCond{\RExp}{\RCom}{\RCom}
                \mid \DWhile{\DExp}{\Com}
\end{align*}
The main probabilistic command is sampling: $\Rand{\RVar}{\Unif_S}$ takes
a uniform sample from a finite, non-empty set $S$ and assigns it to a variable.

Our grammar separates commands for assignments to deterministic variables and
randomized variables, and separates commands for deterministic and probabilistic
conditionals (we require loop guards to be deterministic). These distinctions
will be important when we introduce our proof system. We will also use a few
standard variants of commands:
\begin{gather*}
  \DFor{i}{1}{N}{c} \triangleq \Seq{\Assn{i}{1}}{\DWhile{i \leq N}{(\Seq{c}{\Assn{i}{i + 1}})}} \\
  \DCondt{b}{c} \triangleq \DCond{b}{c}{\Skip}
  \qquad
  \RCondt{b}{c} \triangleq \RCond{b}{c}{\Skip}
\end{gather*}
We interpret programs using a restricted version of the standard semantics due
to \citet{Kozen81}, assuming additionally that programs terminate on all
inputs---notions like product distribution and probabilistic independence are
poorly behaved when programs may diverge with positive probability. Technically,
programs transform \emph{configurations}, pairs of a deterministic memory
$\sigma$ and a distribution $\mu$ over randomized memories:
\[
  \denot{c} : (\DMem \times \Dist(\RMem)) \to (\DMem \times \Dist(\RMem)) .
\]
\Cref{fig:semantics} presents the program semantics; $\text{Unif}_S \in
\Dist(S)$ is the uniform distribution over a finite, non-empty set $S$, which
assigns probability $1/|S|$ to every element $s \in S$. The semantics of random
conditionals uses convex combination lifted to configurations; this is defined
since the output stores are equal because branches under random guards may not
modify deterministic variables.

\begin{figure}
  \begin{align*}
    \denot{\Skip}(\sigma, \mu)
    &\triangleq (\sigma, \mu) \\
    \denot{\Assn{x_d}{e_d}}(\sigma, \mu)
    &\triangleq (\sigma[x_d \mapsto \denot{e_d}\sigma], \mu) \\
    \denot{\Assn{x_r}{e_r}}(\sigma, \mu)
    &\triangleq (\sigma, \dbind(\mu, m \mapsto \dunit(m[x_r \mapsto \denot{e_r}(\sigma, m)]))) \\
    \denot{\Rand{x_r}{\Unif_S}}(\sigma, \mu)
    &\triangleq (\sigma, \dbind(\mu, m \mapsto \dbind(\text{Unif}_S, u \mapsto \dunit(m[x_r \mapsto u])))) \\
    \denot{\Seq{c}{c'}}(\sigma, \mu)
    &\triangleq \denot{c'} (\denot{c}(\sigma, \mu)) \\
    \denot{\DCond{b}{c}{c'}}(\sigma, \mu)
    &\triangleq \begin{cases}
      \denot{c}(\sigma, \mu) &: \denot{b} \sigma = \true \\
      \denot{c'}(\sigma, \mu) &: \denot{b} \sigma = \false
    \end{cases} \\
    \denot{\RCond{b}{c}{c'}}(\sigma, \mu)
    &\triangleq
    \dconv{\rho}
      {\denot{c}(\sigma, \dcond{\mu}{\denot{b}\sigma = \true})}
      {\denot{c'}(\sigma, \dcond{\mu}{\denot{b}\sigma = \false})} \\
    &\qquad\qquad \text{where } \rho = \mu(\denot{b}\sigma = \true) \\
    \denot{\DWhile{b}{c}}(\sigma, \mu)
    &\triangleq
    \denot{\underbrace{c \mathbin{;} \cdots \mathbin{;} c}_{\mathrlap{N(\sigma) \triangleq \text{ \#steps until } \denot{b} = \false }}}(\sigma, \mu)
  \end{align*}

  \caption{Program semantics}%
  \label{fig:semantics}
\end{figure}

\section{A Probabilistic Model of BI}%
\label{sec:pbi}

Assertions in separation logic are based on the logic of \emph{bunched
implications} (BI)~\citep{DBLP:journals/bsl/OHearnP99,DBLP:conf/lics/Pym99}. We
first review the syntax and semantics of this logic, then introduce a
probabilistic interpretation that will enable BI formulas to describe
probabilistic states.

\subsection{The Syntax and (Resource) Semantics of BI, in Brief}

The logic of bunched implications (BI) is a substructural logic with the
following formulas:\footnote{%
  We do not need the multiplicative identity $\sepid$ from BI, since it will be
equivalent to $\top$ in our setting.}
\[
  \phi, \psi ::= p \in \APred
    \mid \top
    \mid \bot
    \mid \phi \land \psi
    \mid \phi \lor \psi
    \mid \phi \to \psi
    \mid \phi \sepand \psi
    \mid \phi \sepimp \psi
\]
Throughout, $p$ ranges over a set of atomic propositions $\APred$. Negation
$\neg \phi$ is defined as $\phi \to \bot$. Intuitively, BI combines standard
propositional logic with a substructural fragment consisting of the separating
conjunction $\sepand$ and the separating implication (``magic wand'') $\sepimp$.
In the standard heap model of BI underlying separation logic, atomic
propositions describe the contents of particular heap locations, separating
conjunction combines assertions describing disjoint portions of the heap, and
separating implication describes the result of adjoining the current heap with a
disjoint portion.

BI can be given several kinds of semantics.  We follow the \emph{resource}
semantics, as developed by David Pym and
others~\citep{DBLP:journals/tcs/PymOY04}. The basic idea is to define a Kripke
semantics where the set of possible worlds forms a partial, pre-ordered
commutative monoid $\mathcal{M} = (M, \circ, e, \sqsubseteq)$.

\begin{definition}[\citet{DBLP:journals/mscs/GalmicheMJP05}]
  A (partial) \emph{Kripke resource monoid} consists of a set $M$ of possible
  worlds, a partial binary operation $\circ : M \times M \rightharpoonup M$, an
  element $e \in M$, and a pre-order $\sqsubseteq$ on $M$ such that the monoid
  operation
  \begin{itemize}
    \item has identity $e$: for all $x \in M$, we have $e \circ x = x \circ e =
      x$;
    \item is associative: $x \circ (y \circ z) = (x \circ y) \circ x$, where
      both sides are either defined and equal, or both undefined; and
    \item is compatible with the pre-order: if $x \sqsubseteq y$ and $x'
      \sqsubseteq y'$ and if both $x \circ x'$ and $y \circ y'$ are defined,
      then $x \circ x' \sqsubseteq y \circ y'$.
  \end{itemize}
\end{definition}

Under the resource interpretation of BI, possible worlds are collections of
resources, the monoid operation $\circ$ combines sets of resources, and the
identity $e$ represents the lack of resources. The monoid operation may fail to
be defined when combining two incompatible sets of resources; this is useful for
modeling resources that should not be duplicated, such as heap addresses. We
write $(m \circ m')\downarrow$ when the combination is defined.

\begin{definition}
  Let $(M, \circ, e, \sqsubseteq)$ be a partial Kripke resource monoid and let
  $\denot{-} : \APred \to 2^M$ be a \emph{Kripke resource interpretation} of
  atomic formulas: if $m \in \denot{p}$ and $m \sqsubseteq m'$, then $m' \in
  \denot{p}$. Then the corresponding \emph{Kripke resource model} of BI can be
  defined as follows:
  \begin{align*}
    m &\models p &&\text{iff } m \in \denot{p} \\
    m &\models \top &&\text{always} \\
    m &\models \bot &&\text{never} \\
    m &\models \phi \land \psi &&\text{iff }
    m \models \phi \text{ and } m \models \psi \\
    m &\models \phi \lor \psi &&\text{iff }
    m \models \phi \text{ or } m \models \psi \\
    m &\models \phi \to \psi &&\text{iff for all }
    m \sqsubseteq m',\
    m' \models \phi \text{ implies } m' \models \psi \\
    m &\models \phi \sepand \psi &&\text{iff exist }
    m_1, m_2 \text{ with } (m_1 \circ m_2) \downarrow \text{ and } m_1 \circ m_2 \sqsubseteq m
    \text{ such that } m_1 \models \phi \text{ and } m_2 \models \psi \\
    m &\models \phi \sepimp \psi &&\text{iff for all } m'
    \text{ such that } m' \models \phi,\
    (m \circ m') \downarrow \text{ implies } m \circ m' \models \psi
  \end{align*}
  All formulas satisfy the \emph{Kripke monotonicity} property: if $m \models
  \phi$ and $m \sqsubseteq m'$, then $m' \models \phi$ as well. We write
  $\models \phi$ when $\phi$ is \emph{valid}, i.e., when $\phi$ holds in all
  worlds.
\end{definition}

BI enjoys good metatheoretic properties and a rich proof theory. Many models are
known beyond heaps, including doubly closed categories (DCCs), presheafs, and
Petri nets. There are also complete proof systems for {BI}. The interested
reader should consult \citet{PymMono} or \citet{DochertyThesis} for a detailed
treatment of BI's proof theory, and \citet{DBLP:journals/mscs/GalmicheMJP05} for
more information about the partial monoid semantics we use here.

\subsection{A Probabilistic Version of BI}

By leveraging the resource semantics of BI, we can give a probabilistic
interpretation of BI formulas.

\begin{definition}
  Let $M$ be the set of program configurations $\DMem[S] \times \Dist(\RMem[T])$
  where $S$ ranges over subsets of $\DVar$ and $T$ ranges over subsets
  of $\RVar$. Let $\circ$ be a partial binary operation defined as:
  \[
    (\sigma, \mu) \circ (\sigma', \mu') \triangleq \begin{cases}
      (\sigma \cup \sigma', \mu \otimes \mu')
      &: \sigma = \sigma' \text{ on } \dom(\sigma) \cap \dom(\sigma')
      \text{ and } \dom(\mu) \cap \dom(\mu') = \emptyset \\
      \text{undefined} &: \text{otherwise} .
    \end{cases}
  \]
  Let $e$ be the empty deterministic memory paired with the Dirac distribution
  over the empty probabilistic memory, and let $\sqsubseteq$ be the following
  partial order:
  \[
    (\sigma, \mu) \sqsubseteq (\sigma', \mu')
    \quad\text{iff}\quad \begin{cases}
    &\dom(\sigma) \subseteq \dom(\sigma') \text{ and }
    \sigma = \sigma' \text{ on } \dom(\sigma) \\
    &\dom(\mu) \subseteq \dom(\mu') \text{ and }
    \mu = \pi_{\dom(\mu'), \dom(\mu)}(\mu') .
    \end{cases}
  \]
  Then $(M, \circ, e, \sqsubseteq)$ is a Kripke resource monoid.
\end{definition}

To describe basic properties of configurations, we take the following atomic
formulas.
\begin{align*}
  \APred \ni p &::= \Unif_S[\RExp] \mid \RExp \sim \RExp \mid \DExp = \DExp \mid \DExp \leq \DExp \mid \cdots
\end{align*}
We will fix a base theory $E$ of program expressions, enough to interpret the
necessary arithmetic operations $(+, \cdot)$ and relations $(=, \leq)$, and we
write $\models_E \phi$ if $\phi$ when $\phi$ is valid. For example $\models_E e
+ e' = e' + e$ holds for any two expressions, randomized or not. 

Validity for atomic formulas of deterministic expressions is defined as
expected: these formulas can be interpreted as subsets of $\DMem$. More
formally, for any deterministic proposition $p_d$ we write $\sigma \models_E
p_d$ if $p_d$ holds in $\sigma$, and we define:
\[
  (\sigma, \mu) \models p_d \text{ iff } \sigma \models_E p_d
\]
The more interesting cases are the atomic formulas for randomized expressions.

\begin{definition}
  For a nonempty finite set $S$ and a randomized expression $e_r \in \RExp$, we
  define $\denot{\Unif_S[e_r]}$ to be the set of configurations $(\sigma, \mu)$
  where $\FV(e_r) \subseteq \dom(\sigma) \cup \dom(\mu)$, and
  $\denot{e_r}(\sigma, \mu)$ assigns probability $1/|S|$ to each element of $S$;
  we omit $S$ when it is clear from the context.

  We define $\denot{e_r \sim e_r'}$ to be the set of configurations $(\sigma,
  \mu)$ where $\FV(e_r) \cup \FV(e_r') \subseteq \dom(\sigma) \cup \dom(\mu)$ and
  $\denot{e_r}(\sigma, m) = \denot{e_r'}(\sigma, m)$ for $m \in \supp(\mu)$.
  This formula asserts equality between randomized expressions; we use $\sim$ to
  avoid confusion with equality between deterministic expressions. We abbreviate
  $\Dist[e] \triangleq e \sim e$. Explicitly, $\denot{\Dist[e]}$ contains all
  configurations $(\sigma, \mu)$ where $\FV(e) \subseteq \dom(\sigma) \cup
  \dom(\mu)$.
\end{definition}

Since the interpretation of atomic assertions is monotonic, our configurations
are a Kripke resource model of BI. An important feature of the resulting
semantics is that validity only depends on the variables in the formula. (We
defer proofs to \fullref{app:proofs}.)

\begin{restatable}[Restriction]{lemma}{LEMrestriction}\label{lem:bi-restriction}
  Let $(\sigma, \mu)$ be any configuration and let $\phi$ be a BI formula. Then:
  \[
    (\sigma, \mu) \models \phi \iff (\sigma, \pi_{\FV(\phi)}(\mu)) \models \phi .
  \]
\end{restatable}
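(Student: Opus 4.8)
The plan is to prove the biconditional by structural induction on $\phi$. Throughout I write $m = (\sigma, \mu)$, abbreviate $T = \FV(\phi)$, and let $\nu = \pi_{T}(\mu)$ be the projected distribution, whose domain is $\dom(\mu) \cap T$. I will lean on three routine facts about the model: projections compose, $\pi_{T'}(\mu) = \pi_{T'}(\pi_{T}(\mu))$ whenever $T' \subseteq T \subseteq \dom(\mu)$; projection distributes over products, $\pi_{D}(\mu_1 \otimes \mu_2) = \pi_{D \cap \dom(\mu_1)}(\mu_1) \otimes \pi_{D \cap \dom(\mu_2)}(\mu_2)$ for disjoint-domain factors; and expression locality, namely that $\denot{e_r}(\sigma, m)$ depends only on the restriction of $m$ to $\FV(e_r)$, so that pushing $\mu$ or $\nu$ forward along $e_r$ yields the same distribution once $\FV(e_r) \subseteq T$.

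One half of the lemma is uniform across all connectives. Since $T \subseteq \dom(\mu)$, the definition of the order gives $(\sigma, \nu) \sqsubseteq (\sigma, \mu)$ directly: the deterministic parts coincide and $\nu$ is exactly the stipulated marginal of $\mu$. Hence Kripke monotonicity immediately yields $(\sigma, \nu) \models \phi \Rightarrow (\sigma, \mu) \models \phi$, which is the ($\Leftarrow$) direction for every $\phi$. The content therefore lies in the ($\Rightarrow$) direction---that projecting away the out-of-scope variables preserves satisfaction---which I establish by induction.

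The base and positive cases are direct. For atomic formulas I invoke expression locality: the side-condition $\FV(e_r) \subseteq \dom(\sigma) \cup \dom(\mu)$ survives projection because the discarded variables lie outside $\FV(e_r) \subseteq T$, and the relevant pushforward is unchanged, so $\Unif_S[e_r]$ and $e_r \sim e_r'$ transfer (deterministic atoms ignore $\mu$ entirely), while $\top, \bot$ are immediate. For $\land$ and $\lor$ I combine the induction hypotheses with projection composition: as $\FV(\phi) \subseteq T$ we have $\pi_{\FV(\phi)}(\nu) = \pi_{\FV(\phi)}(\mu)$, so the hypothesis for $\phi$ relates satisfaction at $(\sigma, \mu)$ and at $(\sigma, \nu)$, and likewise for $\psi$. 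The two connectives built on the monoid operation, $\sepand$ and $\sepimp$, are also manageable because $\circ$ is a tensor of disjoint factors: in each case I project the witnessing world(s) down to the free variables, apply the induction hypotheses, and re-assemble with $\otimes$, checking via the product-projection identity that the reassembled world is still $\circ$-compatible and that its marginals on $\FV(\psi)$ match. Concretely, for $\sepimp$ I replace a given $\circ$-compatible witness $(\sigma', \mu') \models \phi$ by $(\sigma', \pi_{T}(\mu'))$, which is now disjoint from all of $\dom(\mu)$, combine it with $m$, fire the hypothesis, and match marginals---no coupling is needed since everything is a disjoint product.

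The genuinely hard case is $\phi \to \psi$, the one connective whose semantics quantifies over extensions in the marginalization order $\sqsubseteq$ rather than over disjoint products. Here I must start from an arbitrary extension $(\sigma', \mu') \sqsupseteq (\sigma, \nu)$ satisfying $\phi$ and manufacture an extension $(\sigma', \mu'') \sqsupseteq (\sigma, \mu)$ on which the hypothesis $m \models \phi \to \psi$ can fire, then transport the resulting $\models \psi$ back through the induction hypothesis for $\psi$. The obstacle is that $\mu$ carries extra variables $D = \dom(\mu) \setminus T$ that were marginalized away to form $\nu$, whereas $(\sigma', \mu')$ supplies its own information on $T$; the two agree only on their common marginal $\nu$ over $\dom(\nu)$. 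I plan to reconcile them by the canonical gluing of $\mu$ and $\pi_{T}(\mu')$ along $\nu$---the coupling $\mu''(d, y, d'') \triangleq \mu(d,y) \cdot \pi_{T}(\mu')(y, d'') / \nu(y)$ that makes $D$ and the genuinely new variables $d''$ conditionally independent given $\dom(\nu)$. Verifying that $\mu''$ is a bona fide distribution, that its marginal on $\dom(\mu)$ is $\mu$ (so $(\sigma', \mu'') \sqsupseteq (\sigma, \mu)$), that $\pi_{\FV(\phi)}(\mu'') = \pi_{\FV(\phi)}(\mu')$ (so the hypothesis applies), and that $\pi_{\FV(\psi)}(\mu'') = \pi_{\FV(\psi)}(\mu')$ (so the conclusion transfers) is the crux of the whole argument; everything else reduces to bookkeeping with the projection identities above.
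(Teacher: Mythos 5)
Your proposal is correct and follows the same skeleton as the paper's proof: the ($\Leftarrow$) direction is Kripke monotonicity applied to $(\sigma,\nu)\sqsubseteq(\sigma,\mu)$, and the ($\Rightarrow$) direction is a structural induction whose atomic, $\land$, $\lor$, $\sepand$, and $\sepimp$ cases match the paper's almost step for step (the paper likewise repairs $\circ$-compatibility in the $\sepimp$ case by passing from the witness $(\sigma',\mu')$ to $(\sigma',\pi_{\FV(\phi_1)}(\mu'))$, though it first splits on whether $(\sigma',\mu')\circ(\sigma,\mu)$ happens to be defined). Where you genuinely diverge is the crux case $\phi_1\to\phi_2$, and your version is the more careful one. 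The paper simply asserts that there exists $\mu''$ with $\dom(\mu'')=\dom(\mu)\cup\dom(\mu')$, $\pi_{\dom(\mu)}(\mu'')=\mu$, and $\pi_{\dom(\mu')}(\mu'')=\mu'$; such a common extension need not exist, because $\mu$ and $\mu'$ are only forced to agree on $\dom(\nu)=\FV(\phi)\cap\dom(\mu)$, while $\dom(\mu')$ may also meet $\dom(\mu)\setminus\FV(\phi)$ with conflicting correlations (e.g.\ $\mu$ makes $x,y$ independent fair coins, $\mu'$ makes them equal fair coins, and $\FV(\phi)=\{x\}$: no $\mu''$ has both as marginals). You sidestep exactly this trap by gluing $\mu$ with $\pi_{T}(\mu')$ instead---whose overlap with $\dom(\mu)$ is precisely $\dom(\nu)$, where the two marginals provably coincide---via the explicit conditional-independence coupling, and you transport the antecedent and consequent across the gluing by using the induction hypotheses in both directions. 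That is the right repair (it is the same projection trick the paper itself deploys for $\sepimp$), so your plan is not only sound but patches a genuine gap in the paper's own argument; the only details left implicit are the convention $\mu''(d,y,d'')=0$ when $\nu(y)=0$ and the four verification identities you list, all of which are routine calculations.
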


One useful consequence is the following property, which allows a $\land$
conjunct to be pulled into a $\sepand$ conjunct covering all of the formula's
free random variables.

\begin{restatable}[Extrusion]{lemma}{LEMextrusion}\label{lem:bi-exchange}
  If $\models \phi \to \Dist[FV(\eta) \cap \RVar]$, then $\models (\phi \sepand
  \psi) \land \eta \to (\phi \land \eta) \sepand \psi$.
\end{restatable}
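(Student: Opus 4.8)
The plan is to argue semantically, unfolding the definition of $\sepand$ and exploiting the fact---forced by the hypothesis---that all random variables of $\eta$ live inside the $\phi$-factor. Fix any configuration $(\sigma,\mu)$ with $(\sigma,\mu) \models (\phi \sepand \psi) \land \eta$. From $(\sigma,\mu) \models \phi \sepand \psi$ I obtain witnesses $m_1 = (\sigma_1,\mu_1)$ and $m_2 = (\sigma_2,\mu_2)$ with $(m_1 \circ m_2)\downarrow$, $m_1 \circ m_2 \sqsubseteq (\sigma,\mu)$, $m_1 \models \phi$, and $m_2 \models \psi$. Unfolding $\circ$ and $\sqsubseteq$, this means $\dom(\mu_1) \cap \dom(\mu_2) = \emptyset$, the stores $\sigma_1,\sigma_2$ agree with $\sigma$ on their domains, and crucially $\mu_1 \otimes \mu_2 = \pi_{\dom(\mu_1) \cup \dom(\mu_2)}(\mu)$. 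Applying the hypothesis $\models \phi \to \Dist[\FV(\eta) \cap \RVar]$ to $m_1 \models \phi$ and unfolding the definition of $\Dist[-]$ yields $\FV(\eta) \cap \RVar \subseteq \dom(\mu_1)$: the random variables of $\eta$ sit entirely in the $\phi$-factor.

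The witness I will produce for the goal $(\phi \land \eta) \sepand \psi$ keeps the $\psi$-side unchanged and enlarges the $\phi$-side to $m_1' = (\sigma', \mu_1)$, where $\sigma' = \sigma|_{\dom(\sigma_1) \cup (\FV(\eta) \cap \DVar)}$ extends $\sigma_1$ by the deterministic free variables of $\eta$, read off from $\sigma$. Three things then need checking. First, $m_1 \sqsubseteq m_1'$ (only the store grows, agreeing with $\sigma$ on the old domain), so Kripke monotonicity gives $m_1' \models \phi$. Second, $m_1' \circ m_2$ is still defined and still below $(\sigma,\mu)$: the random part $\mu_1 \otimes \mu_2$ is unchanged, and the enlarged store $\sigma'$ is a restriction of $\sigma$, hence compatible with $\sigma_2$ and bounded by $\sigma$. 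Third, and this is the heart of the argument, $m_1' \models \eta$.

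For the third point I will use the Restriction lemma on both sides. Writing $X = \FV(\eta) \cap \RVar$, Restriction reduces $(\sigma,\mu) \models \eta$ to $(\sigma, \pi_X(\mu)) \models \eta$ and reduces the goal $m_1' \models \eta$ to $(\sigma', \pi_X(\mu_1)) \models \eta$. Since $\sigma$ and $\sigma'$ agree on $\FV(\eta) \cap \DVar$ by construction, and since the truth of $\eta$ depends only on the store's values at its free deterministic variables (the elementary deterministic analogue of Restriction), it suffices to show that the two marginals coincide, $\pi_X(\mu) = \pi_X(\mu_1)$. This is where $X \subseteq \dom(\mu_1)$ is essential. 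Using composition of projections together with $\mu_1 \otimes \mu_2 = \pi_{\dom(\mu_1)\cup\dom(\mu_2)}(\mu)$, I compute $\pi_X(\mu) = \pi_X(\mu_1 \otimes \mu_2)$, and since $X$ lies within the first factor's domain and $\mu_2$ is a probability distribution (so marginalizing it away contributes a factor of $1$), the marginal of the product collapses to $\pi_X(\mu_1)$. Hence the marginals agree and $\eta$ transfers.

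The main obstacle is precisely this marginal computation and, with it, transferring the truth of $\eta$ from the full configuration to the rebuilt $\phi$-factor. Everything hinges on the hypothesis confining $\FV(\eta) \cap \RVar$ to $\dom(\mu_1)$: without it, $\eta$ could depend on random variables living in the $\psi$-factor $\mu_2$, or split across both factors, in which case the $\phi$-factor would not even contain those variables in its domain and the identity $\pi_X(\mu) = \pi_X(\mu_1)$ would fail. Once the marginals are shown equal, combining $m_1' \models \phi \land \eta$, $m_2 \models \psi$, and $m_1' \circ m_2 \sqsubseteq (\sigma,\mu)$ discharges the goal.
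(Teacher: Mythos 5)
Your proof is correct and takes essentially the same route as the paper's: decompose the $\sepand$, use the hypothesis to confine $\FV(\eta) \cap \RVar$ to $\dom(\mu_1)$, and invoke the Restriction lemma to transfer $\eta$ from $(\sigma,\mu)$ to the $\phi$-factor before recombining. The only difference is that you spell out two details the paper's terser proof glosses over---enlarging the store of the $\phi$-witness to cover $\FV(\eta) \cap \DVar$ and explicitly verifying the marginal identity $\pi_X(\mu) = \pi_X(\mu_1)$---both of which are sound and strengthen the argument.
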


An easy and useful consequence follows when $\eta$ does not mention any random
variables.\footnote{%
  For readers familiar with separation logic, deterministic propositions
resemble \emph{pure} assertions in the heap model of BI.}

\begin{corollary}
  Let $p_d$ be a deterministic proposition. The following axiom is sound:
  \[
    \models \psi \land p_d \to \psi \sepand p_d
  \]
\end{corollary}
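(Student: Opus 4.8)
The plan is to obtain the corollary as an instance of the Extrusion lemma, taking $\phi := \top$ and $\eta := p_d$ and leaving $\psi$ unchanged. The first step is to discharge Extrusion's side condition $\models \phi \to \Dist[\FV(\eta) \cap \RVar]$. Since $p_d$ is a deterministic proposition, it mentions no random variables, so $\FV(p_d) \cap \RVar = \emptyset$ and the side condition collapses to $\models \top \to \Dist[\emptyset]$. The assertion $\Dist[\emptyset]$ holds in every configuration $(\sigma, \mu)$, because its defining condition $\emptyset \subseteq \dom(\sigma) \cup \dom(\mu)$ is vacuously true; hence $\Dist[\emptyset]$ is valid and the premise is satisfied trivially.

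With the premise in hand, Extrusion yields $\models (\top \sepand \psi) \land p_d \to (\top \land p_d) \sepand \psi$, and it remains to normalize both sides into the claimed shape. On the right, $\top \land p_d \equiv p_d$ together with commutativity of $\sepand$ gives $(\top \land p_d) \sepand \psi \equiv p_d \sepand \psi \equiv \psi \sepand p_d$. On the left I would use $\top \sepand \psi \equiv \psi$: in this configuration model $\top$ coincides with the multiplicative unit (as anticipated by the footnote), since the unit configuration $e = (0, \delta_0)$ satisfies $e \sqsubseteq (\sigma, \mu)$ for every $(\sigma, \mu)$ — indeed $\pi_{\dom(\mu), \emptyset}(\mu) = \delta_0$ — so $\denot{\top} = M = \denot{\sepid}$. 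Chaining these equivalences with the Extrusion implication produces exactly $\models \psi \land p_d \to \psi \sepand p_d$.

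The only genuinely model-specific ingredient is $\top \sepand \psi \equiv \psi$; everything else is propositional bookkeeping plus commutativity of $\sepand$. If one wishes to avoid invoking $\top \equiv \sepid$, it is enough to establish the single direction $\models \psi \to \top \sepand \psi$, which holds in any Kripke resource monoid by witnessing the split with $m_1 = e \models \top$ and $m_2 = m \models \psi$, using $e \circ m = m \sqsubseteq m$; monotonicity of $\land$ then lets one precompose this with the Extrusion implication to reach the corollary. As a sanity check, the statement can also be verified semantically without Extrusion at all: given $(\sigma, \mu) \models \psi \land p_d$, split it as $m_1 = (\sigma, \mu) \models \psi$ and $m_2 = (\sigma, \delta_0) \models p_d$, the latter because $\sigma \models_E p_d$; their composite $(\sigma, \mu \otimes \delta_0) = (\sigma, \mu)$ witnesses $\psi \sepand p_d$.
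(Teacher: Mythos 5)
Your proposal is correct and takes essentially the same route as the paper's own proof: instantiate Extrusion with $\phi := \top$ and $\eta := p_d$ (the side condition holding vacuously since $\FV(p_d) \cap \RVar = \emptyset$), use that $\top$ is the unit for $\sepand$ to pass from $\psi \land p_d$ to $(\top \sepand \psi) \land p_d$, and finish with commutativity of $\sepand$. Your added direct semantic check---splitting $(\sigma, \mu)$ as $(\sigma, \mu) \circ (\sigma, \delta_0)$ with $(\sigma, \delta_0) \models p_d$---is also valid and even more elementary than the paper's argument, but the main derivation you give matches the paper's proof step for step.
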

\begin{proof}
  Since $\top$ is the unit for $\sepand$ in our semantics, $\psi \land p_d$
  implies $(\top \sepand \psi) \land p_d$. Since $p_d$ does not mention any
  random variables, \cref{lem:bi-exchange} implies $(\top \land p_d) \sepand
  \psi$. Symmetry of $\sepand$ gives $\psi \sepand p_d$.
\end{proof}

We briefly mention two other important features of our semantics. First, the
semantics is intuitionistic: $\phi \lor \neg \phi$ is not valid. Second, our
semantics admits weakening:
\[
  \models \phi \sepand \psi \to \phi \land \psi
\]
We will use repeatedly this property to pull out facts about specific variables
from a larger assertion.

\subsection{Axiom Schema for Atomic Formulas}
Next, we present our axioms for atomic formulas. Much like the situation for
atomic formulas in the ``pointer logic'' underlying standard separation logic,
these axioms are not complete. Nevertheless, they are already sufficient to
reason about many interesting probabilistic programs. 

We begin with axioms for formulas with $\sim$. The main difference between
$\sim$ and standard equality is that since $\sim$ is interpreted in a randomized
configuration---which might not have all of $\RVar$ in its domain---replacing
expressions by equal expressions must not introduce new random variables.

\begin{restatable}{lemma}{LEMsim}
  The following axiom schema are valid:
  \begin{align}
    &\models e_r \sim e_r' \to e_r' \sim e_r
    \tag{S1}
    \\
    &\models e_r \sim e_r' \land e_r' \sim e_r'' \to e_r \sim e_r''
    \tag{S2}
    \\
    &\models e_r \sim e_r' \to e_r \sim e_r'' \text{ whenever }
    \models_E e_r' = e_r'' \text{ and } FV(e_r'') \cap \RVar \subseteq FV(e_r')
    \cap \RVar
    \tag{S3}
    \\
    &\models e_r \sim e_r \to e_r' \sim e_r'
    \text{ whenever } FV(e_r') \cap \RVar \subseteq FV(e_r) \cap \RVar
    \tag{S4}
  \end{align}
\end{restatable}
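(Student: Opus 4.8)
The plan is to prove all four schemas by reducing validity of the implication to a pointwise check across worlds. For any BI formula, $\models (\phi \to \psi)$ holds iff $m \models \phi \to \psi$ at every world $m$, which—unfolding the clause for $\to$ and using reflexivity of $\sqsubseteq$ together with the fact that we quantify over all worlds—is equivalent to: for every configuration $(\sigma, \mu)$, $(\sigma, \mu) \models \phi$ implies $(\sigma, \mu) \models \psi$. So in each case I would fix an arbitrary configuration $(\sigma, \mu)$ satisfying the antecedent and show it satisfies the consequent, working directly from the definition of $\denot{e_r \sim e_r'}$. That definition bundles two obligations: a \emph{domain condition} $\FV(e_r) \cup \FV(e_r') \subseteq \dom(\sigma) \cup \dom(\mu)$ ensuring both expressions are well-defined in the configuration, and a \emph{value condition} $\denot{e_r}(\sigma, m) = \denot{e_r'}(\sigma, m)$ for every $m \in \supp(\mu)$. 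Each axiom then asserts that these two conditions are preserved under the relevant manipulation.

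For (S1) and (S2) the value condition is immediate from symmetry and transitivity of equality on $\Val$, applied at each $m \in \supp(\mu)$ separately. For (S2) I would first unfold the conjunction in the antecedent: $(\sigma, \mu) \models (e_r \sim e_r') \land (e_r' \sim e_r'')$ means both conjuncts hold at the \emph{same} world, so the two pointwise equalities share the middle term $\denot{e_r'}(\sigma, m)$ and chain to give $\denot{e_r}(\sigma, m) = \denot{e_r''}(\sigma, m)$. The domain condition is preserved because $\FV(e_r) \cup \FV(e_r')$ is symmetric for (S1), and because the two conjuncts of (S2) supply $\FV(e_r) \cup \FV(e_r') \subseteq \dom(\sigma) \cup \dom(\mu)$ and $\FV(e_r') \cup \FV(e_r'') \subseteq \dom(\sigma) \cup \dom(\mu)$, whose union covers $\FV(e_r) \cup \FV(e_r'')$.

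The schemas (S3) and (S4) are where the side conditions do real work. For (S3), the value condition follows by composing $\denot{e_r}(\sigma, m) = \denot{e_r'}(\sigma, m)$ (from the antecedent) with $\denot{e_r'}(\sigma, m) = \denot{e_r''}(\sigma, m)$, where the latter is exactly what $\models_E e_r' = e_r''$ provides: base-theory validity of an equality means the two expressions denote the same value on every memory, in particular on $(\sigma, m)$. The delicate part is the domain condition: I must check $\FV(e_r'') \subseteq \dom(\sigma) \cup \dom(\mu)$. Here the random free variables are the only ones that can fall outside the domain, since $\mu$ may be supported on a strict subset of $\RVar$; the side condition $\FV(e_r'') \cap \RVar \subseteq \FV(e_r') \cap \RVar$ is precisely what forces them back into $\dom(\mu)$, using that the antecedent already gives $\FV(e_r') \cap \RVar \subseteq \dom(\mu)$. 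Deterministic free variables are always in scope, consistent with \cref{lem:bi-restriction} restricting only the probabilistic component $\mu$. Schema (S4) is the purely definedness special case: since $e_r \sim e_r$ is $\Dist[e_r]$ and asserts nothing beyond its domain condition $\FV(e_r) \subseteq \dom(\sigma) \cup \dom(\mu)$, the axiom collapses to showing $\FV(e_r') \subseteq \dom(\sigma) \cup \dom(\mu)$, which follows from the random-variable side condition exactly as in (S3).

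The main obstacle is not the value-equality reasoning, which is routine, but the interplay between the two notions of equality at play: $\models_E$ equality is \emph{semantic} (equality of denoted values) and says nothing about which random variables occur syntactically, whereas the domain condition is \emph{syntactic} (free-variable containment). This mismatch is exactly why (S3) cannot drop its side condition: without $\FV(e_r'') \cap \RVar \subseteq \FV(e_r') \cap \RVar$, the rewritten expression $e_r''$ could mention a random variable outside $\dom(\mu)$, making $e_r \sim e_r''$ ill-defined even though the values agree wherever both are defined. Keeping these two layers separate, and confirming that each side condition supplies precisely the syntactic containment the domain condition needs, is the only nonroutine step.
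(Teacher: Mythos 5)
Your proof is correct and takes essentially the same route as the paper's: unfold $\denot{e_r \sim e_r'}$ into its domain (free-variable) condition and pointwise value condition, discharge the value condition by symmetry/transitivity of equality on $\Val$ and by $\models_E e_r' = e_r''$, and use the $\RVar$ side conditions to recover the domain condition---the paper does exactly this for (S3) and dismisses the other three as immediate. Your up-front reduction of Kripke validity of an implication to a pointwise check over all worlds is just a rephrasing of the paper's quantification over extensions $(\sigma', \mu') \sqsupseteq (\sigma, \mu)$, and even your quick gloss that deterministic free variables are always in scope mirrors the paper's own proof, which asserts without further argument that the free variables of $e_r''$ are contained in the domain.
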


Note that $\models e_r \sim e_r$ is \emph{not} an axiom---it is not sound, since
it may not hold in a randomized memory $\Dist(\RMem[\emptyset])$ with empty
domain. We also have axioms for uniformity propositions.

\begin{restatable}{lemma}{LEMunif}
  The following axiom schema are valid:
  \begin{align}
    &\models e_r \sim e_r' \land \Unif_{S}[e_r] \to \Unif_{S}[e_r']
    \tag{U1}
    \\
    &\models \Unif_{S}[e_r] \to e_r \sim e_r
    \tag{U2}
    \\
    &\models \Unif_{S}[e_r] \to \Unif_S[f(e_r)] \text{ for any bijection } \denot{f} : S \to S
    \text{ and } \FV(f) \cap \RVar \subseteq \FV(e_r) \cap \RVar
    \tag{U3}
  \end{align}
\end{restatable}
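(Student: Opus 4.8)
The plan is to prove all three schemata by the same two-step recipe. Each has the form $\models \phi \to \psi$, so I would first reduce validity to a pointwise semantic implication. Unfolding the Kripke clause for $\to$, $m \models \phi \to \psi$ says that $m' \models \phi$ entails $m' \models \psi$ for every $m' \sqsupseteq m$; since $\sqsubseteq$ is reflexive, requiring this at all worlds $m$ is equivalent to the single condition
\[
  (\sigma, \mu) \models \phi \implies (\sigma, \mu) \models \psi
  \quad\text{for every configuration } (\sigma, \mu) .
\]
The second, recurring ingredient is that interpreting a randomized expression in a distribution is a pushforward: $\denot{e_r}(\sigma, \mu)$ is the image of $\mu$ under $m \mapsto \denot{e_r}(\sigma, m)$, so as a distribution on $\Val$ it depends only on the values $\denot{e_r}(\sigma, m)$ for $m \in \supp(\mu)$. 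I would establish this fact once and reuse it.

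With these in hand, (U1) and (U2) are short. For (U1), I would fix $(\sigma, \mu)$ satisfying both conjuncts. From $e_r \sim e_r'$ I get $\FV(e_r') \subseteq \dom(\sigma) \cup \dom(\mu)$ together with $\denot{e_r}(\sigma, m) = \denot{e_r'}(\sigma, m)$ on $\supp(\mu)$; by the pushforward observation the two expressions then induce the \emph{same} distribution $\denot{e_r}(\sigma, \mu) = \denot{e_r'}(\sigma, \mu)$, and since this is uniform on $S$ by $\Unif_S[e_r]$, the formula $\Unif_S[e_r']$ holds as well. For (U2) there is essentially nothing to do: $\Unif_S[e_r]$ already carries the domain requirement $\FV(e_r) \subseteq \dom(\sigma) \cup \dom(\mu)$, and the explicit description of $\denot{\Dist[e_r]} = \denot{e_r \sim e_r}$ is exactly this domain condition, so membership is immediate.

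For (U3), I would fix $(\sigma, \mu) \models \Unif_S[e_r]$ and argue in two parts. The distributional part is the crux: since $\denot{f(e_r)}(\sigma, m) = \denot{f}(\denot{e_r}(\sigma, m))$, the pushforward $\denot{f(e_r)}(\sigma, \mu)$ is the image of the uniform distribution $\denot{e_r}(\sigma, \mu)$ under the bijection $\denot{f} : S \to S$, and pushing a uniform distribution on $S$ through a bijection of $S$ yields the uniform distribution again---so $f(e_r)$ is uniform on $S$. The domain part uses the side condition: $\FV(f(e_r)) \cap \RVar = (\FV(f) \cup \FV(e_r)) \cap \RVar \subseteq \FV(e_r) \cap \RVar \subseteq \dom(\mu)$, so $f(e_r)$ mentions no random variable outside the domain and the well-definedness requirement of $\Unif_S[f(e_r)]$ is met.

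The genuine mathematical content is confined to two elementary facts: that agreement of expressions on $\supp(\mu)$ forces equality of the induced distributions (driving U1), and that a bijection preserves uniformity under pushforward (driving U3). I expect the only delicate point to be the free-variable bookkeeping in (U3)---checking that $f(e_r)$ does not escape the configuration's domain, which is precisely why the side condition constrains $\FV(f) \cap \RVar$; the deterministic free variables of $f$ cause no trouble since, like those of $e_r$, they are read off the store $\sigma$.
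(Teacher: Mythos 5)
Your proof is correct and takes essentially the same approach as the paper: the paper's own proof is a one-liner (``almost immediate from definitions; axiom (U3) follows from the fact that the uniform distribution is preserved under bijections of its domain''), and your pushforward argument plus free-variable bookkeeping is exactly the detail that one-liner elides. The two facts you isolate---agreement of the expressions on $\supp(\mu)$ forces equality of the induced distributions (for U1), and pushing the uniform distribution through a bijection of $S$ yields the uniform distribution again (for U3)---are precisely the content the paper appeals to.
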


\subsection{Comparison with Typical Models of BI}

This subsection contains a more detailed comparison with other models of {BI};
readers who are primarily interested in the separation logic can safely skip
ahead to \cref{sec:psl}.

Our model of BI is strongly inspired by the standard heap model. There, worlds
are partial maps from heap locations to values and the main atomic assertion $e
\mapsto e'$ (``points-to'') indicates that in the current heap, the location
stored in expression $e$ holds the value denoted by $e'$ in the current store. A
separating conjunction of two points-to assertions $e \mapsto e' \sepand f
\mapsto f'$ indicates that the addresses held in $e$ and $f$ do not alias. This
separation property cannot be deduced syntactically---two expressions mentioning
different variables may refer to the same heap location in the current
store---but it is crucial for local reasoning in the presence of aliasing.

Our probabilistic model is designed to capture a fundamentally different notion
of separation that is natural to the probabilistic setting. The randomized
portion of the program state plays the role of the heap in the heap model, but
randomized variables are not heap-allocated. Accordingly, the names of
randomized variables are fixed and it is possible to syntactically determine
when two assertions refer to the same variable. However, it is \emph{not} always
possible to determine when two assertions refer to \emph{probabilistically
independent} variables---for instance, the assertion $\Unif[x] \land \Unif[y]$
holds in any memory where $x$ and $y$ are uniform, but $x$ and $y$ may be
correlated. The stronger property of probabilistic independence---the notion of
non-aliasing that probabilistic BI is designed to model---is captured by the
assertion $\Unif[x] \sepand \Unif[y]$.

As we have noted, our semantics is intuitionistic. This aspect stems from our
choice of a non-trivial partial order $\sqsubseteq$ over worlds. If this order
is taken to be discrete, relating only identical elements, and atomic formulas
are interpreted exactly, referring to the entire memory rather than a fragment,
we would arrive at a classical or \emph{Boolean} semantics for probabilistic {BI}.
This situation is mirrored in heap models of BI, where the classical logic BBI
has a semantics with a discrete order.

For heap models, BBI is more precise than BI---it supports atomic assertions
that are not preserved under heap extension, like $\text{emp}$ for empty heap,
and BI formulas $\phi$ can be recovered by BBI formulas $\phi \sepand \top$.  In
the probabilistic setting, however, a classical logic seems to run into trouble.
For instance, it is \emph{not} the case that a formula $\phi$ valid in $(\sigma,
\mu)$ under our semantics can be directly translated into a classical semantics:
$(\sigma, \mu) \models \phi \sepand \top$ would state that the domain of the
distribution modeling $\phi$ is probabilistically independent of all other
variables, which does not follow from $(\sigma, \mu) \models \phi$. We leave the
development of a classical version of probabilistic BI for future work.
 
\section{A Probabilistic Separation Logic}%
\label{sec:psl}

We now have all the ingredients needed for our separation logic {\SYSTEM}.
First, the judgments.

\subsection{Judgments and Validity}

\begin{definition}
  \SYSTEM{} judgments have the form $\psl{\phi}{c}{\psi}$ where $\phi$ and
  $\psi$ are probabilistic BI formulas. Such a judgment is \emph{valid}, denoted
  $\models \psl{\phi}{c}{\psi}$, if for all configurations $(\sigma, \mu) \in
  \DMem[\DVar] \times \Dist(\RMem[\RVar])$ satisfying $(\sigma, \mu) \models
  \phi$, we have $\denot{c}(\sigma, \mu) \models \psi$.
\end{definition}

We have defined validity to quantify over only input states with all variables
in the domain. Unlike in separation logic, programs do not allocate variables
and memory faults are not possible, so there is no reason to consider behaviors
from partial configurations in the program logic.

\subsection{Proof Rules: Deterministic Constructs}

\begin{figure}
  \begin{mathpar}
    \hspace*{1cm}\inferrule*[Left=DAssn]
    {~}
    { \vdash \psl{\psi[e_d/x_d]}{\Assn{x_d}{e_d}}{\psi} }
    \and
    \inferrule*[Left=Skip]
    {~}
    { \vdash \psl{\phi}{\Skip}{\phi} }
    \and
    \inferrule*[Left=Seqn]
    { \vdash \psl{\phi}{c}{\psi} \\ \vdash \psl{\psi}{c'}{\eta} }
    { \vdash \psl{\phi}{\Seq{c}{c'}}{\eta} }
    \\
    \inferrule*[Left=DCond]
    { \vdash \psl{\phi \land b = \ktt}{c}{\psi}
    \\\\ \vdash \psl{\phi \land b = \kff }{c'}{\psi} }
    { \vdash \psl{\phi}{\DCond{b}{c}{c'}}{\psi} }
    \and
    \inferrule*[Left=DLoop]
    { \vdash \psl{\phi \land b = \ktt}{c}{\phi} }
    { \vdash \psl{\phi}{\DWhile{b}{c}}{\phi \land b = \kff} }
  \end{mathpar}
  \caption{Proof rules: deterministic constructs}%
  \label{fig:rules-det}
\end{figure}

We introduce the proof system of \SYSTEM{} in three stages. First, we consider
the deterministic constructs in \cref{fig:rules-det}. The rule \rname{DAssn} is
the usual Hoare rule for assignments, but it is only sound for assignments to
\emph{deterministic} variables. Otherwise, the rules are as expected.

When proving judgments of for-loops, we will use the following derived rule:
\[
  \inferrule*[Left=DFor]
  { \vdash \psl{\phi}{c}{\phi[i + 1/i]} \\ \FV(N) \cap \MV(c) = \emptyset }
  { \vdash \psl{\phi[1/i]}{\DFor{i}{1}{N}{c}}{\phi[N+1/i]} }
\]
$\MV(c)$ is the set of variables that may be modified by $c$; we defer the
formal definition to \cref{def:var-cond}, when we discuss the frame rule.

\subsection{Proof Rules: Probabilistic Constructs}

\begin{figure}
  \begin{mathpar}
    \inferrule*[Left=RAssn]
    { x_r \notin \FV(e_r) }
    { \vdash \psl{\top}{\Assn{x_r}{e_r}}{x_r \sim e_r} }
    \and
    \inferrule*[Left=RSamp]
    {~}
    { \vdash \psl{\top}{\Rand{x_r}{\Unif_S}}{\Unif_S[x_r]} }
    \\
    \inferrule*[Left=RDCond]
    { \vdash \psl{\phi \land b \sim \ktt }{c}{\psi}
      \\ \vdash \psl{\phi \land b \sim \kff }{c'}{\psi}
    \\ \models \phi \to (b \sim \ktt \lor b \sim \kff) }
    { \vdash \psl{\phi}{\RCond{b}{c}{c'}}{\psi} }
    \\
    \inferrule*[Left=RCond]
    { \vdash \psl{\phi \sepand b \sim \ktt}{c}{\psi \sepand b \sim \ktt}
      \\ \vdash \psl{\phi \sepand b \sim \kff}{c'}{\psi \sepand b \sim \kff}
    \\ \psi \in \SP }
    { \vdash \psl{\phi \sepand \Dist[b]}{\RCond{b}{c}{c'}}{\psi \sepand \Dist[b] } }
  \end{mathpar}
  \caption{Proof rules: probabilistic constructs}%
  \label{fig:rules-prob}
\end{figure}

\cref{fig:rules-prob} presents the proof rules for randomized operations.
\rname{RAssn} and \rname{RSamp} are for randomized assignment and random
sampling, respectively; in contrast to \rname{DAssn}, these rules reason
forwards. Both rules are presented in their ``local'' form, where the
pre-condition is trivial. We will soon derive ``global'' variants, with general
pre-conditions, from the structural rules.

There are two rules for randomized conditionals. \rname{RDCond} resembles
\rname{DCond}, with a side-condition to ensure that the randomized guard $b$ is
deterministic. \rname{RCond} applies when the guard is truly probabilistic, and
it has two unusual aspects. First, the pre-condition in the conclusion requires
the guard to be separated from the rest of the pre-condition; that is, the guard
must be probabilistically independent of the portion of the randomized memory
satisfying $\phi$. This separation is crucial for $\phi$ to be soundly used as a
pre-condition in each branch: the input distribution to each branch is obtained
by \emph{conditioning} on the value of the guard expression in the input
distribution. This operation may not preserve $\phi$, even if $\phi$ and the
guard have no variables in common---this is a particular feature of the
probabilistic setting.

\begin{example}
  Suppose that $x, y, z$ are randomized boolean variables, and let $\mu$ be the
  output of:
  \[
    \Rand{x}{\Unif_\mathbb{B}}; \Rand{y}{\Unif_\mathbb{B}}; \Assn{z}{x \lor y}
  \]
  In words, $x$ and $y$ store the results of two fair coin flips, and $z$ stores
  the value of $x \lor y$. Then $x$ and $y$ are independent in $\mu$, i.e.,
  $\Dist[x] \sepand \Dist[y]$ holds in $\mu$. However, if $S \subseteq
  \RMem[\RVar]$ is the set of all randomized memories where $z = \ktt$,
  representing the event that $z$ is true, then $\Dist[x] \sepand \Dist[y]$ does
  not hold in $\dcond{\mu}{S}$. Intuitively, if we know $z = \ktt$, then $x$ and
  $y$ are correlated: if one is false, then the other must be true.
\end{example}

\rname{RCond} also shows that the guard remains independent of the branch
post-condition assuming the branches do not modify the guard, and the branch
post-condition determines a \emph{unique} portion of the distribution over
randomized memories. Formally, we adapt the following class of assertions from
separation logic~\citep{reynolds-SL}.

\begin{definition}\label{def:SP}
  A formula $\phi$ is \emph{supported} ($\SP$) if for any deterministic memory
  $\sigma$, there exists a randomized memory $\mu$ such that if $(\sigma, \mu')
  \models \phi$, then $\mu \sqsubseteq \mu'$.
\end{definition}

The following syntactic conditions ensure $\SP$.

\begin{lemma}\label{lem:SP}
  The following assertions are $\SP$:
  \begin{align*}
    \eta ::= p_d
    \mid \RVar \sim \DExp
    \mid \Unif_S[\RVar]
    \mid \eta \sepand \eta
  \end{align*}
\end{lemma}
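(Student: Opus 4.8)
The plan is to proceed by structural induction on the grammar for $\eta$, in each case exhibiting the \emph{least} random marginal demanded by the formula. Recall from \cref{def:SP} that $\eta$ is $\SP$ exactly when, for every deterministic store $\sigma$, there is a distribution $\mu$ (over some subset of the random variables) sitting below every $\mu'$ with $(\sigma, \mu') \models \eta$ in the order $\sqsubseteq$. Since $\mu \sqsubseteq \mu'$ means precisely that $\dom(\mu) \subseteq \dom(\mu')$ and $\mu = \pi_{\dom(\mu'), \dom(\mu)}(\mu')$, the witness $\mu$ is just a canonical marginal forced by $\eta$. Whenever no $\mu'$ satisfies $\eta$ at a given $\sigma$ (e.g.\ a false deterministic proposition), the implication is vacuous and we may take $\mu = \delta_0$, so it suffices to treat stores admitting a model.

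For the three atomic cases I would read the witness off the semantics of the atomic formula. For $p_d$, the constraint $(\sigma, \mu') \models p_d$ reduces to $\sigma \models_E p_d$ and says nothing about the random part, so $\mu = \delta_0$ works, using that $\delta_0$ has empty domain and is the projection of every distribution. For $x_r \sim e_d$, writing $v \triangleq \denot{e_d}\sigma$, the semantics forces $x_r \in \dom(\mu')$ and $m(x_r) = v$ for all $m \in \supp(\mu')$, so the $x_r$-marginal of any model is the point mass $\delta_{[x_r \mapsto v]}$; this is the witness. For $\Unif_S[x_r]$, the semantics forces the $x_r$-marginal of every model to be uniform on $S$, so the witness is the uniform distribution on $\{[x_r \mapsto s] : s \in S\} \subseteq \RMem[\{x_r\}]$. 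In each case $\mu = \pi_{\dom(\mu'), \dom(\mu)}(\mu')$ by construction, giving $\mu \sqsubseteq \mu'$.

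For the inductive step $\eta_1 \sepand \eta_2$, let $\mu_1, \mu_2$ be the witnesses supplied by the induction hypothesis at $\sigma$, and split on whether $\dom(\mu_1) \cap \dom(\mu_2) = \emptyset$. If the domains overlap, any model $(\sigma, \mu') \models \eta_1 \sepand \eta_2$ unpacks (by the $\sepand$ clause) into sub-models $(\sigma_i, \nu_i) \models \eta_i$ with disjoint random domains; but the induction hypothesis gives $\mu_i \sqsubseteq \nu_i$, hence $\dom(\mu_i) \subseteq \dom(\nu_i)$, contradicting disjointness of $\dom(\nu_1)$ and $\dom(\nu_2)$, so $\eta_1 \sepand \eta_2$ is unsatisfiable and $\mu = \delta_0$ suffices. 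If the domains are disjoint, I claim $\mu \triangleq \mu_1 \otimes \mu_2$ is the witness. Given a model with decomposition $(\sigma_i, \nu_i)$ as above, I would first use Kripke monotonicity to replace each partial store $\sigma_i$ by the full $\sigma$ (noting $(\sigma_i, \nu_i) \sqsubseteq (\sigma, \nu_i)$), so the induction hypothesis applies at $\sigma$ and yields $\mu_i = \pi_{\dom(\nu_i), \dom(\mu_i)}(\nu_i)$. The $\sepand$ clause gives $\nu_1 \otimes \nu_2 = \pi_{\dom(\mu'), \dom(\nu_1) \cup \dom(\nu_2)}(\mu')$; composing projections and using that the marginal of a product over separated coordinate blocks factors as the product of the marginals, I obtain $\pi_{\dom(\mu'), \dom(\mu_1) \cup \dom(\mu_2)}(\mu') = \pi_{\dom(\mu_1)}(\nu_1) \otimes \pi_{\dom(\mu_2)}(\nu_2) = \mu_1 \otimes \mu_2$, which is exactly $\mu \sqsubseteq \mu'$.

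The main obstacle I anticipate is the $\sepand$ case, and within it the bookkeeping around the two components rather than any deep idea: one must (i) rule out overlapping witness domains through the unsatisfiability argument so that $\mu_1 \otimes \mu_2$ is even defined, (ii) reconcile the possibly-partial deterministic stores $\sigma_i$ coming from the $\sepand$ decomposition with the fixed $\sigma$ of the $\SP$ definition via Kripke monotonicity, and (iii) justify the marginal-of-a-product factorization. None of these is hard, but the argument lives or dies on stating the projection identities with the domains tracked precisely, so I would keep careful account of $\dom(\mu_i)$, $\dom(\nu_i)$, and $\dom(\mu')$ throughout.
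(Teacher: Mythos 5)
Your proof is correct and follows essentially the same route as the paper's: structural induction on $\eta$ with the canonical least witnesses ($\delta_0$ for $p_d$, a point mass for $\RVar \sim \DExp$, the uniform distribution on $\{x_r\}$ for $\Unif_S[\RVar]$, and the product of the two witnesses for $\sepand$). The paper dismisses the inductive case as ``straightforward''; your treatment of it---the unsatisfiability argument when the witness domains overlap, the Kripke-monotonicity adjustment of the partial stores, and the marginal-of-a-product factorization---is exactly the bookkeeping that sketch leaves implicit.
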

\begin{proof}
  By induction on $\eta$. The base cases are immediate: $p_d$ holds in the
  unique randomized memory with empty domain, while $x \sim e$ and $\Unif_S[x]$
  hold in unique randomized memories with domain $\{ x \}$. The inductive case
  is also straightforward.
\end{proof}

\begin{example}[Non-$\SP$ assertions]
  A simple example of an assertion that is not covered by \cref{lem:SP} is $\phi
  \triangleq \Dist[x]$, where $x \in \RVar$ is a boolean randomized variable. It
  is easy to see that $\phi$ is not $\SP$; for instance, $\phi$ holds in two
  incomparable distributions $\delta_{(x \mapsto \ktt)}$ and $\delta_{(x \mapsto
  \kff)}$. Indeed, allowing $\phi$ as a branch post-condition in \rname{RCond}
  would be unsound. Consider the following program:
  \[
    c \triangleq \RCond{b}{\Assn{x}{\ktt}}{\Assn{x}{\kff}}
  \]
  Clearly, $\phi \sepand b \sim \ktt$ and $\phi \sepand b \sim \kff$ are sound
  post-conditions for the two branches. But $c$ is semantically equal to
  $\Assn{x}{b}$, and $\Dist[x] \sepand \Dist[b]$ is not a sound post-condition.
\end{example}

In \fullref{app:CM}, we consider a variant of \rname{RCond} that proves a weaker
post-condition, but relaxes the requirement on $\psi$ and allows the branches to
modify the guard.

\subsection{Structural Rules}

\begin{figure}
  \begin{mathpar}
    \inferrule*[Left=Weak]
    { \vdash \psl{\phi}{c}{\psi} \\ \models \phi' \to \phi \land \psi \to \psi' }
    { \vdash \psl{\phi'}{c}{\psi'} }
    \and
    \inferrule*[Left=True]
    {~}
    { \vdash \psl{\top}{c}{\top} }
    \\
    \inferrule*[Left=Conj]
    { \vdash \psl{\phi_1}{c}{\psi_1} \\
    \vdash \psl{\phi_2}{c}{\psi_2} }
    { \vdash \psl{\phi_1 \land \phi_2}{c}{\psi_1 \land \psi_2} }
    \and
    \inferrule*[Left=Case]
    { \vdash \psl{\phi_1}{c}{\psi_1} \\
    \vdash \psl{\phi_2}{c}{\psi_2} }
    { \vdash \psl{\phi_1 \lor \phi_2}{c}{\psi_1 \lor \psi_2} }
    \\
    \inferrule*[Left=RCase]
    { \vdash \psl{\phi \sepand b \sim \ktt}{c}{\psi \sepand b \sim \ktt}
      \\ \vdash \psl{\phi \sepand b \sim \kff}{c}{\psi \sepand b \sim \kff}
    \\ \psi \in \SP }
    { \vdash \psl{\phi \sepand \Dist[b]}{c}{\psi \sepand \Dist[b]} }
    \and
    \inferrule*[Left=Const]
    { \vdash \psl{\phi}{c}{\psi} \\ \FV(\eta) \cap \MV(c) = \emptyset }
    { \vdash \psl{\phi \land \eta}{c}{\psi \land \eta} }
    \\
    \hspace*{1cm}\inferrule*[Left=Frame]
    { \vdash \psl{\phi}{c}{\psi} \\
      \FV(\eta) \cap \MV(c) = \emptyset \\
      \FV(\psi) \subseteq T \cup \RV(c) \cup \WV(c) \\
    \models \phi \to \Dist[T \cup \RV(c)]}
    { \vdash \psl{\phi \sepand \eta}{c}{\psi \sepand \eta} }
  \end{mathpar}
\caption{Structural rules}%
\label{fig:rules-struct}
\end{figure}

\cref{fig:rules-struct} collects the final group of rules, the structural rules.
\rname{Weak}, \rname{True}, \rname{Conj}, and \rname{Case} are standard;
\rname{RCase} is an analog of \rname{RCond}. The last two rules are more
interesting. \rname{Const} is the rule of constancy from Hoare logic, which
states that formulas $\eta$ that do not mention any of $c$'s modified variables
$\MV(c)$ may be conjoined to the pre- and post-condition. This rule is
\emph{not} sound in standard separation logic---motivating the separating
conjunction and the frame rule---but it \emph{is} sound in {\SYSTEM}: writes
cannot invalidate assertions about other variables.

But, the post-condition in \rname{Const} does not ensure that $\psi$ and $\eta$
refer to probabilistically independent variables. For this stronger guarantee,
we need \rname{Frame}. The side conditions mention several classes of variables.
Roughly speaking, $\RV(c)$ is the set of variables that $c$ may read from, while
$\WV(c)$ is the set of variables that $c$ \emph{must} write to (before possibly
reading from). $\MV(c)$ is the set of variables that $c$ \emph{may} write to, so
$\WV(c)$ is a subset of $\MV(c)$.

We can approximate these sets using a simple syntactic condition.

\begin{definition}\label{def:var-cond}
  $\RV, \WV, \MV$ are defined as follows:
  \begin{mathpar}
    \RV(\Assn{x_r}{e_r}) \triangleq \FV(e_r) \and
    \RV(\Rand{x_r}{\Unif_S}) \triangleq \emptyset \and
    \RV(\Seq{c}{c'}) \triangleq \RV(c) \cup (\RV(c') \setminus \WV(c)) \and
    \RV(\RCond{b}{c}{c'}) \triangleq \FV(b) \cup \RV(c) \cup \RV(c') \and
    \RV(\DWhile{b}{c}) \triangleq \RV(c)
  \end{mathpar}
  \hrule
  \begin{mathpar}
    \WV(\Assn{x_r}{e_r}) \triangleq \{ x_r \} \setminus \FV(e_r) \and
    \WV(\Rand{x_r}{\Unif_S}) \triangleq \{ x_r \}  \and
    \WV(\Seq{c}{c'}) \triangleq \WV(c) \cup (\WV(c') \setminus \RV(c)) \and
    \WV(\RCond{b}{c}{c'}) \triangleq (\WV(c) \cap \WV(c')) \setminus \FV(b)
  \end{mathpar}
  \hrule
  \begin{mathpar}
    \MV(\Assn{x_r}{e}) \triangleq \{ x_r \} \and
    \MV(\Rand{x_r}{\Unif_S}) \triangleq \{ x_r \} \and
    \MV(\Seq{c}{c'}) \triangleq \MV(c) \cup \MV(c') \and
    \MV(\RCond{b}{c}{c'}) \triangleq \MV(c) \cup \MV(c') \and
    \MV(\DWhile{b}{c}) \triangleq \MV(c)
  \end{mathpar}
\end{definition}

Other analyses are possible, so long as non-modified variables are preserved
from input to output, and output modified variables depend only on input read
variables.

\begin{lemma}[Soundness for \RV, \WV, \MV]\label{lem:fv}
  Let $(\sigma', \mu') = \denot{c}(\sigma, \mu)$, and let $S_r = \RV(c), S_w =
  \WV(c), S_c = \RVar \setminus \MV(c)$. Then:
  \begin{enumerate}
    \item Variables outside of $\MV(c)$ are not modified: $\pi_{S_c}(\mu') =
      \pi_{S_c}(\mu)$.
    \item The sets $S_r$ and $S_w$ are disjoint.
    \item There exists $F : \RMem[S_r] \to \Dist(\RMem[\MV(c)])$ such that $\mu'
      = \dbind(\mu, m \mapsto F(\pi_{S_r}(m)) \otimes \dunit(\pi_{S_c}(m)))$.
  \end{enumerate}
\end{lemma}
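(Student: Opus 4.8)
The plan is to prove the three statements simultaneously by structural induction on $c$, since the compositional case of (3) consumes the factorizations from (3), the disjointness from (2), and the marginal preservation from (1) for its subcommands. I would settle (2) first, as it is pure set algebra with no semantics, then (1), and finally (3), which carries essentially all of the work. In fact (1) drops out of (3): projecting the factored form onto $S_c$ annihilates the $F$-component (supported on $\RMem[\MV(c)]$, disjoint from $S_c$) and leaves $\dbind(\mu, m \mapsto \dunit(\pi_{S_c}(m))) = \pi_{S_c}(\mu)$, using that $\pi$ commutes with $\dbind$; I would nonetheless keep a direct inductive proof of (1) on hand, since it holds robustly (the marginal over unmodified variables is preserved even where the full factorization is delicate). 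Throughout I treat the omitted deterministic cases — $\Skip$, $\Assn{x_d}{e_d}$, and the deterministic conditional $\DCond{b}{c}{c'}$ — by the evident clauses (empty read/write/modify sets for the first two, and the same union/intersection pattern as the randomized conditional for $\DCond{b}{c}{c'}$), and take $\WV(\DWhile{b}{c}) = \emptyset$, so that (2) is immediate for loops.

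For (2) the atomic cases are direct: $\WV \cap \RV = (\{x_r\}\setminus\FV(e_r)) \cap \FV(e_r) = \emptyset$ for $\Assn{x_r}{e_r}$, and $\RV = \emptyset$ for $\Rand{x_r}{\Unif_S}$. For $\Seq{c}{c'}$ I expand $\RV(\Seq{c}{c'}) \cap \WV(\Seq{c}{c'})$ into four cross terms: two vanish by the set differences in the definitions, and the last, contained in $\RV(c') \cap \WV(c')$, vanishes by the hypothesis for $c'$. The conditional case uses $\WV(\RCond{b}{c}{c'}) \subseteq \WV(c) \cap \WV(c')$ together with both branch hypotheses.

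For (3) the base cases are read off the semantics. For $\Assn{x_r}{e_r}$, since $\denot{e_r}(\sigma,-)$ depends only on the $\FV(e_r)$-coordinates, $F(m_r) \triangleq \dunit(x_r \mapsto \denot{e_r}(\sigma, m_r))$ works; for $\Rand{x_r}{\Unif_S}$ a constant $F$ returning the uniform law on $x_r$ works; for $\Skip$ the empty $F$ returns $\delta_0$, the identity of $\otimes$. The real compositional case is $\Seq{c}{c'}$: by associativity of $\dbind$ I rewrite $\mu'$ as $\dbind(\mu, m \mapsto \dbind(F_1(\pi_{\RV(c)}(m)) \otimes \dunit(\pi_{\RVar\setminus\MV(c)}(m)), g'))$, where $F_1,g'$ come from the hypotheses. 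The key observation is that the $c'$-reads split into those landing in $\MV(c)$, whose values are produced by $F_1$ and are therefore already functions of $\pi_{\RV(c)}(m)$, and those landing outside $\MV(c)$, which are copied verbatim from $m$ and so depend only on the $\RV(c')\setminus\MV(c)$ coordinates — a subset of $\RV(c')\setminus\WV(c)$ since $\WV(c)\subseteq\MV(c)$. Collapsing the inner $\dbind$ through the tensor yields a single $F$ on $\RMem[\RV(c)\cup(\RV(c')\setminus\WV(c))]$, the stated read set being a sound (if loose) over-approximation of the true dependency; the deterministic loop then unfolds to a finite sequence and reduces to this case, the read and modify sets being stable under $\Seq{c}{c}$.

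The main obstacle is the conditional case of (3), for both $\RCond{b}{c}{c'}$ and $\DCond{b}{c}{c'}$. There the output is a convex combination (respectively a deterministic selection) of the two branch outputs, each run on a \emph{conditioned} input $\dcond{\mu}{\denot{b}=\true}$ or $\dcond{\mu}{\denot{b}=\false}$, while the hypotheses only factor the branches individually. The delicate point is that a variable in $\MV(c)\cup\MV(c')$ modified in only one branch is passed through unchanged in the other branch, so it keeps its input value \emph{together with whatever correlations it had with unmodified variables in $S_c$}; such a variable cannot be absorbed into an $F$ that reads only $\RV(c)\cup\RV(c')$, and for the tensor factorization to survive it must itself be counted among the read variables. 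Making the convex-combination argument go through therefore hinges on correctly identifying these asymmetrically-modified variables as reads and then checking that conditioning on the guard followed by recombination reconstructs precisely the factored form; I expect this bookkeeping, rather than any genuinely analytic difficulty, to be the crux.
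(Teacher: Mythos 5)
First, a point of reference: the paper never actually proves this lemma. It is stated after \cref{def:var-cond} and invoked (in a projected form, ``by soundness of $\RV$ and $\WV$\dots'') inside the \rname{Frame} case of the soundness theorem, but no proof appears in the main text or in \cref{app:proofs}, so your proposal must stand on its own. Its architecture is sound: part (2) is indeed pure set algebra (though note that of the four cross terms in the sequencing case, the term $\RV(c)\cap\WV(c)$ needs the hypothesis for $c$, which your accounting omits); part (1) does follow by projecting part (3) and also holds by direct induction; and your kernel-composition treatment of assignment, sampling, sequencing, and deterministic loops is correct, including the observation that $\RV(c)\cup(\RV(c')\setminus\WV(c))$ is a sound over-approximation of the true dependency set $\RV(c)\cup(\RV(c')\setminus\MV(c))$.

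The gap is the conditional case, and it is not bookkeeping: part (3) is \emph{false} for conditionals under \cref{def:var-cond}, for exactly the reason you identify. Take $c_0 \triangleq \RCond{b}{\Assn{x}{0}}{\Skip}$, so that $\MV(c_0)=\{x\}$, $\RV(c_0)=\{b\}$, $\WV(c_0)=\emptyset$, and $S_c=\RVar\setminus\{x\}$; let $\mu$ make $b$ uniform and independent of $(x,y)$, with $x=y$ uniform. Any distribution of the form
\[
  \dbind(\mu, m \mapsto F(\pi_{\{b\}}(m)) \otimes \dunit(\pi_{S_c}(m)))
\]
makes $x$ conditionally independent of $y$ given $b$; but in $\denot{c_0}(\sigma,\mu)$, conditioned on $b=\kff$, the false branch has copied $x$, so $x=y$ holds with probability one while $y$ is not almost-surely constant. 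Hence no such $F$ exists (parts (1) and (2) survive; only (3) breaks). Your proposed remedy---counting variables modified in only one branch as reads---is the right one, but it is not available: $\RV(\RCond{b}{c}{c'})=\FV(b)\cup\RV(c)\cup\RV(c')$ does not contain $\MV(c)\mathbin{\triangle}\MV(c')$. So your last paragraph describes an amendment of \cref{def:var-cond}, not a proof of the lemma as stated. With the read set of conditionals enlarged by $\MV(c)\mathbin{\triangle}\MV(c')$---which keeps (2) true, since that set is disjoint from $\WV(c)\cap\WV(c')$, a superset of $\WV(\RCond{b}{c}{c'})$---your conditioning-then-recombination argument does close the case (variables in $\MV(c)\cap\MV(c')$ are already covered by the branch hypotheses), and the same repair is needed for $\DCond{b}{c}{c'}$ under your proposed clauses. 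You should state this explicitly: as written, the inductive step for conditionals cannot be completed, and what you have found is a defect in the paper's own definition--lemma pair (which the paper leaves unproved) rather than a removable difficulty in your proof.
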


Returning to \rname{Frame}, we consider the side-conditions one by one.  The
first side-condition is as in \rname{Const}; the framing condition $\eta$ cannot
mention any possibly-modified variables. The second condition states that the
post-condition $\psi$ can only mention variables that are (i) in the footprint
of $\phi$, or (ii) written by $c$. The last condition states that any portion of
the randomized memory satisfying $\phi$ must have a footprint containing $T$ and
all variables read by $c$. Intuitively, these side-conditions ensure that if
the framing condition (i) does not mention modified variables and (ii) is
initially independent of all read variables, then it is independent of all
variables in $\phi$ as well as all written variables---these variables can only
depend on read variables, which were initially all independent from the framing
condition.

\begin{example}
  Using \rname{Const}, we can derive the following global version of
  \rname{RAssn}:
  \[
    \inferrule*[Left=RAssn*]
    { x_r \notin \FV(\phi, e_r) }
    { \vdash \psl{\phi}{\Assn{x_r}{e_r}}{\phi \land x_r \sim e_r} }
  \]
  The set of modified variables is $\{ x_r \}$.
\end{example}

\begin{example}
  Using \rname{Frame}, we can derive the following global version of
  \rname{RSamp}:
  \[
    \inferrule*[Left=RSamp*]
    { x_r \notin \FV(\phi) }
    { \vdash \psl{\phi}{\Rand{x_r}{\Unif_S}}{\phi \sepand \Unif_S[x_r]} }
  \]
  There are no read variables, and the modified and written variables are both
  $\{ x_r \}$.
\end{example}

\subsection{Soundness}

As expected, the proof system is sound.

\begin{restatable}[Soundness]{theorem}{THMsoundness}\label{thm:soundness}
  If $\vdash \psl{\phi}{c}{\psi}$ is derivable, then it is valid: $\models
  \psl{\phi}{c}{\psi}$.
\end{restatable}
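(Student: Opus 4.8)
The plan is to proceed by induction on the derivation of $\vdash \psl{\phi}{c}{\psi}$, showing for each proof rule that validity of the premise judgments (the induction hypotheses) together with the semantic side conditions entails validity of the conclusion. Concretely, I fix an arbitrary full-domain input configuration $(\sigma, \mu) \models \phi$ and trace it through the semantics of \cref{fig:semantics}, verifying that $\denot{c}(\sigma, \mu) \models \psi$. Most rules are routine and follow the pattern of ordinary Hoare logic. \rname{Skip} is immediate and \rname{Seqn} follows from $\denot{\Seq{c}{c'}} = \denot{c'} \circ \denot{c}$. \rname{DAssn} uses the substitution lemma relating $\sigma[x_d \mapsto \denot{e_d}\sigma] \models_E \psi$ to $\sigma \models_E \psi[e_d/x_d]$, noting that only the store is touched. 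For \rname{DCond} the guard evaluates to a fixed boolean on $\sigma$, so exactly one branch fires and the matching premise applies, while \rname{DLoop} uses our standing termination assumption: the loop unrolls to a finite composition of length $N(\sigma)$ and soundness follows by a side induction on $N(\sigma)$ with the premise as invariant. Among the structural rules, \rname{Weak} holds because validity of $\phi' \to \phi$ yields pointwise implication (take $m' = m$ in the clause for $\to$), \rname{True} is trivial, and \rname{Conj}, \rname{Case} follow directly from the Kripke clauses for $\land$ and $\lor$. The local rules \rname{RAssn} and \rname{RSamp} are discharged by unfolding the semantics and checking the defining conditions of $\denot{x_r \sim e_r}$ and $\denot{\Unif_S[x_r]}$, using $x_r \notin \FV(e_r)$ in the former.

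The genuinely probabilistic rules need more care. For \rname{RDCond}, the side condition $\models \phi \to (b \sim \ktt \lor b \sim \kff)$ forces the guard to be constant on $\supp(\mu)$, so $\rho = \mu(\denot{b}\sigma = \true)$ is $0$ or $1$, the convex combination collapses to one branch, and the corresponding premise applies. The central case is \rname{RCond} (and its structural twin \rname{RCase}). The precondition $\phi \sepand \Dist[b]$ supplies a factorization $\mu = \mu_\phi \otimes \mu_b$ whose $b$-factor carries $\FV(b)$, disjoint from the footprint of $\phi$. The key observation is that conditioning on an independent guard leaves the $\phi$-factor intact, $\dcond{\mu}{\denot{b}\sigma = \true} = \mu_\phi \otimes (\dcond{\mu_b}{\denot{b}\sigma = \true})$, so each branch receives an input satisfying $\phi \sepand b \sim \ktt$ (resp.\ $\phi \sepand b \sim \kff$), and the premises yield outputs satisfying $\psi \sepand b \sim \ktt$ and $\psi \sepand b \sim \kff$. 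To recombine under the convex combination I invoke $\psi \in \SP$: supportedness pins down a \emph{unique} $\psi$-factor, so both branch outputs share a common $\psi$-portion and the combination factors as that portion tensored with a convex combination of the two guard-portions, which satisfies $\Dist[b]$; the edge cases $\rho \in \{0, 1\}$ fall out of the definition of convex combination.

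The remaining delicate cases are \rname{Const} and \rname{Frame}, both resting on \cref{lem:fv}. For \rname{Const}, part~(1) of that lemma ($\pi_{S_c}(\mu') = \pi_{S_c}(\mu)$ for $S_c = \RVar \setminus \MV(c)$) shows the projection onto $\FV(\eta)$ is preserved, so by \cref{lem:bi-restriction} the frame $\eta$ survives from precondition to postcondition. \rname{Frame} is the main obstacle and the most intricate step. From $(\sigma, \mu) \models \phi \sepand \eta$ I obtain $\mu = \mu_\phi \otimes \mu_\eta$ with $\dom(\mu_\eta)$ disjoint from the footprint of $\phi$; weakening $\phi \sepand \eta \to \phi$ shows the full input satisfies $\phi$, so the premise gives $\denot{c}(\sigma, \mu) = (\sigma', \mu') \models \psi$ outright. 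The real content is that $\eta$ survives and is now \emph{separated} from $\psi$. The side condition $\models \phi \to \Dist[T \cup \RV(c)]$ places $\RV(c)$ inside $\dom(\mu_\phi)$, hence disjoint from $\dom(\mu_\eta)$, while $\FV(\eta) \cap \MV(c) = \emptyset$ places $\dom(\mu_\eta)$ in the unmodified set $S_c$. Feeding this factorization into the closed form $\mu' = \dbind(\mu, m \mapsto F(\pi_{S_r}(m)) \otimes \dunit(\pi_{S_c}(m)))$ of \cref{lem:fv}(3), the values read by $F$ depend only on the $\mu_\phi$-coordinates while the $\mu_\eta$-coordinates are copied through the $\dunit$ factor; since they were independent in $\mu$, the $\mu_\eta$-portion stays independent of everything else in $\mu'$ (and unchanged, by part~(1)), so it still satisfies $\eta$. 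Finally, the footprint condition $\FV(\psi) \subseteq T \cup \RV(c) \cup \WV(c)$ forces $\FV(\psi)$ to be disjoint from $\dom(\mu_\eta)$, so by \cref{lem:bi-restriction} the complementary factor satisfies $\psi$, and splitting $\mu'$ along $\dom(\mu_\eta)$ witnesses $\psi \sepand \eta$. The hard part throughout is the variable bookkeeping and justifying that input independence genuinely propagates through the bind in \cref{lem:fv}(3); once that factorization is in hand the remaining checks are mechanical.
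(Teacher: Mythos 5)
Your proposal is correct and takes essentially the same approach as the paper's proof: induction on the derivation, handling \rname{RCond} by factoring the precondition, observing that conditioning preserves the independent $\phi$-factor, and using the $\SP$ condition to recombine the two branch outputs, and handling \rname{Frame} via the factorization supplied by \cref{lem:fv} with independence propagated through the distribution bind. The only cosmetic difference is that the paper isolates that bind/independence propagation step as a standalone auxiliary lemma, whereas you flag it informally as the key remaining check; the variable bookkeeping otherwise matches the paper's case analysis.
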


We discuss other meta-theoretical properties in \cref{sec:conclusion}.

\section{Examples: Cryptographic Security}%
\label{sec:examples}

We demonstrate our logic by proving security for several cryptographic schemes.
As we will see, our logic can express and prove two distinct forms of
information-theoretic security properties. For more convenient encoding of the
protocols, we will work with an extended language with arrays, which can be
indexed or assigned to via $x[i]$, and finite tuples, which can be indexed or
assigned to via $x.1, x.2$, etc. To write compact assertions about arrays, our
assertions will use big versions of the conjunctions, written $\bigwedge_{i \in
p(i)} \phi(i)$ and $\bigsep_{i \in p(i)} \phi(i)$ where $i$ is a fresh logical
variable and $p$ is deterministic and holds for at most finitely many indices.
In some examples we use assignments of the form $\Assn{x}{e}$ where $x \in
\FV(e)$. Since our assignment rule \rname{Rassn} does not apply here, these
assignments are short for $\Assn{x_f}{e}; \Assn{x}{x_f}$, where $x_f$ is a fresh
temporary variable. We give proof sketches in this section; details can be found
in \fullref{app:examples}.

We will need axioms relating uniformity assertions, pairing, and modular
arithmetic, the main arithmetic operation in our examples. In general, axioms
are strongly dependent on the equational theory of expressions, and it is not
clear how to give a complete axiomatization even for just the modular addition
operator; we give axioms schema that are broadly useful for our examples.

\begin{restatable}{lemma}{LEMaxioms}\label{lem:axioms}
  Let $q \geq 2$ be any integer, and let $\{ x_i \}$ be any finite set of
  distinct variables. The following axiom schema are sound.
  \begin{align}
    &\models \Unif_{S_1}[x_1] \sepand \cdots \sepand \Unif_{S_n}[x_n]
      \leftrightarrow \Unif_{S_1 \times \cdots \times S_n}[(x_1, \dots, x_n)]
      \notag
      \\
    &\models \Unif_{\mathbb{Z}_q} [x_1] \sepand \Dist[x_2] \sepand \cdots
    \sepand \Dist[x_n] \land x_0 \sim x_1 + \cdots + x_n \text{ mod } q \to
      \Unif_{\mathbb{Z}_q}[x_0] \sepand \Dist[x_2] \sepand \cdots \sepand \Dist[x_n]
      \notag
    \\
    \intertext{In particular, we will use two derived axioms (writing $\oplus$ for xor, addition modulo $2$):}
    &\models \Unif_{\mathbb{Z}_2}[x_1] \sepand \Dist[x_2] \land x_0 \sim x_1
    \oplus x_2 \to \Unif_{\mathbb{Z}_2}[x_0] \sepand \Dist[x_2]
    \label{ax:xor} \tag{U4} \\
    &\models \Unif_{\mathbb{Z}_q}[x_1] \sepand \Dist[x_2] \sepand \Dist[x_3] \land x_0 \sim x_1 + x_2 +
    x_3 \text{ mod } q \to \Unif_{\mathbb{Z}_q}[x_0] \sepand \Dist[x_2] \sepand \Dist[x_3]
    \label{ax:mod} \tag{U5}
  \end{align}
  These axioms also hold for expressions with at most one free variables.
\end{restatable}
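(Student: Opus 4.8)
The plan is to reduce each axiom to an elementary distributional identity, using \cref{lem:bi-restriction} to discharge all the bookkeeping about domains and the Kripke order. Since both sides of every implication have the same free random variables, I can restrict attention to configurations $(\sigma,\mu)$ whose randomized domain is exactly the set of variables that appear, after which the semantics of $\sepand$ collapses to a genuine product factorization $\mu = \mu_1 \otimes \cdots \otimes \mu_k$ of the distribution over those variables. This lets me move freely between the syntactic assertions and statements about the underlying product distributions.

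For the first (biconditional) axiom I would unfold both sides at such a restricted configuration. The left-hand side holds iff $\mu$ factors as $\mu_1 \otimes \cdots \otimes \mu_n$ with each $\mu_i$ the uniform distribution on $S_i$ in the coordinate $x_i$, while the right-hand side holds iff the joint law of $(x_1,\dots,x_n)$ is uniform on $S_1 \times \cdots \times S_n$. The two are equated by the single identity $\Unif_{S_1} \otimes \cdots \otimes \Unif_{S_n} = \Unif_{S_1 \times \cdots \times S_n}$ together with its converse, namely that a uniform distribution on a product set has uniform marginals whose product recovers it; both directions are a one-line computation with the weights $1/\prod_i |S_i|$, which I would organize by induction on $n$ from the binary case.

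The second axiom is the crux and amounts to a one-time-pad argument. After restricting to the domain $\{x_0,x_1,\dots,x_n\}$, the hypothesis says the projection to $\{x_1,\dots,x_n\}$ is $\Unif_{\mathbb{Z}_q} \otimes \nu$, where $\nu$ is the joint law of $(x_2,\dots,x_n)$ and $x_1$ is the uniform coordinate, while the conjunct $x_0 \sim x_1 + \cdots + x_n$ pins $x_0$ to a deterministic function of the others on $\supp(\mu)$. Hence the law of $(x_0,x_2,\dots,x_n)$ is the pushforward of $\Unif_{\mathbb{Z}_q} \otimes \nu$ under $T(x_1,x_2,\dots,x_n) = (x_1 + x_2 + \cdots + x_n, x_2, \dots, x_n)$. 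The key observation is that for every fixed value $(b_2,\dots,b_n)$ of $(x_2,\dots,x_n)$ the map $T$ acts in the first coordinate as the shift $x_1 \mapsto x_1 + b_2 + \cdots + b_n$, which is a bijection of $\mathbb{Z}_q$ and therefore preserves $\Unif_{\mathbb{Z}_q}$ (this shift-invariance is exactly the content of axiom (U3)); integrating over $(b_2,\dots,b_n)$ shows the pushforward is again $\Unif_{\mathbb{Z}_q} \otimes \nu$. This exhibits $x_0$ as uniform on $\mathbb{Z}_q$ and independent of $(x_2,\dots,x_n)$, which is the conclusion after reapplying \cref{lem:bi-restriction}. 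The main obstacle is precisely this step: shift-invariance of the uniform law is what makes independence survive, and I must compute the pushforward on the \emph{full} joint law, since the conclusion asserts independence of $x_0$ from the whole tuple $(x_2,\dots,x_n)$ rather than merely from the aggregate $x_2 + \cdots + x_n$.

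Finally, the two displayed derived axioms are immediate instances: (U4) is the case $n=2$, $q=2$, where $+ \bmod 2$ is $\oplus$, and (U5) is the case $n=3$, so neither needs a separate argument. For the closing remark that the axioms hold when the $x_i$ are replaced by expressions with at most one free random variable each, I would observe that each such expression is a function of a single, distinct coordinate, so independence of the coordinates yields independence of the expression values and the same factorization and pushforward computations apply verbatim.
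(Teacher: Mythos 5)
Your proposal is correct and takes essentially the same route as the paper's proof: restriction (\cref{lem:bi-restriction}) reduces $\sepand$ to a genuine product factorization, the first axiom becomes the identity $\Unif_{S_1}\otimes\cdots\otimes\Unif_{S_n}=\Unif_{S_1\times\cdots\times S_n}$, and the second axiom is the same one-time-pad argument---the paper counts preimages (for each realization of $x_2,\dots,x_n$ and each $z\in\mathbb{Z}_q$ exactly one value of $x_1$ yields sum $z$), which is precisely your shift-bijection pushforward, yielding uniformity of the sum and its independence from the joint tuple $(x_2,\dots,x_n)$ before re-splitting via the hypothesis' mutual independence. The only cosmetic difference is that the paper proves the expression-level generalization from the outset rather than appending it at the end.
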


\subsection{Warming Up: The One-Time Pad}

The one-time pad (OTP) is a simple encryption scheme~\cite{katz2014introduction}
enjoying a strong property called \emph{perfect secrecy}. The OTP is a triple of
algorithms $(\mathsf{Gen}, \mathsf{Enc}, \mathsf{Dec})$ parameterized by
$n \in \mathbb{N}$, which determines the key space $\mathcal{K}$, message space
$\mathcal{M}$, and ciphertext space $\mathcal{C}$, each equal to ${\{0,1\}}^n$:
\begin{itemize}
  \item $\mathsf{Gen}$: Select a key $k \in \mathcal{K}$ uniformly at random.
  \item $\mathsf{Enc}_k(m)$: Given a key $k \in \mathcal{K}$ and message
  $m \in \mathcal{M}$, output the ciphertext $c = m \oplus k$.
  \item $\mathsf{Dec}_k(c)$: Given a key $k \in \mathcal{K}$ and ciphertext
  $c \in \mathcal{C}$, output the message $m = c \oplus k$.
\end{itemize}
We model \textsf{Gen} and \textsf{Enc} with the following code, where $m$ is a
deterministic input variable:
\[
\begin{array}{l}
  \Rand{k}{{\{0,1\}}^n};
  \Assn{c}{m \oplus k}
\end{array}
\]
But why is the OTP perfectly secret, and what does that mean in the first place?
A natural way to define secrecy is to say that ``the ciphertext reveals nothing
about the plaintext.'' We can formalize this intuitive notion in two ways. One
variant requires that an observer's view is uniformly distributed for all
private inputs.
\begin{definition}[Perfect Secrecy as Probabilistic Non-Interference]\label{def:ps2}
  An encryption scheme $\Pi : \mathcal{M} \to \Dist(\mathcal{C})$ is perfectly
  secret if for every pair of messages $m,m' \in \mathcal{M}$ and every
  ciphertext $c \in \mathcal{C}$, we have:
  \[
    \Pr[\Pi(m) = c] = \Pr[\Pi(m')=c].
  \]
\end{definition}
\noindent This notion can be seen as \emph{probabilistic non-interference}, a
generalization of a standard information flow property to the probabilistic
setting. As previously noted, perfect secrecy follows from uniformity: if we can
show that the output distribution is uniformly distributed, an observer's view
is the same for all private inputs since the uniform distribution is unique. In
other words, it suffices to show $\Unif[c]$ as a post-condition, where $c$
denotes the (probabilistic) output of $\mathsf{Enc}_k(m)$.

Another way to define security is to treat the secret input---here, the
message---as drawn from a distribution, and then require the observer's view to
be probabilistically independent of the input for every distribution on inputs.
This formulation captures security through an intuitive reading of independence:
the public output reveals no new information about the secret input.

\begin{definition}[Perfect Secrecy as Input Independence]\label{def:ps3}
  Let $\mathcal{M}$ be the message space and let $\mathcal{C}$ be the ciphertext
  space. Regarding an encryption scheme $\Pi : \mathcal{M} \to
  \Dist(\mathcal{C})$ as the map $\tilde{\Pi} : \Dist(\mathcal{M}) \to
  \Dist(\mathcal{M} \times \mathcal{C})$ that preserves the input distribution,
  $\Pi$ is perfectly secret if  the random variables $M$ and $C$ are independent
  in $\tilde{\Pi}(\mu)$ for every input distribution $\mu$ over $\mathcal{M}$.
\end{definition}

\noindent Again, we are done if we can show that the output distribution is
independent from the input distribution.  In other words, it suffices to show
$\Dist[m] \sepand \Dist[c]$ as a post-condition, where $m$ denotes the secret
input and $c$ denotes the (probabilistic) output of $\mathsf{Enc}_k(m)$.

Although it turns out that these two formulations of perfect secrecy are
equivalent~\cite{katz2014introduction}, the proofs of these properties differ.
To demonstrate, we show that the OTP is perfectly secret according to both
definitions by establishing $\Unif[c]$ and $\Dist[m] \sepand \Dist[c]$ as
post-conditions.

\subsubsection{Proof of Uniformity}
Starting from the trivial pre-condition $\Phi_1 \triangleq \top$, we would like
to prove the post-condition $\Psi \triangleq \Unif[c]$. Perfect secrecy then
follows from \cref{def:ps2}. We start by adjoining the random sample for $k$
according to \rname{RSamp}:
\[
  \Unif[k].
\]
Then, by \rname{RAssn*}, assigning to $c$ gives
\[
  \Unif[k] \mathrel{\land} c \sim m \mathrel{\oplus} k.
\]
The xor axiom (\ref{ax:xor}) gives the desired post-condition:
\[
  \Psi \triangleq \Unif[c].
\]

\subsubsection{Proof of Input Independence}
Starting from the pre-condition $\Phi_1 \triangleq \Dist[m]$, we would like to
prove the post-condition $\Psi \triangleq \Dist[m] \sepand \Unif[c]$, now
treating $m$ as a randomized variable; perfect secrecy then follows from
\cref{def:ps3}. We start by using \rname{RSamp*} to adjoin the sample $k$:
\[
  \Dist[m] \sepand \Unif[k].
\]
By \rname{RAssn*}, assigning to $c$ then gives
\[
  \Dist[m] \sepand \Unif[k] \mathrel{\land} c \sim m \mathrel{\oplus} k.
\]
Finally, applying the xor axiom (\ref{ax:xor}) gives
\[
  \Dist[m] \sepand \Unif[c],
\]
which implies the desired post-condition:
\[
  \Psi \triangleq \Dist[m] \sepand \Dist[c].
\]

We now prove both properties for several other constructions.

\subsection{Private Information Retrieval}

Private information retrieval (PIR) enables a user to retrieve an item from a
database server without the server learning which item was
requested~\citep{chor1995private}. For instance, one (highly inefficient) scheme
just has the server send the entire database to the user. When multiple copies
of the database are held by multiple servers, however, significantly more
efficient PIR schemes are possible.

We consider the following single-bit two-server PIR scheme by
\citet{chor1995private}. Two \emph{non-colluding} servers
$\mathcal{S}_0,\mathcal{S}_1$ store the same $N$-bit database $D$. A client
$\mathcal{C}$ wishes to access the $i$-th bit of $D$, denoted $D[i]$, without
either server learning any information about $i$. To achieve this, $\mathcal{C}$
first uniformly samples an $N$-bit string $q_0$. Then, $\mathcal{C}$ sends $q_0$
to $\mathcal{S}_0$ and sends $q_1 = q_0 \oplus I$ to $\mathcal{S}_1$, where
$I \in {\{0,1\}}^N$ has value 1 at index $i$ and value 0 everywhere else.
Server $\mathcal{S}_0$ computes a response $r_0$ as
\[\bigoplus_{1 \leq j \leq N : q_0[j] = 1}D[j].\]
Similarly, server $\mathcal{S}_1$ computes a response $r_1$ as
\[\bigoplus_{1 \leq j \leq N : q_1[j] = 1}D[j].\]
\begingroup%
\setlength{\intextsep}{0pt}%
\setlength{\columnsep}{0pt}%
\begin{wrapfigure}{r}{0.51\textwidth}
  \begin{center}
    \begin{tabular}{c}
    \begin{lstlisting}
      $\Rand{q_0}{{\{0,1\}}^N};$
      $\Assn{q_1}{q_0 \oplus I};$
      $\DFor{i}{1}{N}{}$
      $\quad \RCondt{q_0[i]=1}{\Assn{a_0[i]}{D[i]}};$
      $\quad \RCondt{q_1[i]=1}{\Assn{a_1[i]}{D[i]}};$
      $\Assn{r_0}{\bigoplus_{j \in [1,N]}a_0[j]};$
      $\Assn{r_1}{\bigoplus_{j \in [1,N]}a_1[j]};$
      $\Assn{v}{r_0 \oplus r_1};$
    \end{lstlisting}
    \end{tabular}
  \end{center}
  \caption{Private information retrieval (PIR) scheme}\label{fig:pir}    
\end{wrapfigure}%
Finally, $\mathcal{C}$ computes $r_0 \oplus r_1 = D[i]$. Since it is assumed
that $\mathcal{S}_0$ and $\mathcal{S}_1$ do not collude, the fact that the
queries $q_0$ and $q_1$ are each uniformly distributed ensure that no
information about the index $i$ is leaked. The fact that $q_0$ and $q_1$ are not
independent, however, means that the protocol does not ensure secrecy in the
presence of collusion---information about $i$ may (and in this case, does) leak
out through different correlations between $q_0$ and $q_1$.

The combined program in \cref{fig:pir} models this protocol (arrays
$a_0,a_1$ initialized with 0). We establish security by proving two
different properties of the program in our logic: uniformity and
input independence. While both properties amount
to the same security property, their proofs are different and
demonstrate our logic's flexibility.

\endgroup

\subsubsection{Proof of Uniformity}

Starting from the trivial pre-condition $\Phi_1 \triangleq \top$, we
would like to prove the post-condition
\[
  \Psi \triangleq \underbrace{\Unif[q_0]}_{\mathclap{\text{$\mathcal{S}_0$'s view}}} \land \underbrace{\Unif[q_1]}_{\mathclap{\text{$\mathcal{S}_1$'s view}}}.
\]
This says that the views of $\mathcal{S}_0$ and $\mathcal{S}_1$ ($q_0$ and
$q_1$, respectively) are uniformly random bitstrings.

By \rname{RSamp}, adjoining the sampling for  $q_0$ (line 1) gives
\[
  \Unif[q_0].
\]
Since $I$ is a deterministic variable, we can adjoin $\Dist[I]$:
\[
  \Unif[q_0] \sepand \Dist[I].
\]
By \rname{RAssn*}, assigning to $q_1$ (line 2) gives
\[
  \Unif[q_0] \sepand \Dist[I] \mathrel{\land} q_1 \sim q_0 \mathrel{\oplus} I.
\]
Next, we can pull out $\Unif[q_0]$ like so:
\[
  \Unif[q_0] \mathrel{\land} (\Unif[q_0] \sepand \Dist[I] \mathrel{\land} q_1 \sim q_0 \mathrel{\oplus} I).
\]
We then apply the xor axiom (\ref{ax:xor}) to the right conjunct, giving the
desired post-condition:
\[
  \Psi \triangleq \Unif[q_0] \mathrel{\land} \Unif[q_1].
\]
Since $q_0$ and $q_1$ are unmodified in the remainder of the program, we can
preserve $\Psi$ through to the end using \rname{Const} and \rname{True}.

\subsubsection{Proof of Input Independence}

Starting from the pre-condition $\Phi_1 \triangleq \Dist[I]$, now treating $I$
as a randomized variable, we would like to prove the post-condition
\[
  \Psi \triangleq \underbrace{\Dist[I]}_{\text{Index}} \sepand \underbrace{\Dist[q_0]}_{\mathclap{\text{$\mathcal{S}_0$'s
  view}}} \mathrel{\land} \underbrace{\Dist[I]}_{\text{Index}} \sepand \underbrace{\Dist[q_1]}_{\mathclap{\text{$\mathcal{S}_1$'s
  view}}}.
\]
This says that the views of $\mathcal{S}_0$ and $\mathcal{S}_1$ ($q_0$ and
$q_1$, respectively) are independent of secret index $I$.

By \rname{RSamp*}, adjoining the sampling for $q_0$ (line 1) gives
\[
  \Unif[q_0] \sepand \Dist[I].
\]
By \rname{RAssn*}, assigning to $q_1$ (line 2) gives
\[
  \Unif[q_0] \sepand \Dist[I] \mathrel{\land} q_1 \sim q_0 \mathrel{\oplus} I.
\]
Next, we can pull out $\Dist[I] \sepand \Unif[q_0]$ like so:
\[
  \Dist[I] \sepand \Unif[q_0] \mathrel{\land}
  (\Unif[q_0] \sepand \Dist[I] \mathrel{\land} q_1 \sim q_0 \mathrel{\oplus} I).
\]
We then apply the xor axiom (\ref{ax:xor}) to the right conjunct, giving
\[
  \Dist[I] \sepand \Unif[q_0] \mathrel{\land} \Dist[I] \sepand \Unif[q_1],
\]
which implies the desired post-condition:
\[
  \Psi \triangleq \Dist[I] \sepand \Dist[q_0] \mathrel{\land} \Dist[I] \sepand \Dist[q_1].
\]
Since $q_0$ and $q_1$ are unmodified in the remainder of the program, we can
preserve $\Psi$ through to the end using \rname{Const} and \rname{True}.

\subsection{Oblivious Transfer}

\begin{figure}
  \begin{lstlisting}
    $\Rand{r_0,r_1}{{\{0,1\}}^k};$
    $\Rand{d}{\{0,1\}};$
    $\RCond{d = 0}{\Seq{\Assn{r_d}{r_0}}{\textcolor{gray}{\Assn{r_{1-d}}{r_1}}}}{\Seq{\Assn{r_d}{r_1}}{\textcolor{gray}{\Assn{r_{1-d}}{r_0}}}};$
    $\Assn{e}{c \oplus d};$
    $\RCond{e = 0}{\Seq{\Assn{f_0}{m_0 \oplus r_0}}{\Assn{f_1}{m_1 \oplus r_1}}}
    {\Seq{\Assn{f_0}{m_0 \oplus r_1}}{\Assn{f_1}{m_1 \oplus r_0}}};$
    $\DCond{c = 0}{\Seq{\Assn{m_c}{f_0 \oplus r_d}}{\textcolor{gray}{\Assn{f_{1-c}}{m_1 \oplus
  r_{1-d}}}}}{\Seq{\Assn{m_c}{f_1 \oplus r_d}}{\textcolor{gray}{\Assn{f_{1-c}}{m_0 \oplus
  r_{1-d}}}}};$
  \end{lstlisting}
  \caption{Oblivious transfer (OT) scheme}\label{fig:ot}  
\end{figure}

Oblivious transfer (OT) is a common building block in many cryptographic
protocols~\citep{rabin2005exchange}. It involves two parties: a sender
$\mathcal{S}$ holding two secrets $m_0,m_1$ and a receiver $\mathcal{R}$ holding
a \emph{choice} bit $c \in \{0,1\}$. Through the protocol, $\mathcal{R}$ learns
the secret $m_c$ but nothing about $m_{1-c}$, while $\mathcal{S}$ learns nothing
about $c$. If the setup can be performed by a trusted third party $\mathcal{T}$,
the following simple protocol implements OT~\citep{rivest1999unconditionally}:
\begin{enumerate}
  \item $\mathcal{T}$ sends $\mathcal{S}$ two random $k$-bit strings $r_0,r_1$.
  \item $\mathcal{T}$ sends $\mathcal{R}$ a random bit $d$ and the string $r_d$.
  \item $\mathcal{R}$ sends $\mathcal{S}$ the value $e = c \oplus d$.
  \item $\mathcal{S}$ sends $\mathcal{R}$ the values $f_0 = m_0 \oplus r_e$ and
  $f_1 = m_1 \oplus r_{1-e}$.
  \item $\mathcal{R}$ computes $m_c = f_c \oplus r_d$.
\end{enumerate}
As part of the trusted setup in steps 1 and 2, $\mathcal{T}$ essentially hands
$\mathcal{S}$ and $\mathcal{R}$ one-time pad keys to encrypt their secrets.
$\mathcal{R}$ uses $d$ to encrypt its choice $c$ (step 3), $\mathcal{S}$ uses
$r_0,r_1$ to encrypt its secrets $m_0,m_1$ using $\mathcal{R}$'s message to pick
which key to use for which message (step 4), and finally $\mathcal{R}$ uses
$r_d$ to decrypt one of the secrets (step 5).

The protocol ensures perfect secrecy for $\mathcal{R}$'s choice $c$, since $e$
is an encryption of $c$ under the OTP with key $d$, which is kept secret from
$\mathcal{S}$. The protocol ensures perfect secrecy for one of $\mathcal{S}$'s
secrets $m_0,m_1$, since $f_0,f_1$ are encryptions of $m_0,m_1$ under the OTP
with keys $r_e,r_{1-e}$, and the fact that $\mathcal{R}$ is given one of the
keys ($r_d$) means that it can decrypt exactly one of $f_0$ or $f_1$.

The combined program in \cref{fig:ot} models the OT protocol. Stating security
for $\mathcal{S}$ requires a bit of work. We first instrument the program with
ghost code, shown in gray. Intuitively, the ghost code computes the encrypted
version of the wrong message, i.e., the one that $\mathcal{R}$ did \emph{not}
request---the combined view of $\mathcal{R}$ should then be uniform.

\subsubsection{Proof of Uniformity}

Starting from the trivial pre-condition $\Phi_1 \triangleq \top$, we would like
to prove the post-condition
\[
  \Psi \triangleq \underbrace{(\Unif_{k \times
  k}[(r_0,r_1)] \sepand \Unif[e])}_{\mathclap{\text{$\mathcal{S}$'s
  view}}} \mathrel{\land} \underbrace{(\Unif[d] \sepand \Unif_{k \times
  k}[(r_d,f_{1-c})])}_{\mathclap{\text{$\mathcal{R}$'s ``view''}}}.
\]
To establish $\mathcal{R}'s$ secrecy, we need to consider the view of
$\mathcal{S}$, which consists of $r_0$, $r_1$, and $e$. For $\mathcal{R}$'s
choice $c$ to be kept secret, it is required that $\Unif_{k \times
k}[(r_0,r_1)] \sepand \Unif[e]$, i.e., $\mathcal{S}$'s combined view is
uniform. Note that it is not enough to establish that the individual components
of $\mathcal{S}$'s view are uniform. To see why, suppose $\mathcal{T}$ also
sends $\mathcal{S}$ the random bit $d$, which reveals $c = e \oplus d$. Although
the individual components of $\mathcal{S}$'s view would indeed be uniform, i.e.,
$\Unif_k[r_0] \mathrel{\land} \Unif_k[r_1] \mathrel{\land} \Unif[e] \mathrel{\land}
\Unif[d]$, $\mathcal{R}$'s secrecy is clearly violated. The stronger
post-condition establishes that $\mathcal{S}$'s combined view is uniform.

To establish $\mathcal{S}'s$ (one-sided) secrecy, we need to consider the view
of $\mathcal{R}$, which consists of $d$, $r_d$, and one of $f_0$ or $f_1$. In
particular, the $f_i$ to be considered corresponds to the encryption of the
``wrong'' message, which we assign to the ghost variable $f_{1-c}$ (which is, in
turn, computed using the ghost variable $r_{1-d}$). Similar to $\mathcal{R}$'s
secrecy, it is then required that $\Unif[d] \sepand \Unif_{k \times
k}[(r_d,f_{1-c})]$, i.e., $\mathcal{R}$'s combined view is uniform.

We first show $\mathcal{R}$'s secrecy, followed by $\mathcal{S}$'s secrecy, and
then combine the results using \rname{Conj}.  By \rname{RSamp}
and \rname{RSamp*}, we can adjoin the random samplings for
$r_0,r_1,d$ (lines 1--2), giving
\[
  \Unif_k[r_0] \sepand \Unif_k[r_1] \sepand \Unif[d].
\]
Since the free variables of this formula are unmodified in the conditional (line
3), we can preserve the formula using \rname{Const} and \rname{True}.
Since $c$ is a deterministic variable we can adjoin $\Dist[c]$, giving
\[
  \Dist[c] \sepand \Unif_k[r_0] \sepand \Unif_k[r_1] \sepand \Unif[d].
\]
For the assignment to $e$ (line 4), we start from the local pre-condition
\[
  \Dist[c] \sepand \Unif[d].
\]
By \rname{RAssn*}, assigning to $e$ gives
\[
  (\Dist[c] \sepand \Unif[d]) \mathrel{\land} e \sim c \mathrel{\oplus} d.
\]
Applying the xor axiom (\ref{ax:xor}) leaves
\[
  \Unif[e].
\]
Then, we can frame as follows:
\begin{mathpar}
    \hspace*{0.9cm}\inferrule*[Left=Frame]
    { \vdash \psl{\phi}{c'}{\psi} \quad\,\,\,
      \FV(\eta) \cap \MV(c') = \emptyset \quad\,\,\,
      \FV(\psi) \subseteq \FV(\phi) \cup \WV(c') \quad\,\,\,
    \models \phi \to \Dist[\RV(c')]}
    { \vdash \psl{\underbrace{\Dist[c] \sepand \Unif[d]}_\phi \sepand \underbrace{\Unif_k[r_0] \sepand \Unif_k[r_1]}_\eta}{\underbrace{\Assn{e}{c \oplus
    d}}_{c'}}{\underbrace{\Unif[e]}_\psi \sepand \underbrace{\Unif_k[r_0] \sepand \Unif_k[r_1]}_\eta}
    }.
\end{mathpar}
The post-condition implies
\[
  \Unif_{k \times k}[(r_0,r_1)] \sepand \Unif[e],
\]
which establishes $\mathcal{R}$'s secrecy. Since $r_0$, $r_1$, and $e$ are
unmodified in the remainder of the program, we can preserve $\Unif_{k \times
k}[(r_0,r_1)] \sepand \Unif[e]$ through to the end using \rname{Const}.

Next, we show $\mathcal{S}$'s secrecy. Again, by \rname{RSamp}
and \rname{RSamp*}, we can adjoin the random samplings for $r_0,r_1,d$ (lines
1--2), giving
\[
  \Unif_k[r_0] \sepand \Unif_k[r_1] \sepand \Unif[d].
\]
We go through the conditional (line 3) with \rname{RCond}, which gives
pre-condition
\[
  \Unif_k[r_0] \sepand \Unif_k[r_1] \sepand d = 0 \sim \ktt.
\]
To go through the first assignment, we start from the local pre-condition
\[
  \Unif_k[r_0].
\]
By \rname{RAssn*}, assigning to $r_d$ gives
\[
  \Unif_k[r_0] \mathrel{\land} r_d \sim r_0.
\]
Transferring the distribution law gives
\[
  \Unif_k[r_d].
\]
We can then frame in $\Unif_k[r_1] \sepand d = 0 \sim \ktt$ giving
\[
  \Unif_k[r_d] \sepand \Unif_k[r_1] \sepand d = 0 \sim \ktt.
\]
The second assignment follows similarly, this time starting from the local
pre-condition
\[
  \Unif_k[r_1]
\]
and giving the following post-condition in the true branch
\[
  \Unif_k[r_d] \sepand \Unif_k[r_{1-d}] \sepand d = 0 \sim \ktt.
\]
The false branch yields the same post-condition, which, by \rname{RCond},
brings us to the post-condition
\[
  \Unif_k[r_d] \sepand \Unif_k[r_{1-d}] \sepand \Unif[d].
\]
Since the free variables of this formula are unmodified in lines 4--5, we can
preserve this formula through using \rname{Const} and \rname{True}. Next, we go
through the deterministic conditional (line 6) using \rname{DCond}. In the true
branch, we start with pre-condition
\[
  \Unif_k[r_d] \sepand \Unif_k[r_{1-d}] \sepand \Unif[d] \mathrel{\land} (c=0)
  = \ktt.
\]
Dropping the right conjunct, we can adjoin $\Dist[m_1]$ like so
\[
  \Unif_k[r_d] \sepand \Dist[m_1] \sepand \Unif_k[r_{1-d}] \sepand \Unif[d],
\]
since $m_1$ is a deterministic variable.  We preserve this formula through the
first assignment to $m_c$ using \rname{Const} and \rname{True}, and then go
through the second assignment to $f_{1-c}$ starting from the local pre-condition
\[
  \Dist[m_1] \sepand \Unif_k[r_{1-d}].
\]
Applying \rname{RAssn*} and the xor axiom (\ref{ax:xor}) gives
\[
  \Unif_k[f_{1-c}].
\]
Framing in $\Unif_k[r_d] \sepand \Unif[d]$ gives the following post-condition in
the true branch
\[
  \Unif_k[r_d] \sepand \Unif_k[f_{1-c}] \sepand \Unif[d].
\]
The false branch yields the same post-condition. Then, we can merge
$\Unif_k[r_d] \sepand \Unif_k[f_{1-c}]$ and rearrange like so
\[
    \Unif[d] \sepand \Unif_{k \times k}[(r_d,f_{1-c})].
\]
This establishes $\mathcal{S}$'s secrecy. Combining the formulas establishing
$\mathcal{R}$'s secrecy and $\mathcal{S}$'s secrecy using \rname{Conj} gives the
desired post-condition.

\subsubsection{Proof Attempt: Input Independence}

Again, we can try to prove security via input independence instead of
uniformity. Starting from the pre-condition
\[
  \Phi_1 \triangleq \Dist[c] \mathrel{\land} \Dist[m_0] \mathrel{\land} \Dist[m_1],
\]
we would like to prove the post-condition
\[
  \Psi \triangleq (\underbrace{\Dist[c]}_{\mathclap{\text{$\mathcal{R}$'s
  secret choice}}} \sepand \underbrace{(\Dist[r_0] \mathrel{\land} \Dist[r_1] \mathrel{\land} \Dist[e])}_{\mathclap{\text{$\mathcal{S}$'s
  view}}}) \mathrel{\land}
  (\underbrace{\Dist[m_{1-c}]}_{\mathclap{\text{$\mathcal{S}$'s
  unselected secret}}} \sepand \underbrace{(\Dist[d] \mathrel{\land} \Dist[r_d] \mathrel{\land} \Dist[f_0] \mathrel{\land} \Dist[f_1])}_{{\text{$\mathcal{R}$'s view}}}).
\]
To establish $\mathcal{R}'s$ secrecy, we need to show that the secret choice $c$
is independent from $\mathcal{S}$'s view. Similarly, to establish
$\mathcal{S}$'s (one-sided) secrecy, we need to show that the unselected secret,
which we assign to the ghost variable $m_{1-c}$ is independent from
$\mathcal{R}$'s view.

However, here we run into difficulties---it does not seem possible to prove this
judgment in our logic, and even sketching a proof on paper is not easy. In
general, stating and proving perfect security as input independence is trickier
when there are multiple parties, like in OT\@. Investigating how to prove this
kind of property is an interesting direction for further work.

\subsection{Multi-Party Computation}

\begingroup
\setlength{\intextsep}{0pt}
\setlength{\columnsep}{0pt}
\begin{wrapfigure}{r}{0.54\textwidth}
  \begin{center}
    \begin{tabular}{c}
    \begin{lstlisting}
      $\DFor{i}{1}{3}{}$
      $\quad \Rand{r[i].1}{\mathbb{Z}_p};$
      $\quad \Rand{r[i].2}{\mathbb{Z}_p};$
      $\quad \Assn{r[i].3}{x[i] - r[i].1 - r[i].2 \mod p};$
      $\DFor{i}{1}{3}{}$
      $\quad \Assn{s[i]}{r[1].i + r[2].i + r[3].i \mod p};$
      $\Assn{v}{s[1] + s[2] + s[3] \mod p};$
    \end{lstlisting}
    \end{tabular}
  \end{center}
  \caption{Three-party secure addition protocol}\label{fig:mpc}  
\end{wrapfigure}

Secure multi-party computation (MPC) allows mutually untrusting parties to
jointly compute a function of their private inputs without revealing
them~\citep{yao1986generate,goldreich1987play}. The parties agree on a function
$f$ and then use an MPC protocol to securely compute $v = f(x_1,\ldots,x_n)$,
where $x_i$ is party $P_i$'s private input. MPC guarantees that parties learn
$v$, and nothing more.

As an example, we consider secure computation of addition, i.e., of the function
$f(x_1,\ldots,x_n) = \sum_{i=1}^{n} x_i$~\citep{cramer2015secure}. This simple
function turns out to be surprisingly useful, for example, for privately
totaling salaries of employees in a company or votes in secure electronic
voting. Secure addition can be achieved by the following simple protocol with
inputs $x_i \in \mathbb{Z}_p$, where $p$ is a fixed prime number agreed upon in
advance. We describe the three-party case for simplicity, but the protocol
easily extends to $n$ parties.

\endgroup

\begin{enumerate}
  \item Each $P_i$ encodes their input as three \emph{secret shares} by choosing
    $r_{i,1},r_{i,2}$ uniformly at random in $\mathbb{Z}_p$ and setting $r_{i,3}
    = x_i - r_{i,1} - r_{i,2} \mod p$.
  \item Each $P_i$ sends $r_{i,2},r_{i,3}$ to $P_1$, $r_{i,1},r_{i,3}$ to $P_2$,
    and $r_{i,1},r_{i,2}$ to $P_3$.
  \item Each $P_j$ computes the sum $s_\ell = r_{1,\ell} + r_{2,\ell} +
    r_{3,\ell} \mod p$ for $\ell \neq j$ and sends $s_\ell$ to all parties.
  \item All parties compute the result $v = s_1 + s_2 + s_3 \mod p$.
\end{enumerate}
The \emph{end-to-end} security of the protocol, i.e., that parties learn no new
information beyond the output $v$, is subtle to prove. In a nutshell, the
security of the protocol is usually established by
\emph{simulation}~\citep{lindell2017simulate}, a proof technique that is
pervasive in cryptography but does not have a clean translation to our logic.
The interested reader should see the monograph by~\citet{cramer2015secure}.

The security of secret sharing, however, is expressible in our logic.
Informally, the protocol splits each secret into three pieces (``shares''), and
the security property ensures that knowing at most two of the three shares
reveals no information about the secret. In step (1), each party generates
shares of their secret input $x_i$ by selecting $r_{i,1},r_{i,2},r_{i,3}$
uniformly at random from $\mathbb{Z}_p$, subject to the constraint that the
shares add up to $x_i$. In step (2), each party distributes secret shares in
such a way that no other party learns any information about their secret input.
For concreteness, consider $P_2$'s view: It knows the values $r_{i,1}$ and
$r_{i,3}$, and that $x_i = r_{i,1} + r_{i,2} + r_{i,3} \mod p$, but since
$r_{i,2}$ is chosen uniformly from $\mathbb{Z}_p$, any value of $x_i$ is equally
likely. Thus, no information about $x_i$ is leaked.

The combined program in \cref{fig:mpc} models the secure addition protocol; the
secret sharing steps correspond to lines 1--4.  Like in our previous examples,
we prove the security of secret sharing in two ways: by establishing uniformity
and input independence.

\subsubsection{Proof of Uniformity}
Starting from the trivial pre-condition $\Phi_1 \triangleq \top$, we would like
to prove the post-condition
\[
  \Psi \triangleq \bigwedge_{\alpha \in \{2,3\}} \underbrace{\Unif[(r[\alpha].2,
  r[\alpha].3)]}_{\mathclap{\text{$P_1$'s view from $P_\alpha$}}} \mathrel{\land}
  \bigwedge_{\alpha \in \{1,3\}} \underbrace{\Unif[(r[\alpha].1,
  r[\alpha].3)]}_{\mathclap{\text{$P_2$'s view from
  $P_\alpha$}}}  \mathrel{\land}
  \bigwedge_{\alpha \in \{1,2\}} \underbrace{\Unif[(r[\alpha].1, r[\alpha].2)]}_{\mathclap{\text{$P_3$'s view from $P_\alpha$}}}.
\]
This says that each party's view from the other parties is uniform and independent.

To prove this post-condition, we take the following for-loop invariant:
\[
  \bigsep_{\alpha \in [1, i)} \Unif[(r[\alpha].1,r[\alpha].2)] \mathrel{\land} \Unif[(r[\alpha].2,r[\alpha].3)] \mathrel{\land} \Unif[(r[\alpha].1,r[\alpha].3)].
\]
By \rname{RSamp*}, adjoining the random samplings for $r[i].1$ and $r[i].2$
(lines 2 and 3) gives
\[
  \Unif[r[i].1] \sepand \Unif[r[i].2] \sepand \bigsep_{\alpha \in [1,
    i)} \Unif[(r[\alpha].1,r[\alpha].2)] \mathrel{\land} \Unif[(r[\alpha].2,r[\alpha].3)] \mathrel{\land} \Unif[(r[\alpha].1,r[\alpha].3)].
\]
To go through the assignment to $r[i].3$ (line 4), we start from the local pre-condition
\[
  \Unif[r[i].1] \sepand \Unif[r[i].2].
\]
By \rname{RAssn*}, assigning to $r[i].3$ gives
\[
  \Unif[r[i].1] \sepand \Unif[r[i].2] \mathrel{\land} r[i].3 \sim x[i] - r[i].1
  - r[i].2 \mod p.
\]
Applying the modular addition axiom (\ref{ax:mod}) and merging pairwise
independent assertions gives
\[
  \Unif[(r[i].1,r[i].2)] \mathrel{\land} \Unif[(r[i].2,r[i].3)] \mathrel{\land}
  \Unif[(r[i].1,r[i].3)].
\]
Framing in the invariant for the earlier iterations establishes the loop
invariant in \rname{DFor}, giving:
\[
  \bigsep_{\alpha \in
  [1,3]} \Unif[(r[\alpha].1,r[\alpha].2)] \mathrel{\land} \Unif[(r[\alpha].2,r[\alpha].3)] \mathrel{\land} \Unif[(r[\alpha].1,r[\alpha].3)].
\]
After rearranging and dropping terms, this formula implies the desired
post-condition $\Psi$.  Since the free variables of $\Psi$ are unmodified in the
remainder of the program, we can preserve it through to the end
using \rname{Const} and \rname{True}, establishing uniformity.

\subsubsection{Proof of Input Independence}

Starting from the pre-condition
\[
  \Phi_1 \triangleq \bigwedge_{\alpha \in [1,3]} \Dist[x[\alpha]],
\]
we would like to prove the post-condition
\begin{align*}
  \Psi \triangleq \bigwedge_{\alpha \in \{2,3\}} \underbrace{\Dist[x[\alpha]]}_{\mathclap{\text{$P_\alpha$'s
  input}}} \sepand \underbrace{\Dist[(r[\alpha].2,
  r[\alpha].3)]}_{\mathclap{\text{$P_1$'s view from
  $P_\alpha$}}} \mathrel{\land} &\bigwedge_{\alpha \in \{1,3\}} \underbrace{\Dist[x[\alpha]]}_{\mathclap{\text{$P_\alpha$'s
  input}}} \underbrace{\sepand \Dist[(r[\alpha].1,
  r[\alpha].3)]}_{\mathclap{\text{$P_2$'s view from
  $P_\alpha$}}} \mathrel{\land}\\ &\bigwedge_{\alpha \in \{1,2\}} \underbrace{\Dist[x[\alpha]]}_{\mathclap{\text{$P_\alpha$'s
  input}}} \sepand \underbrace{\Dist[(r[\alpha].1,
  r[\alpha].2)]}_{\mathclap{\text{$P_3$'s view from $P_\alpha$}}}.
\end{align*}
This says that, for each party, the secret input of each other party is
independent from the view they generate. We defer details of this proof
to \fullref{app:examples}.


\subsection{Simple Oblivious RAM}

Consider a core programming language defined by the following syntax:
\[
  P ::= \epsilon \mid i;P \quad \mbox{where} ~~~ i::=\iread{x} \mid \iwrite{x}{v}
\]
with $x$ ranging over a set $\mathcal{X}$ of registers and $v$ ranging over
integers $\mathbb{Z}$. A simple execution model for this language is random
access memory (RAM). Informally, a RAM machine maintains a partial mapping from
registers to integers, a program counter that tracks which instruction is to be
executed next, and reads and updates the mapping according to the program
instructions.

\emph{Oblivious RAM} (ORAM)~\citep{stoc/Goldreich87,GoldreichO96} is a
probabilistic execution model guaranteeing that an adversary who observes the
sequence of accessed memory locations---but not their contents---only learns the
length of the program. The basic idea of ORAM is to maintain a mapping from
logical addresses accessed by the client program to physical addresses where
data is stored; this mapping is re-scrambled after each read and write. We
consider a simple and idealized variant of ORAM inspired by
\citet{iacr/ChungP13a}, and use our logic to prove its security.

\subsubsection{Definition of Simple ORAM}
ORAM assumes a memory model split into two parts: an \emph{external}, insecure
bulk memory where accesses are visible to the adversary, and an \emph{internal},
secure memory with a small number of registers where accesses are not visible to
the adversary. In our variant, the internal memory stores a so-called
\emph{position map}, while the external memory is organized into a tree where
values can be read and written. We treat the external memory $a$ as a map from
addresses to buckets. Addresses are bitstrings of length at most $n$; they are
partially ordered by the prefix relation, and are thus structured as a tree.
Leaf addresses are bitstrings of length exactly $n$. Each node of the tree
stores a \emph{bucket} a list of triples of the form $(x,v,l)$, where $x
\in\mathcal{X}$ is a register, $v\in\mathbb{Z}$ is an integer, and
$l\in{\{0,1\}}^n$ is a leaf address. One key invariant of the ORAM scheme is that
$(x,v,l)$ will be stored in some bucket along the path from root address
$\epsilon$ to leaf address $l$, i.e.\, in a bucket $a[i]$ at some address $i$
that is a prefix of $l$. The position map $p$ is a mapping from registers $x \in
\mathcal{X}$ to leaf addresses; this mapping ensures obliviousness by
introducing a level of indirection.

\begin{figure*}[t!]
    \centering
    \begin{subfigure}[t]{0.3\textwidth}
        \centering
        \includegraphics[height=1.4in]{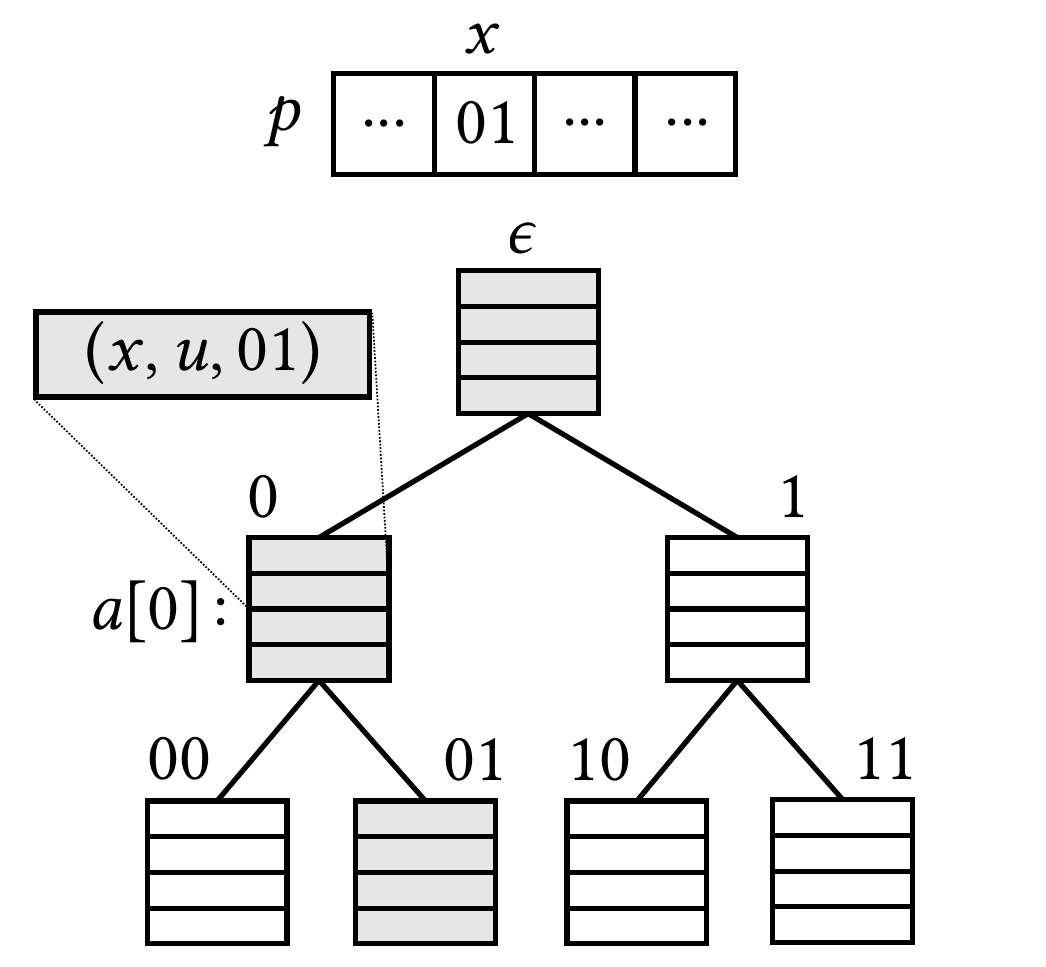}
        \caption{$\mathsf{read}(x)$}
    \end{subfigure}%
    \begin{subfigure}[t]{0.3\textwidth}
        \centering
        \includegraphics[height=1.4in]{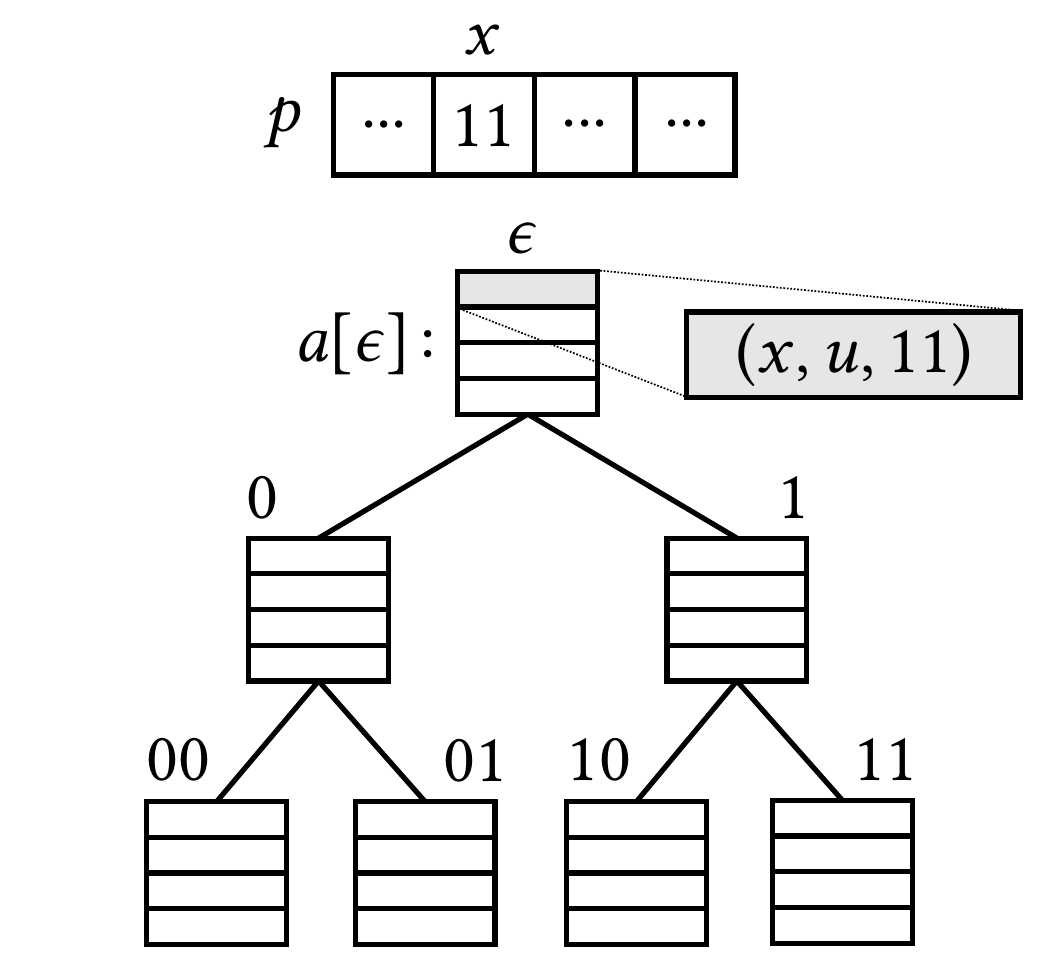}
        \caption{Update $p[x]$}
    \end{subfigure}%
    \begin{subfigure}[t]{0.3\textwidth}
        \centering
        \includegraphics[height=1.4in]{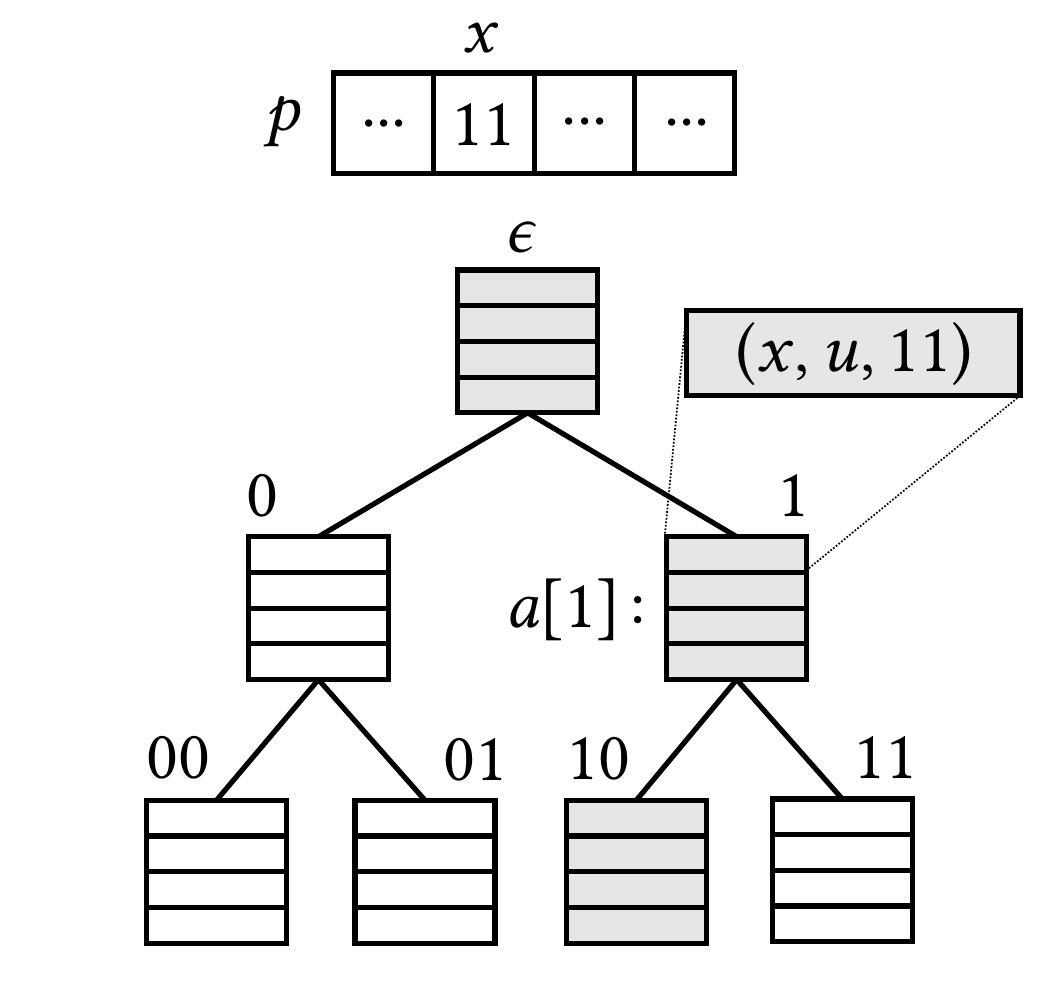}
        \caption{$\mathsf{flush}$}
    \end{subfigure}%
    \caption{Simple ORAM}\label{fig:soram}
\end{figure*}

We briefly describe the semantics of the $\iread{x}$ instruction;
\cref{fig:soram} illustrates an example. To retrieve the value of $x$, we read
$p[x]$ from the position map and then search through all buckets along the path
from $\epsilon$ to $p[x]$ for a triple of the form $(x,u,p[x])$
(\cref{fig:soram} (a)). Once the triple is found, it is removed from its bucket.
In order to guarantee obliviousness, it is necessary to read all addresses along
the path to the leaf, regardless of where the target entry is found. Once the
leaf is reached, we sample a fresh bitstring $l$ of length $n$, add $(x,u,l)$ to
the bucket $a[\epsilon]$, and then update $p[x]$ to hold $l$ (\cref{fig:soram}
(b)). The semantics of the $\iwrite{x}{v}$ instruction is similar, except that
the updated entry $(x,v,l)$ is added to the bucket $a[\epsilon]$.

While the operations so far ensure obliviousness, all triples will accumulate at
the root of the tree, i.e., in the bucket $a[\epsilon]$. To better balance the
buckets, each \textsf{read} or \textsf{write} operation is followed by a
\textsf{flush} operation, which samples another bitstring $l$ of length $n$ and
then traverses the tree from root $\epsilon$ to leaf $l$ while pushing every
triple $(x,v,l')$ along its path as far down as possible, namely, to bucket
$a[\lcp(l, l')]$ where $\lcp(l, l')$ is the longest common prefix of $l$ and
$l'$ (\cref{fig:soram} (c)).

\cref{fig:oram} defines the oblivious semantics of core language programs by
compilation to {\PWHILE}. To model the adversary's view of the accesses, the
compilation instruments the code to store \emph{leakage information} in the
variable $\ell$, as explained below. We briefly comment on the notation and
operators. We use ${[]}$ for the empty list, ${::}$ for adding an element to a list,
and $+$ for concatenating two lists. Given a bitstring $i$ of length $n$ and
$k\leq n$, we let $i[1, \ldots, k]$ be the bitstring of length $k$ consisting of
the first $k$ bits of $i$; we let $i[1, \ldots, 0] = \epsilon$ denote the empty
bitstring. The operator $\mathsf{split}_{\{ (x,v,l) \mid \phi \}}$ iterates over
a list of triples and returns two sublists of elements satisfying $\phi$ and
elements not satisfying $\phi$ respectively, where $\phi$ can mention $(x, v, l)$.

\begin{figure}[t!]
  \begin{subfigure}[t]{0.5\textwidth}
  \begin{center}
  \underline{Compilation of \textsf{read}}    
  \begin{tabular}{c}
  \Suppressnumber%
  \begin{lstlisting}[firstnumber=1,numbersep=-3pt]
    $\mbox{}\hspace{-1cm} \mathcal{C}(\iread{x})=$|\Reactivatenumber|
    ${\textcolor{gray}{\Assn{\ell[c].1}{p[x]}}};$
    $\Assn{w}{[]};$
    $\DFor{j}{1}{n}{}$
    $\quad \Assn{(w',a[p[x][1, \ldots, j]])}{}$
    $\quad\quad \mathsf{split}_{\{ (y,-,-) \mid x= y\}} (a[p[x][1, \ldots, j]]);$
    $\quad \Assn{w}{w+w'};$
    $\Assn{(y,u,l)}{\mathsf{head}(w)};$
    $\Rand{p[x]}{{\{0,1\}}^n};$
    $\Assn{a[\epsilon]}{(x,u,p[x])::a[\epsilon]}$
  \end{lstlisting}
  \end{tabular}
  \end{center}   
  \end{subfigure}%
  \begin{subfigure}[t]{0.5\textwidth}
  \begin{center}
  \underline{Compilation of \textsf{write}}
  \begin{tabular}{c}
  \Suppressnumber%
  \begin{lstlisting}[firstnumber=1,numbersep=-3pt]
    $\mbox{}\hspace{-1cm} \mathcal{C}(\iwrite{x}{v})=$|\Reactivatenumber|
    ${\textcolor{gray}{\Assn{\ell[c].1}{p[x]}}};$
    $\Assn{w}{[]};$    
    $\DFor{j}{1}{n}{}$
    $\quad \Assn{(w',a[p[x][1, \ldots, j]])}{}$
    $\quad\quad \mathsf{split}_{\{ (y,-,-) \mid x= y\}} (a[p[x][1, \ldots, j]]);$
    $\quad \Assn{w}{w+w'};$
    $\Assn{(y,u,l)}{\mathsf{head}(w)};$
    $\Rand{p[x]}{{\{0,1\}}^n};$
    $\Assn{a[\epsilon]}{(x,v,p[x])::a[\epsilon]}$
  \end{lstlisting}
  \end{tabular}
  \end{center}
\end{subfigure}\\\bigskip
  \begin{subfigure}[t]{0.5\textwidth}
  \begin{center}
  \underline{Flush operation}
  \begin{tabular}{c}
  \Suppressnumber%
  \begin{lstlisting}[firstnumber=1,numbersep=-3pt]
    $\mbox{}\hspace{-1cm} \mathsf{flush}=$|\Reactivatenumber|
    $\Rand{l}{{\{0,1}\}^n};$
    ${\textcolor{gray}{\Assn{\ell[c].2}{l}}};$
    $\Assn{w}{[]};$    
    $\DFor{j}{1}{n}{}$
    $\quad \Assn{(a[l[1, \ldots, j]],w)}{\mathsf{split}_{\{
    (-,-,l') \mid \mathsf{lcp}(l,l') = l[1, \ldots, i]\}} (a[l[1, \ldots, j]] + w)}$
  \end{lstlisting}
  \end{tabular}
  \end{center}   
  \end{subfigure}%
  \begin{subfigure}[t]{0.47\textwidth}
  \begin{center}
  \underline{Compilation of programs}
  \begin{tabular}{c}
  \Suppressnumber%
  \begin{lstlisting}[firstnumber=1,numbersep=-3pt]
    $\mbox{}\hspace{-1cm} \mathcal{C}(i;P) = \Seq{\mathcal{C}(i)}{\Seq{\mathsf{flush}}{\Seq{\Assn{c}{c + 1}}{\mathcal{C}(P)}}}$
    $\mbox{}\hspace{-1cm} \mathcal{C}(\epsilon) \;\;\; = \Skip$|\Reactivatenumber|
  \end{lstlisting}
  \end{tabular}
  \end{center}   
  \end{subfigure}  
 \caption{Compiling programs. Grayed instructions is ghost code used to record leakage.}\label{fig:oram}
\end{figure}
 
\subsubsection{Security of Simple ORAM}
Informally, the leakage of a program is the sequence of internal
memory accesses performed during program execution. Formally, we
introduce the events $\mathsf{r}~a[i]$ and
$\mathsf{w}~a[i]$ for reading and writing address $i$; note that reading
or writing an internal address does not leak any information. Leakage
is then defined as a sequence of events. For instance, the leakage for
instructions $\iread{x}$ or $\iwrite{x}{v}$ is the sequence:
\[
  \mathsf{r}~a[\epsilon],~\mathsf{w}~a[\epsilon],
\mathsf{r}~a[p[x][1]],~\mathsf{w}~a[p[x][1]], \ldots,
\mathsf{r}~a[p[x]],~\mathsf{w}~a[p[x]],~~\mathsf{w}~a[\epsilon].
\]
The leakage of a program is the concatenation of the leakage of its
instructions. Obliviousness states that executing two different programs with
the same number of instructions induce the same leakage, and that the leakage
does not depend on the initial contents of the ORAM\@.

Our proof uses an equivalent definition of obliviousness that is more convenient
for our purposes. Concretely, we encode the leakage of instructions $\iread{x}$
and $\iwrite{x}{v}$ by $p[x]$. Redefining the leakage in this way does not
affect the definition of obliviousness, but the advantage is that we can show
obliviousness by proving that the leakage of a program $P$ of length $k$ is
uniform over bitstrings of length $2 \cdot n\cdot k$.\footnote{%
  In contrast, using the straightforward definition of leakage would require
proving that a block of memory accesses is uniform over paths from root to leaf.
Although this can also be done in our logic, the proof becomes more cumbersome.}
Our compilation adds ghost code to record this leakage in an array $\ell$
indexed by the program counter $c$, which tracks the index of the current
instruction. Concretely, we add an assignment $\Assn{\ell[c].1}{p[x]}$ at the
beginning of each instruction (line 1 in both \textsf{read} and \textsf{write}
compilation), and an assignment $\Assn{\ell[c].2}{l}$ in corresponding
\textsf{flush} operations (line 2 in \textsf{flush}).

Under our encoding, proving security of the ORAM scheme reduces to showing that
for every program $P$ of length $k$ in our core language, the leakages
$(\ell[1].1, \ell[1].2),\ldots,(\ell[k].1, \ell[k].2)$ are independently and
uniformly distributed in the post-condition of the compiled version of $P$.

We sketch how to formalize this property as a uniformity property in our logic;
details are in \fullref{app:examples}. The overall strategy is to show that
after each instruction-flush pair, the entries of the position map and the
leakage are uniform and mutually independent. We establish two judgments
\begin{gather*}
  \psl
  {\Phi(j)}
  {\Seq{\mathcal{C}(\mathsf{read}(x))}{\Seq{\mathsf{flush}}{\Assn{c}{c + 1}}}}
  {\Phi(j + 1)}
  \\
  \psl
  {\Phi(j)}
  {\Seq{\mathcal{C}(\mathsf{write}(x, v))}{\Seq{\mathsf{flush}}{\Assn{c}{c + 1}}}}
  {\Phi(j + 1)}
\end{gather*}
for every $j$, where the invariant is defined to be
\[
  \Phi(j) \triangleq c = j
  \land \bigsep_{\beta \in [1, c)} \Unif[(\ell[\beta].1, \ell[\beta].2)]
  \sepand \bigsep_{\alpha \in \mathcal{X}} \Unif[p[\alpha]] .
\]
Starting from the pre-condition $\Phi_1 \triangleq \Phi(1)$, which asserts that
the position map is initialized uniformly and independently, repeatedly applying
\rname{Seqn} establishes
\[
  \psl{\Phi(1)}{\mathcal{C}(P)}{\Phi(k + 1)}.
\]
The post-condition implies our desired assertion:
\[
  \Psi \triangleq \bigsep_{\beta \in [1, k]} \Unif[(\ell[\beta].1, \ell[\beta].2)],
\]
which says that the distribution of instruction-flush leakage pairs is uniform
and independent.

\paragraph*{Discussion.}
The ORAM scheme by \citet{iacr/ChungP13a} has several additional wrinkles.
First, they define a recursive ORAM, which uses a hierarchy of trees to reduce
the internal memory; this construction can be encoded in our language, and we
conjecture that our proof can be extended to this more complex setting.
Moreover, \citet{iacr/ChungP13a} initialize the position map lazily, i.e.\, for
every instruction $\iwrite{x}{v}$, the ORAM checks whether the position map for
$x$ is already defined, and samples a fresh bitstring and extends the position
map otherwise. This lazy version can be modeled using conditionals, and our
proof can be adapted to this variant.

\citet{iacr/ChungP13a} also assume that buckets have a maximal size. As a
consequence, execution may fail if buckets overflow. Formalizing this variant
requires care. One option would be to prove that the leakage trace is uniform
conditioned on execution not failing, and that the probability of failing is
small. However, dealing with conditional uniformity is challenging. A better
solution may be to compare the distributions induced by executing the program
with finite buckets and with infinite buckets, showing that these distributions
are close.

\section{Related Work}%
\label{sec:rw}

The intersection of programming languages and security is broad; we limit our
attention to the most directly relevant work.

\paragraph*{Probabilistic independence.}
Our logic is the first program logic where probabilistic independence is the
central concept, but previous systems have also touched on
independence. \citet{DBLP:journals/pacmpl/DaraisLSW20}
define a type and effect system for proving properties of
probabilistic computations. Their effect system is based on a new
notion of probabilistic region, which they use to track probabilistic
dependencies. Their type system uses affine typing to ensure that
random variables are used at most once. They show the expressiveness of
their type and effect system with examples of simple ORAM and
tree-based ORAM\@. Although the two approaches have some similarities,
it seems challenging to compare the expressiveness of their type and
effect system and of our logic. One advantage of our logic is that it
admits an intuitive interpretation based on bunched logics.

\citet{BartheGZ09} define probabilistic Relational Hoare Logic (\textsc{pRHL}),
a program logic for proving relational specifications of probabilistic programs.
Their logic provides a flexible framework for proving information flow
properties of programs. These approaches are able to deal with Private
Information Retrieval, Multi-Party Computation, but are otherwise incomparable
to ours. In particular, it seems difficult to use their approaches for proving
security of ORAM in \textsc{pRHL}, without using an additional proof technique
called Eager/Lazy Sampling. On the other hand, they can prove that von Neumann's
trick, an algorithm to simulate a fair coin using a biased coin, yields a
uniform distribution, which appears out of reach of our current proof system.
\citet{BartheEGHS17} also show how to use this logic to prove uniformity and
independence for probabilistic programs, but these assertions can only be
established at the end of the program.

\SYSTEM{} is also related to \ELLORA, a program logic for probabilistic
programs~\citep{BEGGHS16}. \ELLORA{} works with a more standard assertion logic
based on first order logic, and allows assertions to directly describe
probabilities of events. This expressivity means that it is possible to reason
about independence as a defined assertion. \citet{BEGGHS16} propose an
``independence logic'' as a subsystem, but the rules are limited (e.g., it is
not possible to reason about probabilistic control flow).  In contrast,
probabilistic independence in \SYSTEM{} is handled implicitly by means of a
substructural logic. While this kind of logic is a bit exotic, we find that it
makes it possible to represent independence assertions more compactly and
integrate with mathematical axioms more smoothly.

In more specialized contexts, \citet{DBLP:conf/csfw/Smith03}
and \citet{DBLP:journals/iacr/HoangKM15} develop type systems for proving
computational security of modes of operation and authenticated encryption
schemes. Their type system enforces a strong invariant probabilistic
independence between different expressions, although the type system in itself
does not feature any specific judgment for probabilistic independence. It would
be very interesting to understand whether their results could be emulated and
generalized in our program logic. In a similar spirit,
\citet{DBLP:conf/eurocrypt/BartheBDFGS15} develop a proof system for proving
that programs are protected against power side-channels. Their proof system
makes an implicit but critical use of independence.

On the more foundational side, \citet{DBLP:conferences/lics/AckermanAFRR19}
study computability issues for (conditional) independence. Their work is
partially motivated by exchangeable sequences, which are closely related to
independence. Language-based investigations of exchangeable sequences can be
found for instance in~\citet{DBLP:conf/icalp/StatonSYAF018}. It would be
interesting to investigate how to reason about exchangeable sequences in our
logic.

\paragraph*{Separation logics for probabilistic programs.}
There have been two recent proposals for probabilistic separation logics.
\citet{DBLP:journals/pacmpl/BatzKKMN19} developed a logic \QSL{} for reasoning
about probabilistic, heap-manipulating programs. There, the connectives in BI
are interpreted as acting on \emph{expectations}, real-valued analogs of state
predicates. \citet{DBLP:journals/pacmpl/TassarottiH19} have also developed a
relational separation program logic for reasoning about concurrent probabilistic
programs. Both of these logics leverage standard notions of separation, from heap
separation logic and concurrent separation logic, respectively.

\paragraph*{BI and separation logics.}
Our work builds on fruitful lines of research on bunched implications and
separation logic. On the bunched implications side, our model uses the resource
interpretation of BI~\citep{DBLP:journals/tcs/PymOY04}; readers should consult
\citet{PymMono} or \citet{DochertyThesis} for more information. The idea of
using separation to model probabilistic independence has been considered
before~\citep{lozes2010},\footnote{%
  Peter O'Hearn and David Pym, personal communication.}
but we are not aware of concrete results in this area. From a different point of
view, \citet{DBLP:journals/entcs/Simpson18} develops categorical structures for
independence and conditional independence, encompassing independence in heaps,
nominal sets, and probability distributions. On the separation logic side, by
varying the notion of separation our work is another instance of separation
logic, alongside heap separation logic~\citep{OhRY01,IOh01} and concurrent
separation logic~\citep{DBLP:journals/tcs/OHearn07,DBLP:journals/tcs/Brookes07}.
These areas are too vast to survey here; the draft notes by~\citet{reynolds-SL}
are a good place to start.

\section{Conclusion and Future Directions}%
\label{sec:conclusion}

We have presented a novel separation logic for probabilistic programs, using a
probabilistic variant of the logic BI where separation models probabilistic
independence. Proofs in the separation logic reason in terms of higher-level
properties like independence and uniformity, and we have demonstrated our logic
to prove two forms of cryptographic security for a number of interesting
protocols. We see many possible directions for interesting future work.


\paragraph*{Completeness and decidability.}
Our logic is not complete: there are semantically valid judgments that are not
provable from our proof rules. There are several sources of incompleteness.
First, the proof rules for randomized conditionals is incomplete: it only allows
parts of the pre-condition that are independent of all variables in the guard
expression to be carried into the branches, while a finer analysis could allow
more general pre-conditions to be preserved or modified in a controlled way.
Another source of incompleteness is the interplay between uniformity,
independence, and the equational theory of the expression language; even with
just the xor operator, it is not clear how to give a complete axiomatization.

On the positive side, our logic may be relatively complete under the following
provisos: expressions are variables (i.e., the expression language has no
operators); programs are straightline code (i.e., sequences of assignments); and
assertions are regular or separating conjunctions of uniformity and equality
predicates. In this specific case, it could be possible to reflect a complete
dependency analysis into the program logic; since entailment between our
restricted set of assertions is decidable, this fragment of the logic may also
be decidable.

Decidability of entailment for larger classes of assertions is also an
interesting open problem.  The formulas in our logic bear a superficial
resemblance to the ``pointer logic'' underlying heap separation logic.
Decidability for some restricted fragments follows from a small model
property~\citep{Yang:2001:LRS:933728}. Unlike heap models of BI, our
probabilistic model works with a fixed collection of locations; however, our
setting has probabilistic correlations. We conjecture that fragments of our
logic may also enjoy a small model property, perhaps by tracking which subsets
of variables are mutually independent.


\paragraph*{Enriching the assertion logic.}
The assertion logic we have presented is based on intuitionistic, propositional
{BI}. We have found this logic to be convenient to work with, but other choices
are certainly possible. One natural alternative is to work with a classical
logic instead of an intuitionistic one; in standard separation logic, a
classical logic supports a useful, backwards style of reasoning through the
``magic wand'' connective. In the probabilistic setting, a classical logic runs
into trouble because probabilistic separation seems too strong---we cannot
freely assume that a variable is independent of the rest of the random
variables. However, it may be possible to weaken the notion of separation to
allow randomness to be shared in tightly controlled ways; we are currently
investigating a non-commutative version of BI for this purpose.

There are also natural extensions to our intuitionistic logic. Developing a
probabilistic model of predicate BI~\citep{DBLP:conf/lics/Pym99} would allow
substantially richer assertions. In particular, the lack of existential
quantifiers in our logic complicates our proofs and seems to be an obstacle to
defining a strongest post-condition calculus. Extending the logic to support
reasoning about conditioning would also help make the proof rules more precise.


\paragraph*{Supporting quantitative reasoning.}
While the assertions in our logic describe probability distributions, our logic
notably does not support quantitative reasoning: it is not possible to describe
the probability of an event, or the expected value of a function. This stands in
sharp contrast to other deductive techniques for probabilistic programs, such as
\PPDL~\citep{Kozen:1985} and \PGCL~\citep{Morgan:1996}. Incorporating some of
these tools for reasoning about numeric probabilities could extend the reach of
our logic. One possibility is to make an approximate version of the logic with
judgments of the form $\vdash_\epsilon \psl{\phi}{c}{\psi}$, stating that the
output distribution is at distance at most $\epsilon$ of a distribution
satisfying $\psi$. Such a logic could be obtained by combining ideas of
\SYSTEM{} with the union bound logic of~\citet{BGGHS16b}, and could be used to
reason about more advanced versions of our examples and further examples from
the cryptographic literature (e.g., the PRF/PRP Switching Lemma of
\citet{ImpagliazzoR88}).

 
\paragraph*{Modeling more advanced properties and cryptographic constructions.}
We have focused on basic, information-theoretic security properties from
cryptography in this paper. It would be interesting to understand whether our
logic can be used to capture other properties (e.g., active security).
Similarly, it would be interesting to explore potential applications of our
logic to other constructions, including more complex variants of the
constructions we have considered, e.g., Tree
ORAM~\citep{DBLP:conf/asiacrypt/ShiCSL11,DBLP:conf/pet/GentryGHJ0W13}, Path
ORAM~\citep{DBLP:conf/ccs/StefanovDSFRYD13}, Multi-Server
ORAM~\citep{DBLP:conf/asiacrypt/ChanKNPS18} as well as other constructions, such
as history independent data
structures~\citep{DBLP:conf/stoc/Micciancio97,DBLP:conf/stoc/NaorT01,DBLP:conf/ccs/WangNLCSSH14}.
More speculatively, it would be interesting to understand whether our logic
could be used for reasoning about computational security, or approximate notions
of independence and uniformity.

\begin{acks}
  We thank the anonymous reviewers and our shepherd Ohad Kammar for their close
  reading and useful suggestions. The present work was sparked during a workshop
  at McGill University's Bellairs Research Institute. This work was also
  partially supported by \grantsponsor{ONR}{Office of Naval
  Research}{https://www.onr.navy.mil/} under projects
  \grantnum{ONR}{N00014-12-1-0914}, \grantnum{ONR}{N00014-15-1-2750}, and
  \grantnum{ONR}{N00014-19-1-2292}, the University of Wisconsin, a Facebook TAV
  grant, an NSF Graduate Research Fellowship, and the Max Planck Institute for
  Software-Systems for hosting some of the authors.
\end{acks}

\bibliographystyle{ACM-Reference-Format}
\bibliography{header,main}

\iffull
\appendix

\section{Alternative Proof Rule for Conditionals}%
\label{app:CM}

When the branches of a conditional may modify the guard, rule \rname{RCond} does
not apply. We consider an alternative version:
\[
  \inferrule*[Left=RCondCM]
  { \vdash \psl{\phi \sepand b \sim \ktt}{c}{\psi}
    \\ \vdash \psl{\phi \sepand b \sim \kff}{c'}{\psi}
  \\ \psi \in \CM }
  { \vdash \psl{\phi \sepand \Dist[b]}{\RCond{b}{c}{c'}}{\psi} }
\]
In this case we are not able to show that the guard remains independent of the
post-condition---since it may have been modified---but we are still able to show
that branch post-conditions are preserved. In fact, we may relax the
side-condition on the branch post-condition.

\begin{definition}
  A formula $\phi$ is \emph{closed under mixtures} ($\CM$) if whenever $\mu,
  \mu'$ have the same domain and $(\sigma, \mu) \models \phi$ and $(\sigma,
  \mu') \models \phi$, then $\phi$ is preserved under convex combinations: for every
  $\rho \in [0, 1]$, we have:
  \[
    \dconv{\rho}{(\sigma, \mu)}{(\sigma, \mu')} \models \phi
  \]
\end{definition}

Intuitively, the final distribution after the conditional is a mixture of two
output distributions, one from each branch. Each component distribution
satisfies $\psi$, but the $\CM$ condition is needed to ensure that the mixture
also satisfies $\psi$. The following syntactic conditions ensure $\CM$.

\begin{lemma}\label{lem:CM}
The following assertions are $\CM$, where $\eta$ is $\SP$:
\begin{align*}
  \gamma ::= p_d
    \mid \gamma \land \gamma'
    \mid \eta \sepand \gamma
\end{align*}
\end{lemma}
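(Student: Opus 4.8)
The plan is to induct on the structure of $\gamma$, with the $\sepand$ clause carrying all the weight while the other two clauses are routine. For $p_d$, satisfaction depends only on the deterministic store $\sigma$ and not on the distribution at all, and convex combination of configurations keeps $\sigma$ fixed; hence if $(\sigma,\mu)\models p_d$ and $(\sigma,\mu')\models p_d$, then the mixture $\dconv{\rho}{(\sigma,\mu)}{(\sigma,\mu')}$ still satisfies $p_d$. For $\gamma \land \gamma'$, satisfaction is componentwise and the same mixing configuration witnesses both conjuncts, so the claim follows immediately from the two induction hypotheses.

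For the interesting clause $\eta \sepand \gamma$ with $\eta \in \SP$ and $\gamma$ a $\CM$ formula by the induction hypothesis, I would first use the Restriction lemma (\cref{lem:bi-restriction}) to extract canonical witnessing marginals. Suppose $(\sigma,\mu)\models\eta\sepand\gamma$ and $(\sigma,\mu')\models\eta\sepand\gamma$ with $\dom(\mu)=\dom(\mu')$. Unfolding the $\sepand$ semantics gives disjoint variable sets, independent in $\mu$, one marginal satisfying $\eta$ and the other satisfying $\gamma$. Projecting these down to the footprints $X$ (the domain of the minimal $\SP$-witness of $\eta$) and $Z = \FV(\gamma)\cap\RVar$ via Restriction, and using that the marginals of a product distribution are its factors, I obtain that $X$ and $Z$ are disjoint and independent in $\mu$, that $(\sigma,\pi_X(\mu))\models\eta$, and that $(\sigma,\pi_Z(\mu))\models\gamma$; the same holds for $\mu'$. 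Thus $\pi_{X\cup Z}(\mu)=\pi_X(\mu)\otimes\pi_Z(\mu)$ and $\pi_{X\cup Z}(\mu')=\pi_X(\mu')\otimes\pi_Z(\mu')$.

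The crux is the use of $\SP$: since $\eta$ is supported, all of its models over the fixed store $\sigma$ share a single minimal witness $\nu$ on $X$, which forces $\pi_X(\mu)=\pi_X(\mu')=\nu$. With the $X$-marginal identical in both configurations, the shared tensor factor pulls out of the convex combination,
\[
  \dconv{\rho}{\bigl(\nu\otimes\pi_Z(\mu)\bigr)}{\bigl(\nu\otimes\pi_Z(\mu')\bigr)} = \nu\otimes\dconv{\rho}{\pi_Z(\mu)}{\pi_Z(\mu')},
\]
because mixing is linear and $\nu$ distributes over the sum. Now $\pi_Z(\mu)$ and $\pi_Z(\mu')$ share the domain $Z$ and both satisfy $\gamma$, so the induction hypothesis gives $(\sigma,\dconv{\rho}{\pi_Z(\mu)}{\pi_Z(\mu')})\models\gamma$; together with $(\sigma,\nu)\models\eta$ and $X\cap Z=\emptyset$, the $\sepand$ semantics yields that $\nu\otimes\dconv{\rho}{\pi_Z(\mu)}{\pi_Z(\mu')}$ satisfies $\eta\sepand\gamma$. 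Since marginalization commutes with mixing, this distribution is exactly $\pi_{X\cup Z}(\dconv{\rho}{\mu}{\mu'})$, so a final appeal to Restriction lifts the conclusion back to $\dconv{\rho}{\mu}{\mu'}$ on its full domain.

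I expect the main obstacle to be the bookkeeping around the two a priori different $\sepand$-decompositions of $\mu$ and $\mu'$: the witnessing variable splits need not coincide across the two models, so the argument must first canonicalize them to the syntactic footprints $X$ and $Z$ using Restriction, and must check that independence of the larger witnessing sets descends to independence of $X$ and $Z$. Once this alignment is secured, the genuinely essential step — that $\SP$ collapses both $\eta$-marginals to the single distribution $\nu$, letting the product factor out of the mixture — reduces to a one-line linear computation, after which the induction hypothesis disposes of the $\gamma$-part.
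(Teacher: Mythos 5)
Your proof is correct and follows essentially the same route as the paper's: induction on $\gamma$, with the $\eta \sepand \gamma$ case handled by using $\SP$ to force a single common $\eta$-witness shared by both configurations, pulling that common factor out of the convex combination by linearity, and closing with the induction hypothesis applied to the mixture of the $\gamma$-marginals (which share a domain). The only step the paper makes explicit that you leave implicit is a preliminary induction establishing $\models \gamma \to \Dist[\FV(\gamma)]$, which is what licenses projecting the $\gamma$-witness onto the footprint $Z = \FV(\gamma) \cap \RVar$, i.e., guarantees that $Z$ is contained in the witness's domain so that your canonicalization via the Restriction lemma is well-defined.
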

\begin{proof}
  First, we can show $\models \gamma \to \Dist[\FV(\gamma)]$ by induction on
  $\gamma$. The main lemma then follows by induction on $\gamma$. The only
  interesting case is the last one, when $\gamma = \eta \sepand \gamma'$
  where $\eta$ is $\SP$ and $\gamma'$ is $\CM$.

  Suppose that $(\sigma, \mu) \models \eta \sepand \gamma'$ and $(\sigma, \mu')
  \models \eta \sepand \gamma'$. By validity there are $(\sigma, \nu_1) \models
  \eta$ and $(\sigma, \nu_2) \models \gamma'$ separate such that $(\sigma,
  \nu_1) \circ (\sigma, \nu_2) \sqsubseteq (\sigma, \mu)$, and $(\sigma, \nu_1')
  \models \eta$ and $(\sigma, \nu_2') \models \gamma'$ separate such that
  $(\sigma, \nu_1') \circ (\sigma, \nu_2') \sqsubseteq (\sigma, \mu')$; since
  $\eta$ is $\SP$ we may assume that $\nu_1 = \nu_1'$, and by restriction we may
  assume that $\dom(\nu_2) = \dom(\nu_2') = \FV(\gamma')$ since $\models \gamma'
  \to \Dist[\FV(\gamma')]$. So for any $\rho \in [0, 1]$, we have:
  \begin{align*}
    (\dconv{\rho}{(\sigma, \nu_1)}{(\sigma, \nu_1')}) \circ (\dconv{\rho}{(\sigma, \nu_2)}{(\sigma, \nu_2')})
    &= (\sigma, \nu_1) \circ (\dconv{\rho}{(\sigma, \nu_2)}{(\sigma, \nu_2')})
    \\
    &= (\dconv{\rho}{((\sigma, \nu_1) \circ (\sigma, \nu_2))}{((\sigma, \nu_1) \circ (\sigma, \nu_2'))})
    \\
    &= (\dconv{\rho}{((\sigma, \nu_1) \circ (\sigma, \nu_2))}{((\sigma, \nu_1') \circ (\sigma, \nu_2'))})
    \\
    &\sqsubseteq \dconv{\rho}{(\sigma, \mu)}{(\sigma, \mu')} .
  \end{align*}
  We can conclude, since $\dconv{\rho}{(\sigma, \nu_1)}{(\sigma, \nu_1')} =
  (\sigma, \nu_1) \models \eta$ and $\dconv{\rho}{(\sigma, \nu_2)}{(\sigma,
  \nu_2')} = (\sigma, \nu_2) \models \gamma'$ by induction on $\gamma'$.
\end{proof}

\begin{example}[Non-$\CM$ assertions]
  A simple example of an assertion that is not covered by \cref{lem:CM} is $\phi
  \triangleq \Dist[x] \sepand \Dist[y]$, where $x, y \in \RVar$ are randomized
  variables. In fact, $\phi$ is not $\CM$. To see why, suppose that $x$ and $y$
  are both boolean and consider the distributions $\mu_1 \triangleq \delta_{(x
    \mapsto \ktt, y \mapsto \ktt)}$ and $\mu_2 \triangleq \delta_{(x \mapsto \kff,
  y \mapsto \kff)}$. Then $\phi$ holds in $\mu_1$ and $\mu_2$---in these
  distributions $x$ and $y$ are deterministic, hence independent---but $\phi$ does
  not hold in $\dconv{p}{\mu_1}{\mu_2}$ for any $p \in (0, 1)$.

  Indeed, allowing $\phi$ as a post-condition in \rname{RCondCM} would not be
  sound. Consider the following program:
  \[
    c \triangleq \RCond{x}{\Assn{y}{x}}{\Assn{y}{x}} .
  \]
  Then $\phi$ would be a sound post-condition for each branch, since $x$ and $y$
  are deterministic and hence independent. But $c$ is semantically equal to
  $\Assn{y}{x}$, and $\phi$ is clearly not a sound post-condition.
\end{example}

All $\SP$ assertions are $\CM$, but some $\CM$ assertions are not $\SP$.

\begin{example}[Non-$\SP$ assertions]
  The formula $\phi = \Unif[x] \land \Unif[y]$ is $\CM$ by \cref{lem:CM}, but
  not $\SP$: the following programs have $\phi$ as a post-condition, but have
  incomparable output distributions.
  \begin{align*}
    c_1 &\triangleq \Rand{x}{\Unif_\mathbb{B}}; \Assn{y}{x} \\
    c_2 &\triangleq \Rand{x}{\Unif_\mathbb{B}}; \Assn{y}{\neg x}
  \end{align*}
\end{example}

\section{Omitted Proofs}%
\label{app:proofs}

\LEMrestriction*
\begin{proof}
  The reverse direction follows by the Kripke monotonicity. The forward
  direction follows by induction on $\phi$.
  \begin{itemize}
    \item $\phi \equiv \top, \bot$, and atomic propositions $p$. Trivial.
    \item $\phi \equiv \phi_1 \land \phi_2$. By induction, we have
      \[
        (\sigma, \pi_{\FV(\phi_1)}(\mu) ) \models \phi_1
        \quad\text{and}\quad (\sigma, \pi_{\FV(\phi_2)}(\mu) ) \models \phi_2 .
      \]
      By Kripke monotonicity, we have
      \[
        (\sigma, \pi_{\FV(\phi_1, \phi_2)}(\mu)) \models \phi_1
        \quad\text{and}\quad
        (\sigma, \pi_{\FV(\phi_1, \phi_2)}(\mu)) \models \phi_2
      \]
      so $(\sigma, \pi_{\FV(\phi_1 \land \phi_2)}(\mu)) \models \phi_1 \land \phi_2$.
    \item $\phi \equiv \phi_1 \lor \phi_2$. By induction, we have $(\sigma,
      \pi_{\FV(\phi_i)}(\mu) ) \models \phi_i$ for $i = 1$ or $i = 2$.  By Kripke
      monotonicity, we have $(\sigma, \pi_{\FV(\phi_1, \phi_2)}(\mu)) \models
      \phi_i$ so $(\sigma, \pi_{\FV(\phi_1 \land \phi_2)}(\mu)) \models \phi_1
      \lor \phi_2$.
    \item $\phi \equiv \phi_1 \to \phi_2$. Take any $(\sigma', \mu') \sqsupseteq
      (\sigma, \pi_{\FV(\phi_1, \phi_2)}(\mu))$ such that $(\sigma', \mu')
      \models \phi_1$.  There exists a distribution $\mu''$ such that
      $\dom(\mu'') = \dom(\mu) \cup \dom(\mu')$, and $\pi_{\dom(\mu)}(\mu'') =
      \mu$ and $\pi_{\dom(\mu')}(\mu'') = \mu'$. In particular, $(\sigma',
      \mu'') \sqsupseteq (\sigma, \mu)$. By Kripke monotonicity, we have
      $(\sigma', \mu'') \models \phi_1$ and by validity, we have $(\sigma',
      \mu'') \models \phi_2$. By induction, $(\sigma', \pi_{\FV(\phi_2)}(\mu''))
      \models \phi_2$. Since $(\sigma', \pi_{\FV(\phi_2)}(\mu'')) \sqsubseteq
      (\sigma', \mu')$, Kripke monotonicity gives $(\sigma', \mu') \models
      \phi_2$. So, $(\sigma, \pi_{\FV(\phi_1 \to \phi_2)}(\mu)) \models \phi_1
      \to \phi_2$ as desired.
    \item $\phi \equiv \phi_1 \sepand \phi_2$. There exists $(\sigma_1, \mu_1)$
      and $(\sigma_2, \mu_2)$ with $(\sigma_1, \mu_1) \circ (\sigma_2, \mu_2)
      \sqsubseteq (\sigma, \mu)$ and $(\sigma_1, \mu_1) \models \phi_1$ and
      $(\sigma_2, \mu_2) \models \phi_2$. By induction, we have $(\sigma_1,
      \pi_{\FV(\phi_1)}(\mu_1)) \models \phi_1$ and $(\sigma_2,
      \pi_{\FV(\phi_2)}(\mu_2)) \models \phi_2$. By Kripke monotonicity, we have
      $(\sigma_1, \pi_{\FV(\phi_1 \sepand \phi_2)}(\mu_1)) \models \phi_1$ and
      $(\sigma_2, \pi_{\FV(\phi_1 \sepand \phi_2)}(\mu_2)) \models \phi_2$. Now,
      it is not hard to show that since $(\sigma_1, \mu_1) \circ (\sigma_2,
      \mu_2)$ is defined, $(\sigma_1, \pi_{\FV(\phi_1 \sepand \phi_2)}(\mu_1))
      \circ (\sigma_2, \pi_{\FV(\phi_1 \sepand \phi_2)}(\mu_2)) \sqsubseteq
      (\sigma, \pi_{\FV(\phi_1 \sepand \phi_2)}(\mu))$ is defined as well. So,
      $(\sigma, \pi_{\FV(\phi_1 \sepand \phi_2)}(\mu)) \models \phi_1 \sepand
      \phi_2$ as desired.
    \item $\phi \equiv \phi_1 \sepimp \phi_2$. Take any $(\sigma', \mu')$ such
      that $(\sigma', \mu') \circ (\sigma, \pi_{\FV(\phi_1 \sepimp \phi_2)}(\mu))
      \downarrow$ and $(\sigma', \mu') \models \phi_1$. If $(\sigma', \mu')
      \circ (\sigma, \mu)\downarrow$, then $(\sigma', \mu') \circ (\sigma, \mu)
      \models \phi_2$ and by induction, $(\sigma', \pi_{\FV(\phi_1 \sepimp
      \phi_2)}(\mu')) \circ (\sigma, \pi_{\FV(\phi_1 \sepimp \phi_2)}(\mu))
      \models \phi_2$. Kripke monotonicity gives $(\sigma', \mu') \circ (\sigma,
      \pi_{\FV(\phi_1 \sepimp \phi_2)}(\mu)) \models \phi_2$.

      Otherwise, suppose that $(\sigma', \mu') \circ (\sigma, \mu)$ is not
      defined. Since $(\sigma', \mu') \circ (\sigma, \pi_{\FV(\phi_1 \sepimp
      \phi_2)}(\mu)) \downarrow$, it must be the case that $\emptyset \neq
      \dom(\mu') \cap \dom(\mu) \subseteq \RVar \setminus \FV(\phi_1 \sepimp
      \phi_2)$. Accordingly, $(\sigma', \pi_{\FV(\phi_1)}(\mu')) \circ (\sigma,
      \mu)\downarrow$. By induction, $(\sigma', \pi_{\FV(\phi_1)}(\mu')) \models
      \phi_1$ and so $(\sigma', \pi_{\FV(\phi_1)}(\mu')) \circ (\sigma, \mu)
      \models \phi_2$. By induction again, $(\sigma', \pi_{\FV(\phi_1) \cap
      \FV(\phi_2)}(\mu')) \circ (\sigma, \pi_{\FV(\phi_2)}(\mu)) \models \phi_2$.
      By Kripke monotonicity and the fact that the extension is defined, we have
      $(\sigma', \mu') \circ (\sigma, \pi_{\FV(\phi_1 \sepimp \phi_2)}(\mu))
      \models \phi_2$. So, $(\sigma, \pi_{\FV(\phi_1 \sepimp \phi_2)}(\mu))
      \models \phi_1 \sepimp \phi_2$ as desired.
    \end{itemize}
\end{proof}

\LEMextrusion*
\begin{proof}
  Let $(\sigma, \mu) \models (\phi \sepand \psi) \land \eta$. By validity of the
  first conjunct, there exists separate $(\sigma_1, \mu_1) \models \phi$ and
  $(\sigma_2, \mu_2) \models \psi$. Since $\models \phi \to \Dist[\FV(\eta)]$, we
  have $\FV(\eta) \subseteq \dom(\mu_1)$. By restriction
  (\cref{lem:bi-restriction}) and the fact that $\eta$ is valid in $(\sigma,
  \mu)$, we have $(\sigma_1, \mu_1) \models \eta$. Thus $(\sigma, \mu) \models
  (\phi \land \eta) \sepand \psi$ and so $\models (\phi \sepand \psi) \land \eta
  \to (\phi \land \eta) \sepand \psi$, as desired.
\end{proof}

\LEMsim*
\begin{proof}
  Almost immediate from the definitions; we show (S3). Suppose that $(\sigma,
  \mu)$ is a configuration, and let $(\sigma', \mu') \sqsupseteq (\sigma, \mu)$
  be any larger configuration. If $(\sigma', \mu') \models e_r \sim e_r'$, then
  then free variables of $e_r$ and $e_r'$ are contained in $\dom(\mu')$, and so
  are the free variables of $e_r''$. Now for all $m \in \supp(\mu')$, we have
  $\denot{e_r}(\sigma', m) = \denot{e_r'}(\sigma', m) = \denot{e_r''}(\sigma',
  m)$, where the last equality follows from $\models_E e_r' = e_r''$. Hence
  $(\sigma', \mu') \models e_r \sim e_r''$, as desired.
\end{proof}

\LEMunif*
\begin{proof}
  Almost immediate from definitions; axiom (U3) follows from the fact that the
  uniform distribution is preserved under bijections of its domain.
\end{proof}

\THMsoundness*
\begin{proof}
  By induction on the derivation. Let $(\sigma, \mu)$ satisfy the pre-condition
  of the conclusion. 
  \begin{description}
    \item[DAssn.] By induction on $\psi$.
    \item[Skip.] Trivial.
    \item[Seqn.] By induction hypothesis.
    \item[DCond.] By induction hypothesis and case analysis.
    \item[DLoop.] Since the guard is deterministic and the loop is assumed to
      terminate on all inputs, the number of iterations is a function of the
      deterministic input store and we have:
      \[
        \denot{\DWhile{b}{c}}(\sigma, \mu)
        = \denot{c^{N(\sigma)}}(\sigma, \mu)
        = \denot{c}^{N(\sigma)}(\sigma, \mu)
      \]
      where $c^k \triangleq c \mathbin{;} \cdots \mathbin{;} c$ is the $k$-fold
      sequential composition of $c$. Soundness follows by repeatedly applying
      the induction hypothesis from $c$.
    \item[RAssn.] Trivial.
    \item[RSamp.] Trivial.
    \item[RDCond.] Since $(\sigma, \mu) \models \phi$, either $(\sigma, \mu)
      \models b \sim \ktt$ or $(\sigma, \mu) \models b \sim \kff$.
      Note that exactly one case holds, since $\models b \sim \ktt \to
      \neg (b \sim \kff)$ and vice versa. If $(\sigma, \mu) \models b
      \sim \ktt$ holds, then $(\sigma, \mu) \models \phi \land b \sim
      \ktt$ and since $\denot{\RCond{b}{c}{c'}}(\sigma, \mu) =
      \denot{c}(\sigma, \mu)$, we can conclude by induction. The case $(\sigma,
      \mu) \models b \sim \kff$ is similar.
    \item[RCondCM.] There exist $\mu_1, \mu_2$ such that $\mu_1 \circ \mu_2
      \sqsubseteq \mu$, and $(\sigma, \mu_1) \models \phi$ and $(\sigma, \mu_2)
      \models \Dist[b]$.  Let $\rho$ be the probability $\denot{b =
      \ktt}(\sigma, \mu_2)$. We may assume that $\rho \in (0, 1)$; if
      $\rho$ is equal to zero or one then we can conclude by induction.

      By the semantics of commands, we have
      \[
        \denot{\RCond{b}{c}{c'}}(\sigma, \mu) = \rho \cdot \denot{c}(\sigma,
        \mu_t) + (1 - \rho) \cdot \denot{c'}(\sigma, \mu_f)
      \]
      where $\mu_t$ is the distribution $\mu$ conditioned on $b = \ktt$,
      and $\mu_f$ is the distribution $\mu$ conditioned on $b = \kff$.
      Note that the final deterministic states must be equal to the initial
      deterministic state $\sigma$ in both branches, due to the syntactic
      restriction. 

      Furthermore since $\mu_1$ and $\mu_2$ are independent, we can decompose
      $\mu_1 \circ \mu_{2, t} \sqsubseteq \mu_t$ and $\mu_1 \circ \mu_{2, f}
      \sqsubseteq \mu_f$ such that  $(\sigma, \mu_{2, t}) \models b \sim \ktt$
      and $(\sigma, \mu_{2, f}) \models b \sim \kff$. Since $(\sigma, \mu_1)
      \models \phi$, we know that $(\sigma, \mu_t) \models \phi \sepand b \sim
      \ktt$ and $(\sigma, \mu_f) \models \phi \sepand b \sim \kff$ so the
      induction hypothesis gives:
      \[
        \denot{c}(\sigma, \mu_t) \models \psi
        \text{ and }
        \denot{c'}(\sigma, \mu_f) \models \psi .
      \]
      Since $\psi$ is $\CM$, we can conclude
      \[
        \denot{\RCond{b}{c}{c'}}(\sigma, \mu) 
        = \rho \cdot \denot{c}(\sigma, \mu) + (1 - \rho) \cdot \denot{c'}(\sigma, \mu) 
        \models \psi
      \]
      so the post-condition holds.
    \item[RCond.] The proof goes much like the proof of \rname{RCondCM}; let
      $\mu_t, \mu_f, \rho$ be as before. Recall that by the induction
      hypothesis, we have:
      \[
        \denot{c}(\sigma, \mu_t) \models \psi \sepand b \sim \ktt
        \text{ and }
        \denot{c'}(\sigma, \mu_f) \models \psi \sepand b \sim \kff .
      \]
      Since the top-level command is a randomized conditional, the final
      deterministic state must be the same for both branches; call it $\sigma'$.
      We can decompose the output states into
      \[
        (\sigma', \nu) \circ (\sigma', \nu_t) \sqsubseteq \denot{c}(\sigma, \mu_t)
        \text{ and }
        (\sigma', \nu) \circ (\sigma', \nu_f) \sqsubseteq \denot{c}(\sigma, \mu_f)
      \]
      such that 
      \[
        (\sigma', \nu) \models \psi
        \text{ and }
        (\sigma', \nu_t) \models b \sim \ktt
        \text{ and }
        (\sigma', \nu_f) \models b \sim \kff
      \]
      noting that $\nu$ can be taken to be the same in both branches since $\psi
      \in \SP$; by \cref{lem:bi-restriction}, we may also assume that
      $\dom(\nu_t) = \dom(\nu_f)$. Thus, we have:
      \begin{align*}
        \rho \cdot (\sigma', \nu) \circ (\sigma', \nu_t)  
        + (1 - \rho) \cdot (\sigma', \nu) \circ (\sigma', \nu_f) 
        &= (\sigma', \rho \cdot (\nu \otimes \nu_t) + (1 - \rho) \cdot (\nu \otimes \nu_f)) \\
        &= (\sigma', \nu \otimes (\dconv{\rho}{\nu_t}{\nu_f})) \\
        &= (\sigma', \nu) \circ (\sigma', (\dconv{\rho}{\nu_t}{\nu_f})) \\
        &\sqsubseteq \denot{\RCond{b}{c}{c'}}(\sigma, \mu) ,
      \end{align*}
      and we can conclude since $(\sigma', \nu) \models \psi$ and $(\sigma',
      (\dconv{\rho}{\nu_t}{\nu_f})) \models \Dist[b]$.
    \item[Weak.] By induction hypothesis and semantics of implication.
    \item[True.] Trivial.
    \item[Conj.] By induction hypothesis and semantics of conjunction.
    \item[Case.] By case analysis.
    \item[RCase.] Essentially the same as \rname{RCond}.
    \item[Const.] The fact that $\denot{c}(\sigma, \mu) \models \psi$ follows by
      induction. To show $\denot{c}(\sigma, \mu) \models \eta$, by the
      restriction property we have $(\sigma, \pi_{\FV(\eta)}(\mu)) \models \eta$
      initially, and since the free variables of $\eta$ are disjoint from the
      modified variables of $c$, we have $(\denot{c}\sigma,
      \pi_{\FV(\eta)}(\denot{c}\mu)) \models \eta$ as well. Thus, there is
      restriction of the output where $\eta$ holds, thus $\denot{c}(\sigma, \mu)
      \models \eta$ as desired.
    \item[Frame.] There exist $\mu_1, \mu_2$ such that $\mu_1 \circ \mu_2
      \sqsubseteq \mu$, and $(\sigma, \mu_1) \models \phi$ and $(\sigma, \mu_2)
      \models \eta$; let $S_1 \triangleq \dom(\mu_1)$, and note that $T \cup
      \RV(c) \subseteq S_1$ by the last side-condition.
      
      By the restriction property we have $(\sigma, \pi_{\FV(\eta)}(\mu_2))
      \models \eta$; let $S_2 \triangleq \dom(\mu_2) \cap \FV(\eta)$ and note
      that $S_1$ and $S_2$ are disjoint. Let $S_3$ be the set of all variables
      not contained in $S_1$ or $S_2$. Since $\WV(c)$ is disjoint from $S_2$ by
      the first side-condition, we must have $\WV(c) \subseteq S_1 \cup S_3$.
      
      By induction, we have $\denot{c}(\sigma, \mu) \models \psi$. The
      restriction property gives $(\denot{c}\sigma,
      \pi_{\FV(\psi)}(\denot{c}\mu)) \models \psi$.

      By the third side-condition, $\RV(c) \subseteq S_1$. By soundness of $\RV$
      and $\WV$, all variables in $\WV(c)$ must be written to before they are
      read and there is a function $F : \RMem[S_1] \to \Dist(\RMem[\WV(c) \cup
      S_1])$ such that:
      \[
        (\denot{c}\sigma, \pi_{\WV(c) \cup S_1}(\denot{c}\mu))
        = (\denot{c}\sigma, \dbind(\mu, m \mapsto F(\pi_{S_1}(m)))) .
      \]
      Since $S_2 \subseteq \FV(\eta)$, variables in $S_2$ are not in $\MV(c)$ by
      the first side-condition, and $S_2$ is disjoint from $\WV(c) \cup S_1$. By
      soundness of $\MV$, we have:
      \[
        (\denot{c}\sigma, \pi_{(\WV(c) \cup S_1) \cup S_2}(\denot{c}\mu))
        = (\denot{c}\sigma, \dbind(\pi_{(\WV(c) \cup S_1) \cup S_2}(\mu),
        (m_1, m_2) \mapsto F(m_1) \otimes \dunit(m_2))) .
      \]
      Since $S_1$ and $S_2$ are independent in $\mu$, we know that $S_1 \cup
      \WV(c)$ and $S_2$ are independent in $\denot{c}(\sigma, \mu)$ as well.
      Hence:
      \[
        (\denot{c}\sigma, \denot{c}\mu)
        \sqsupseteq
        (\denot{c}\sigma, \pi_{S_1 \cup \WV(c)}(\denot{c}\mu))
        \circ (\denot{c}\sigma, \pi_{S_2}(\denot{c}\mu)) .
      \]
      We know that $FV(\psi) \subseteq T \cup \WV(c) \subseteq S_1 \cup \WV(c)$
      so since $\psi$ is valid in $\denot{c}(\sigma, \mu)$, it is valid in the
      first conjunct by the restriction property and the second side-condition.
      Since $\pi_{S_2}(\denot{c}\mu) = \pi_{S_2}(\mu)$, and $\eta$ does not
      depend on modified deterministic variables, $\eta$ is valid in the second
      conjunct. Thus, we can conclude:
      \[
        \denot{c}(\sigma, \mu) \models \psi \sepand \eta .
        \qedhere
      \]
  \end{description}
\end{proof}

The proof of the last case relies on the following useful fact connecting
independence and distribution bind.

\begin{lemma}
  Let $A_1, A_2$ be disjoint and let $B_1, B_2$ be disjoint, and consider
  functions $F_i : A_i \to \Dist(\RMem[B_i])$ for $i = 1, 2$. For any two
  distributions $\mu_i \in \Dist(\RMem[A_i])$, we have:
  \[
    \dbind(\mu_1 \otimes \mu_2, (m_1, m_2) \mapsto F_1(m_1) \otimes F_2(m_2))
    = 
    \dbind(\mu_1, m_1 \mapsto F_1(m_1))
    \otimes 
    \dbind(\mu_2, m_2 \mapsto F_2(m_2)) .
  \]
\end{lemma}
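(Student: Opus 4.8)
The plan is to prove the two distributions equal by evaluating both at an arbitrary point. Both sides are elements of $\Dist(\RMem[B_1 \cup B_2])$: the right-hand product is well-defined precisely because $B_1$ and $B_2$ are disjoint, and the left-hand bind lands in the same space since each $F_1(m_1) \otimes F_2(m_2)$ does. So I would fix any memory $n \in \RMem[B_1 \cup B_2]$ and, using disjointness of $B_1$ and $B_2$, decompose it uniquely as $n = (n_1, n_2)$ with $n_i \in \RMem[B_i]$. It then suffices to show that the two sides assign equal probability to $(n_1, n_2)$.

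For the left-hand side I would unfold the definition of $\dbind$ and then of $\otimes$. The sum in $\dbind$ ranges over $\RMem[A_1 \cup A_2]$, which---because $A_1$ and $A_2$ are disjoint---factors as a product $\RMem[A_1] \times \RMem[A_2]$; rewriting the index accordingly turns the single sum into a double sum over $m_1 \in \RMem[A_1]$ and $m_2 \in \RMem[A_2]$. After expanding $(\mu_1 \otimes \mu_2)(m_1, m_2) = \mu_1(m_1)\,\mu_2(m_2)$ and $(F_1(m_1) \otimes F_2(m_2))(n_1, n_2) = F_1(m_1)(n_1)\,F_2(m_2)(n_2)$, the generic summand becomes $\mu_1(m_1)\,F_1(m_1)(n_1) \cdot \mu_2(m_2)\,F_2(m_2)(n_2)$.

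The final step is to observe that this summand factors as a product of a quantity depending only on $m_1$ and one depending only on $m_2$, so the double sum factors into a product of two independent single sums. Each factor is precisely $\dbind(\mu_i, F_i)(n_i)$ by definition of bind, and their product is $(\dbind(\mu_1, F_1) \otimes \dbind(\mu_2, F_2))(n_1, n_2)$ by definition of $\otimes$, which is exactly the right-hand side evaluated at $(n_1, n_2)$.

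This argument is a purely definitional computation, so I do not expect a substantial obstacle. The only points requiring care are that the two disjointness hypotheses are exactly what make both the index set $\RMem[A_1 \cup A_2]$ and the target $\RMem[B_1 \cup B_2]$ factor as products, and that---since all terms are nonnegative---the factoring of the (possibly countably infinite) double sum into a product of sums is unconditionally valid, so no convergence or reordering subtleties arise.
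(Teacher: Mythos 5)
Your proposal is correct and coincides with the paper's approach: the paper's entire proof of this lemma is the single phrase ``By direct calculation,'' and your pointwise argument---decomposing $n \in \RMem[B_1 \cup B_2]$ as $(n_1, n_2)$, unfolding $\dbind$ and $\otimes$, and factoring the double sum over $\RMem[A_1] \times \RMem[A_2]$ into the product of the two single-sided binds---is precisely that calculation carried out in full. Your closing remark that nonnegativity of all terms makes the factoring of the countably infinite double sum unconditionally valid is a correct and appropriate point of care that the paper leaves implicit.
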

\begin{proof}
  By direct calculation.
\end{proof}

\LEMaxioms*
\begin{proof}
  We prove the slightly more general version with a finite set of expressions $\{
  e_i \}$ each with a single random variable $x_i$, and $\{ x_i \}$ are
  distinct. Let $(\sigma, \mu)$ be any configuration.

  For the first axiom, by validity of the left-hand side the configuration can
  be decomposed into a sequence of independent products: $(\sigma, \mu_1) \circ
  \cdots \circ (\sigma, \mu_n) \sqsubseteq (\sigma, \mu)$ such that $x_i \in
  \dom(\mu_i)$ and $(\sigma, \mu_i) \models \Unif_{S_i}[e_i]$. By the
  restriction property, we may assume that $\dom(\mu_i) = \FV(e) = \{ x_i \}$.
  Now the tuple $(e_1, \dots, e_n)$ is uniform in $(\sigma, \mu_1) \circ \cdots
  \circ (\sigma, \mu_n)$, and so $\Unif_{S_1 \times \cdots \times S_n}[(e_1,
  \dots, e_n)]$ holds in a restriction of $(\sigma, \mu)$. The other direction
  is similar.

  For the second axiom, suppose that
  \[
    (\sigma, \mu) \models \Unif_{\mathbb{Z}_q}[e_1] \sepand \Dist[e_2] \sepand
    \cdots \sepand \Dist[e_n] \land e_0 \sim e_1 + \cdots + e_n \mod q .
  \]
  By validity and restriction, we can again decompose $(\sigma, \mu_1) \circ
  \cdots \circ (\sigma, \mu_n) \sqsubseteq (\sigma, \mu)$ such that $(\sigma,
  \mu_1) \models \Unif_{\mathbb{Z}_q}[e_1]$ and $(\sigma, \mu_i) \models
  \Dist[e_i]$ for $i > 1$, and $\dom(\mu_i) = \FV(e_i) = \{ x_i \}$. Now, $e_1 +
  \cdots + e_n \mod q$ is distributed uniformly in $\mu$, since for any
  realization of $e_2, \dots, e_n$ and any $z \in \mathbb{Z}_q$, there is
  exactly one value of $e_1$ that will make $e_1 + \cdots + e_n = z \mod q$ and
  $e_1$ is uniformly distributed, so each $z$ has equal probability. For the
  same reason, $e_1 + \cdots + e_n$ is independent of the joint distribution of
  $(e_2, \dots, e_n)$ in $\mu$. Thus, we have:
  \[
    (\sigma, \mu) \models \Unif_{\mathbb{Z}_q}[e_0] \sepand \Dist[(e_2, \dots, e_n)] .
  \]
  Since $(\sigma, \mu) \models \Dist[e_2] \sepand \cdots \sepand \Dist[e_n]$,
  extrusion gives
  \[
    (\sigma, \mu) \models \Unif_{\mathbb{Z}_q}[e_0] \sepand \Dist[e_2] \sepand \cdots \sepand \Dist[e_n] 
  \]
  as desired.
\end{proof}

\section{Examples: Additional Details}%
\label{app:examples}

\subsection{Private Information Retrieval}

\subsubsection{Proof of Uniformity}

Starting from the trivial pre-condition $\Phi_1 \triangleq \top$, we
would like to prove the post-condition
\[
  \Psi \triangleq \underbrace{\Unif[q_0]}_{\mathclap{\text{$\mathcal{S}_0$'s view}}} \land \underbrace{\Unif[q_1]}_{\mathclap{\text{$\mathcal{S}_1$'s view}}}.
\]
This says that the views of $\mathcal{S}_0$ and $\mathcal{S}_1$ ($q_0$ and $q_1$, respectively) are
uniformly random bitstrings.

By \rname{RSamp}, adjoining the sampling for  $q_0$ (line 1) gives
\[
  \Unif[q_0].
\]
Since $I$ is a deterministic variable, we can adjoin $\Dist[I]$, giving
\[
  \Unif[q_0] \sepand \Dist[I].
\]
By \rname{RAssn*}, assigning to $q_1$ (line 2) gives
\[
  \Unif[q_0] \sepand \Dist[I] \mathrel{\land} q_1 \sim q_0 \mathrel{\oplus} I.
\]
Next, we can pull out $\Unif[q_0]$ like so
\[
  \Unif[q_0] \mathrel{\land} (\Unif[q_0] \sepand \Dist[I] \mathrel{\land} q_1 \sim q_0 \mathrel{\oplus} I).
\]
and apply the xor axiom (\ref{ax:xor}) to the right conjunct, which gives the
desired post-condition
\[
  \Psi \triangleq \Unif[q_0] \mathrel{\land} \Unif[q_1].
\]
Since $q_0$ and $q_1$ are unmodified in the remainder of the program, we can
preserve $\Psi$ through to the end using \rname{Const} and \rname{True}.

\subsubsection{Proof of Input Independence}

Starting from the pre-condition $\Phi_1 \triangleq \Dist[I]$, we would
like to prove the post-condition
\[
  \Psi \triangleq \underbrace{\Dist[I]}_{\mathclap{\text{Index}}} \sepand \underbrace{\Unif[q_0]}_{\mathclap{\text{$\mathcal{S}_0$'s
  view}}} \mathrel{\land} \underbrace{\Dist[I]}_{\mathclap{\text{Index}}} \sepand \underbrace{\Unif[q_1]}_{\mathclap{\text{$\mathcal{S}_1$'s
  view}}}.
\]
This says that the views of $\mathcal{S}_0$ and $\mathcal{S}_1$ ($q_0$ and $q_1$, respectively) are
independent of secret index $I$.

By \rname{RSamp*}, adjoining the sampling for $q_0$ (line 1) gives
\[
  \Unif[q_0] \sepand \Dist[I].
\]
By \rname{RAssn*}, assigning to $q_1$ (line 2) gives
\[
  \Unif[q_0] \sepand \Dist[I] \mathrel{\land} q_1 \sim q_0 \mathrel{\oplus} I.
\]
Next, we can pull out $\Dist[I] \sepand \Unif[q_0]$ like so
\[
  \Dist[I] \sepand \Unif[q_0] \mathrel{\land}
  (\Unif[q_0] \sepand \Dist[I] \mathrel{\land} q_1 \sim q_0 \mathrel{\oplus} I)
\]
and apply the xor axiom (\ref{ax:xor}) to the right conjunct, giving
\[
  \Dist[I] \sepand \Unif[q_0] \mathrel{\land} \Dist[I] \sepand \Unif[q_1],
\]
which implies the desired post-condition
\[
  \Psi \triangleq \Dist[I] \sepand \Unif[q_0] \mathrel{\land} \Dist[I] \sepand \Unif[q_1].
\]
Since $q_0$ and $q_1$ are unmodified in the remainder of the program, we can
preserve $\Psi$ through to the end using \rname{Const} and \rname{True}.

\subsection{Oblivious Transfer}

\subsubsection{Proof of Uniformity}

Starting from the trivial pre-condition $\Phi_1 \triangleq \top$, we would like
to prove the post-condition
\[
  \Psi \triangleq \underbrace{(\Unif_{k \times
  k}[(r_0,r_1)] \sepand \Unif[e])}_{\mathclap{\text{$\mathcal{S}$'s
  view}}} \mathrel{\land} \underbrace{(\Unif[d] \sepand \Unif_{k \times
  k}[(r_d,f_{1-c})])}_{\mathclap{\text{$\mathcal{R}$'s ``view''}}}.
\]
Starting from the trivial pre-condition $\Phi_1 \triangleq \top$, we would like
to prove the post-condition
\[
  \Psi \triangleq \underbrace{(\Unif_{k \times
  k}[(r_0,r_1)] \sepand \Unif[e])}_{\mathclap{\text{$\mathcal{S}$'s
  view}}} \mathrel{\land} \underbrace{(\Unif[d] \sepand \Unif_{k \times
  k}[(r_d,f_{1-c})])}_{\mathclap{\text{$\mathcal{R}$'s ``view''}}}.
\]
To establish $\mathcal{R}'s$ secrecy, we need to consider the view of
$\mathcal{S}$, which consists of $r_0$, $r_1$, and $e$. For $\mathcal{R}$'s
choice $c$ to be kept secret, it is required that $\Unif_{k \times
k}[(r_0,r_1)] \sepand \Unif[e]$, i.e., $\mathcal{S}$'s combined view is
uniform. Note that it is not enough to establish that the individual components
of $\mathcal{S}$'s view are uniform. To see why, suppose $\mathcal{T}$ also
sends $\mathcal{S}$ the random bit $d$, which reveals $c = e \oplus d$. Although
the individual components of $\mathcal{S}$'s view would indeed be uniform, i.e.,
$\Unif_k[r_0] \mathrel{\land} \Unif_k[r_1] \mathrel{\land} \Unif[e] \mathrel{\land}
\Unif[d]$, $\mathcal{R}$'s secrecy is clearly violated. Thus, the stronger
post-condition is needed to establish that $\mathcal{S}$'s combined view is
uniform.

To establish $\mathcal{S}'s$ (one-sided) secrecy, we need to consider the view
of $\mathcal{R}$, which consists of $d$, $r_d$, and one of $f_0$ or $f_1$. In
particular, the $f_i$ to be considered corresponds to the encryption of the
``wrong'' message, which we assign to the ghost variable $f_{1-c}$ (which is, in
turn, computed using the ghost variable $r_{1-d}$). Similar to $\mathcal{R}$'s
secrecy, it is then required that $\Unif[d] \sepand \Unif_{k \times
k}[(r_d,f_{1-c})]$, i.e., $\mathcal{R}$'s combined view is uniform.

We first show $\mathcal{R}$'s secrecy, followed by $\mathcal{S}$'s secrecy, and
then combine the results using \rname{Conj}.  By \rname{RSamp}
and \rname{RSamp*}, we can adjoin the random samplings for
$r_0,r_1,d$ (lines 1--2), giving
\[
  \Unif_k[r_0] \sepand \Unif_k[r_1] \sepand \Unif[d].
\]
Since the free variables of this formula are unmodified in the conditional (line
3), we can preserve the formula using \rname{Const} and \rname{True}.
Since $c$ is a deterministic variable we can adjoin $\Dist[c]$, giving
\[
  \Dist[c] \sepand \Unif_k[r_0] \sepand \Unif_k[r_1] \sepand \Unif[d].
\]
For the assignment to $e$ (line 4), we start from the local pre-condition
\[
  \Dist[c] \sepand \Unif[d].
\]
By \rname{RAssn*}, assigning to $e$ gives
\[
  (\Dist[c] \sepand \Unif[d]) \mathrel{\land} e \sim c \mathrel{\oplus} d.
\]
Applying the xor axiom (\ref{ax:xor}) leaves
\[
  \Unif[e].
\]
Then, we can frame as follows:
\begin{mathpar}
    \hspace*{0.9cm}\inferrule*[Left=Frame]
    { \vdash \psl{\phi}{c'}{\psi} \quad\,\,\,
      \FV(\eta) \cap \MV(c') = \emptyset \quad\,\,\,
      \FV(\psi) \subseteq \FV(\phi) \cup \WV(c') \quad\,\,\,
    \models \phi \to \Dist[\RV(c')]}
    { \vdash \psl{\underbrace{\Dist[c] \sepand \Unif[d]}_\phi \sepand \underbrace{\Unif_k[r_0] \sepand \Unif_k[r_1]}_\eta}{\underbrace{\Assn{e}{c \oplus
    d}}_{c'}}{\underbrace{\Unif[e]}_\psi \sepand \underbrace{\Unif_k[r_0] \sepand \Unif_k[r_1]}_\eta}
    }.
\end{mathpar}
The post-condition implies
\[
  \Unif_{k \times k}[(r_0,r_1)] \sepand \Unif[e],
\]
which establishes $\mathcal{R}$'s secrecy. Since $r_0$, $r_1$, and $e$ are
unmodified in the remainder of the program, we can preserve $\Unif_{k \times
k}[(r_0,r_1)] \sepand \Unif[e]$ through to the end using \rname{Const}.

Next, we show $\mathcal{S}$'s secrecy. Again, by \rname{RSamp}
and \rname{RSamp*}, we can adjoin the random samplings for $r_0,r_1,d$ (lines
1--2), giving
\[
  \Unif_k[r_0] \sepand \Unif_k[r_1] \sepand \Unif[d].
\]
We go through the conditional (line 3) with \rname{RCond}, which gives
pre-condition
\[
  \Unif_k[r_0] \sepand \Unif_k[r_1] \sepand d = 0 \sim \ktt.
\]
To go through the first assignment, we start from the local pre-condition
\[
  \Unif_k[r_0].
\]
By \rname{RAssn*}, assigning to $r_d$ gives
\[
  \Unif_k[r_0] \mathrel{\land} r_d \sim r_0.
\]
Transferring the distribution law gives
\[
  \Unif_k[r_d].
\]
We can then frame in $\Unif_k[r_1] \sepand d = 0 \sim \ktt$ giving
\[
  \Unif_k[r_d] \sepand \Unif_k[r_1] \sepand d = 0 \sim \ktt.
\]
Going through the second assignment follows similarly, this time starting from
the local pre-condition
\[
  \Unif_k[r_1]
\]
and giving the following post-condition in the true branch
\[
  \Unif_k[r_d] \sepand \Unif_k[r_{1-d}] \sepand d = 0 \sim \ktt.
\]
The false branch yields the same post-condition, which, by \rname{RCond},
brings us to the post-condition
\[
  \Unif_k[r_d] \sepand \Unif_k[r_{1-d}] \sepand \Unif[d].
\]
Since the free variables of this formula are unmodified in lines 4--5, we can
preserve this formula through using \rname{Const} and \rname{True}. Next, we go
through the deterministic conditional (line 6) using \rname{DCond}. In the true
branch, we start with pre-condition
\[
  \Unif_k[r_d] \sepand \Unif_k[r_{1-d}] \sepand \Unif[d] \mathrel{\land} (c=0)
  = \ktt.
\]
Dropping the right conjunct, we can adjoin $\Dist[m_1]$ like so
\[
  \Unif_k[r_d] \sepand \Dist[m_1] \sepand \Unif_k[r_{1-d}] \sepand \Unif[d],
\]
since $m_1$ is a deterministic variable.  We preserve this formula through the
first assignment to $m_c$ using \rname{Const} and \rname{True}, and then go
through the second assignment to $f_{1-c}$ starting from the local pre-condition
\[
  \Dist[m_1] \sepand \Unif_k[r_{1-d}].
\]
Applying \rname{RAssn*} and the xor axiom (\ref{ax:xor}) gives
\[
  \Unif_k[f_{1-c}].
\]
Framing in $\Unif_k[r_d] \sepand \Unif[d]$ gives the following post-condition in
the true branch
\[
  \Unif_k[r_d] \sepand \Unif_k[f_{1-c}] \sepand \Unif[d].
\]
The false branch yields the same post-condition. Then, we can merge
$\Unif_k[r_d] \sepand \Unif_k[f_{1-c}]$ and rearrange like so
\[
    \Unif[d] \sepand \Unif_{k \times k}[(r_d,f_{1-c})].
\]
This establishes $\mathcal{S}$'s secrecy. Combining the formulas for
$\mathcal{R}$'s secrecy and $\mathcal{S}$'s secrecy using \rname{Conj} gives the
desired post-condition.

\subsection{Multi-Party Computation}

\subsubsection{Proof of Uniformity}

Starting from the trivial pre-condition $\Phi_1 \triangleq \top$, we would like
to prove the post-condition
\[
  \Psi \triangleq \bigwedge_{\alpha \in \{2,3\}} \underbrace{\Unif[(r[\alpha].2,
  r[\alpha].3)]}_{\mathclap{\text{$P_1$'s view from $P_\alpha$}}} \mathrel{\land}
  \bigwedge_{\alpha \in \{1,3\}} \underbrace{\Unif[(r[\alpha].1,
  r[\alpha].3)]}_{\mathclap{\text{$P_2$'s view from
  $P_\alpha$}}}  \mathrel{\land}
  \bigwedge_{\alpha \in \{1,2\}} \underbrace{\Unif[(r[\alpha].1, r[\alpha].2)]}_{\mathclap{\text{$P_3$'s view from $P_\alpha$}}}.
\]
This says that each party's view from the other parties is uniform and independent.

To prove this post-condition, we take the following for-loop invariant:
\[
  \bigsep_{\alpha \in [1,
    i)} \Unif[(r[\alpha].1,r[\alpha].2)] \mathrel{\land} \Unif[(r[\alpha].2,r[\alpha].3)] \mathrel{\land} \Unif[(r[\alpha].1,r[\alpha].3)].
\]
By \rname{RSamp*}, adjoining the random samplings for $r[i].1$ and $r[i].2$
(lines 2 and 3) gives
\[
  \Unif[r[i].1] \sepand \Unif[r[i].2] \sepand \bigsep_{\alpha \in [1,
    i)} \Unif[(r[\alpha].1,r[\alpha].2)] \mathrel{\land} \Unif[(r[\alpha].2,r[\alpha].3)] \mathrel{\land} \Unif[(r[\alpha].1,r[\alpha].3)].
\]
To go through the assignment to $r[i].3$ (line 4), we start from the local pre-condition
\[
  \Unif[r[i].1] \sepand \Unif[r[i].2].
\]
By \rname{RAssn*}, assigning to $r[i].3$ gives
\[
  \Unif[r[i].1] \sepand \Unif[r[i].2] \mathrel{\land} r[i].3 \sim x[i] - r[i].1
  - r[i].2 \mod p.
\]
Applying the modular addition axiom (\ref{ax:mod}) gives
\[
  \Unif[r[i].1] \sepand \Unif[r[i].2] \mathrel{\land} \Unif[r[i].2] \mathrel{\sepand} \Unif[r[i].3] \mathrel{\land} \Unif[r[i].1] \mathrel{\sepand} \Unif[r[i].3].
\]
Then, we can merge pairwise independent distributions like so
\[
  \Unif[(r[i].1,r[i].2)] \mathrel{\land} \Unif[(r[i].2,r[i].3)] \mathrel{\land}
  \Unif[(r[i].1,r[i].3)].
\]
Then, we can frame as follows
\begin{mathpar}
    \hspace*{1cm}\inferrule*[Left=Frame]
    { \vdash \psl{\phi}{c}{\psi} \\
      \FV(\eta) \cap \MV(c) = \emptyset \\
      \FV(\psi) \subseteq \FV(\phi) \cup \WV(c) \\
    \models \phi \to \Dist[\RV(c)]}
    { \vdash \Bigr\{ \underbrace{\Unif[r[i].1] \sepand \Unif[r[i].2]}_{\phi} \sepand \eta \Bigl\}\\\\
    c \triangleq \Assn{r[i].3}{x[i] - r[i].1 - r[i].2 \mod p} \\\\
    \Bigl\{ \underbrace{\Unif[(r[i].1,r[i].2)] \mathrel{\land} \Unif[(r[i].2,r[i].3)] \mathrel{\land}
  \Unif[(r[i].1,r[i].3)]}_{\psi} \sepand \eta \Bigr\}
    },
\end{mathpar}
where
\[
  \eta \triangleq \bigsep_{\alpha \in [1,
    i)} \Unif[(r[\alpha].1,r[\alpha].2)] \mathrel{\land} \Unif[(r[\alpha].2,r[\alpha].3)] \mathrel{\land} \Unif[(r[\alpha].1,r[\alpha].3)].
\]
We can reassociate the post-condition like so
\[
  \bigsep_{\alpha \in [1,
    i+1)} \Unif[(r[\alpha].1,r[\alpha].2)] \mathrel{\land} \Unif[(r[\alpha].2,r[\alpha].3)] \mathrel{\land} \Unif[(r[\alpha].1,r[\alpha].3)].
\]
The for-loop post-condition from \rname{DFor} implies
\[
  \bigsep_{\alpha \in
  [1,3]} \Unif[(r[\alpha].1,r[\alpha].2)] \mathrel{\land} \Unif[(r[\alpha].2,r[\alpha].3)] \mathrel{\land} \Unif[(r[\alpha].1,r[\alpha].3)].
\]
After rearranging and dropping terms, this implies the desired post-condition
\[
  \Psi \triangleq \bigwedge_{\alpha \in \{2,3\}} \Unif[(r[\alpha].2,
  r[\alpha].3)] \mathrel{\land} \bigwedge_{\alpha \in \{1,3\}} \Unif[(r[\alpha].1,
  r[\alpha].3)] \mathrel{\land} \bigwedge_{\alpha \in \{1,2\}} \Unif[(r[\alpha].1,
  r[\alpha].2)].
\]
Since the free variables of $\Psi$ are unmodified in the remainder of the
program, we can preserve $\Psi$ through to the end using \rname{Const}
and \rname{True}.

\subsubsection{Proof of Input Independence}

Starting from the pre-condition
\[
  \Phi_1 \triangleq \bigwedge_{\alpha \in [1,3]} \Dist[x[\alpha]],
\]
we would like to prove the post-condition
\begin{align*}
  \Psi \triangleq \bigwedge_{\alpha \in \{2,3\}} \underbrace{\Dist[x[\alpha]]}_{\mathclap{\text{$P_\alpha$'s
  input}}} \sepand \underbrace{\Dist[(r[\alpha].2,
  r[\alpha].3)]}_{\mathclap{\text{$P_1$'s view from
  $P_\alpha$}}} \mathrel{\land} &\bigwedge_{\alpha \in \{1,3\}} \underbrace{\Dist[x[\alpha]]}_{\mathclap{\text{$P_\alpha$'s
  input}}} \underbrace{\sepand \Dist[(r[\alpha].1,
  r[\alpha].3)]}_{\mathclap{\text{$P_2$'s view from
  $P_\alpha$}}} \mathrel{\land}\\ &\bigwedge_{\alpha \in \{1,2\}} \underbrace{\Dist[x[\alpha]]}_{\mathclap{\text{$P_\alpha$'s
  input}}} \sepand \underbrace{\Dist[(r[\alpha].1,
  r[\alpha].2)]}_{\mathclap{\text{$P_3$'s view from $P_\alpha$}}}.
\end{align*}
This says that, for each party, the secret input of each other party is
independent from the view they generate. Throughout, let
\[
  \Phi_2 \triangleq \Dist[x[\alpha]]
  \sepand \Unif[(r[\alpha].1,r[\alpha].2)]
  \mathrel{\land} \Dist[x[\alpha]]
  \sepand \Unif[(r[\alpha].2,r[\alpha].3)]
  \mathrel{\land} \Dist[x[\alpha]]
  \sepand \Unif[(r[\alpha].1,r[\alpha].3)].
\]
First, we take the following for-loop invariant
\[
  \bigwedge_{\alpha \in
  [1,3]} \Dist[x[\alpha]] \mathrel{\land} \bigwedge_{\alpha \in [1,i)} \Phi_2.
\]
To go through the assignments in lines 2--4, we start from the local pre-condition
\[
  \Dist[x[i]].
\]
By \rname{RSamp*}, adjoining the random samplings for $r[i].1$ and $r[i].2$ gives
\[
  \Dist[x[i]] \sepand \Unif[r[i].1] \sepand \Unif[r[i].2].
\]
By rule \rname{RAssn*}, assigning to $r[i].3$ gives
\[
  (\Dist[x[i]] \sepand \Unif[r[i].1] \sepand \Unif[r[i].2]) \mathrel{\land}
  r[i].3 \sim x[i] - r[i].1 - r[i].2 \mod p.
\]
Applying the modular arithmetic axiom (\ref{ax:mod}) gives
\[
  \Dist[x[i]] \sepand \Unif[r[i].1] \sepand \Unif[r[i].2] \mathrel{\land} \Dist[x[i]] \sepand \Unif[r[i].2] \mathrel{\sepand} \Unif[r[i].3] \mathrel{\land} \Dist[x[i]] \sepand \Unif[r[i].1] \mathrel{\sepand} \Unif[r[i].3].
\]
We can then merge the pairwise independent distributions of secret shares like so
\[
  \Phi_3 \triangleq \Dist[x[i]] \sepand \Unif[(r[i].1,r[i].2)] \mathrel{\land} \Dist[x[i]] \sepand \Unif[(r[i].2,r[i].3)] \mathrel{\land} \Dist[x[i]] \sepand \Unif[(r[i].1,r[i].3)].
\]
Then, we can carry in unused conjuncts as follows
\begin{mathpar}
    \inferrule*[Left=Const]
    { \vdash \psl{\phi}{c}{\psi} \\ \FV(\eta) \cap \MV(c) = \emptyset }
    { \vdash \Bigl\{\underbrace{\Dist[x[i]]}_{\phi} \mathrel{\land} \underbrace{\bigwedge_{\alpha \in
    [1,3] \setminus i} \Dist[x[i]] \mathrel{\land} \bigwedge_{\alpha \in
    [1,i)} \Phi_2}_\eta \Bigr\}\\\\
    c \triangleq \Seq{\Rand{r[i].1}{\mathbb{Z}_p}}{\Seq{\Rand{r[i].2}{\mathbb{Z}_p}}{\Assn{r[i].3}{x[i]
    - r[i].1 - r[i].2 \mod p}}}\\\\
    \Bigl\{
    \underbrace{\mathclap{\Phi_3}}_{\psi}
    \mathrel{\land} \underbrace{\bigwedge_{\alpha \in
    [1,3] \setminus i} \Dist[x[i]] \mathrel{\land} \bigwedge_{\alpha \in
    [1,i)} \Phi_2}_\eta \Bigr\}}.
\end{mathpar}
Reassociating the post-condition gives
\[
\bigwedge_{\alpha \in
    [1,3]} \Dist[x[i]] \mathrel{\land} \bigwedge_{\alpha \in [1,i+1)} \Phi_2.
\]
Thus, the for-loop invariant is preserved. By \rname{DFor}, the post-condition of
the conclusion is
\[
\bigwedge_{\alpha \in
    [1,3]} \Dist[x[i]] \mathrel{\land} \bigwedge_{\alpha \in
  [1,4)} \Phi_2,
\]
which implies the desired post-condition $\Psi$
\begin{align*}
  \bigwedge_{\alpha \in \{2,3\}} \Dist[x[\alpha]] \sepand \Dist[(r[\alpha].2,
  r[\alpha].3)] \mathrel{\land} &\bigwedge_{\alpha \in \{1,3\}} \Dist[x[\alpha]] \sepand \Dist[(r[\alpha].1,
  r[\alpha].3)] \mathrel{\land}\\ &\bigwedge_{\alpha \in \{1,2\}} \Dist[x[\alpha]] \sepand \Dist[(r[\alpha].1,
  r[\alpha].2)].
\end{align*}
Because the free variables of $\Psi$ are unmodified in the rest of the program,
we can frame $\Psi$ through to the end using \rname{Const} and \rname{True}.

\subsection{Simple Oblivious RAM}

Starting from the pre-condition
\[
  \Phi_1 \triangleq \bigsep_{\alpha \in \mathcal{X}} \Unif[p[\alpha]],
\]
which says that the position map is initialized uniformly and independently, we
would like to prove the post-condition
\[
  \Psi \triangleq \bigsep_{\beta \in [1, k]} \Unif[(\ell[\beta].1, \ell[\beta].2)],
\]
which says that the product distribution for each instruction-flush leakage pair
is uniform and independent.

Starting from the local pre-condition
\[
  \Unif[p[x]],
\]
the assignment to $\ell[j].1$ (using \rname{RAssn*}) gives
\[
  \Unif[p[x]] \mathrel{\land} \ell[1].1 \sim p[x].
\]
Transferring the distribution law gives
\[
  \Unif[\ell[1].1].
\]
Then, we can frame as follows
\begin{mathpar}
    \hspace*{1cm}\inferrule*[Left=Frame]
    { \vdash \psl{\phi}{c}{\psi} \\
      \FV(\eta) \cap \MV(c) = \emptyset \\
      \FV(\psi) \subseteq \FV(\phi) \cup \WV(c) \\
    \models \phi \to \Dist[\RV(c)]}
    { \vdash \Bigl\{ \underbrace{\Unif[p[x]]}_{\phi} \sepand \underbrace{\bigsep_{\alpha \in \mathcal{X} \setminus
    x} \Unif[p[\alpha]]}_{\eta}\Bigr\}
  ~\underbrace{\Assn{\ell[1].1}{p[x]}}_c
  \Bigl\{ \underbrace{\Unif[\ell[i].1]}_{\psi} \sepand \underbrace{\bigsep_{\alpha \in \mathcal{X} \setminus
  x} \Unif[p[\alpha]]}_{\eta} \Bigr\}}
\end{mathpar}
We can preserve this post-condition through lines 4--8 up until the random
sampling for $p[x]$ using
\rname{Const} and \rname{True}. By \rname{RSamp*}, the random sampling for $p[x]$ gives
\[
  \Unif[\ell[1].1] \sepand \bigsep_{\alpha \in \mathcal{X}} \Unif[p[\alpha]].
\]
Again, we can frame this formula through the last assignment in
the \textsf{read} instruction using \rname{Const} and \rname{True}. For
the \textsf{flush} instruction, we first adjoin the random sampling for $l$
\[
  \Unif[\ell[1].1] \sepand \Unif[l] \sepand \bigsep_{\alpha \in \mathcal{X}} \Unif[p[\alpha]].
\]
Assigning to $\ell[1].2$ gives
\[
  \Unif[\ell[1].1] \sepand \Unif[\ell[1].2] \sepand \bigsep_{\alpha \in \mathcal{X}} \Unif[p[\alpha]].
\]
Then, we can merge $\Unif[\ell[1].1] \sepand \Unif[\ell[1].2]$ like so
\[
  \Unif[(\ell[1].1,\ell[1].2)] \sepand \bigsep_{\alpha \in \mathcal{X}} \Unif[p[\alpha]].
\]
Then, we preserve through rest of the \textsf{flush} instruction
using \rname{Const} and \rname{True}. We proceed through the remaining $k-1$
compiled instructions similarly, giving the desired post-condition
\[
  \Psi \triangleq \bigsep_{\beta \in [1, k]} \Unif[(\ell[\beta].1, \ell[\beta].2)].
\]

\fi
\end{document}